\renewenvironment{leftbar}[1][\hsize]
{%
    \MakeFramed{\hsize#1\advance\hsize-\width\FrameRestore}%
}
{\endMakeFramed}
\declaretheorem[numberwithin=section]{theorem}
\declaretheorem[sibling=theorem]{lemma}
\declaretheorem[sibling=theorem]{claim}
\declaretheorem[sibling=theorem]{corollary}
\newtheorem{definition}[theorem]{Definition}
\newtheorem{observation}[theorem]{Observation}
\crefname{claim}{claim}{claims}
\renewenvironment{leftbar}[1][\hsize]
{%
    \MakeFramed{\hsize#1\advance\hsize-\width\FrameRestore}%
}
{\endMakeFramed}
\newenvironment{subproof}[1][\proofname]{%
  \begin{proof}[#1]%
}{%
  \end{proof}%
}
\newcommand{\agnote}[1]{\todo[color=blue!25!white]{AG: #1}\xspace}
\newcommand{\elnote}[1]{\todo[color=green!25!white]{EL: #1}\xspace}
\algrenewcommand\algorithmiccomment[2][\normalsize]{{#1\hfill\(\triangleright\) \emph{#2}}}
\newcommand{\ignore}[1]{}
\newcommand{\f}{\displaystyle\frac}
\newcommand{\cd}{\cdot}
\newcommand{\bn}{\binom}
\newcommand{\lds}{\ldots}
\newcommand{\s}{\subseteq}
\newcommand{\inv}{^{-1}}
\newcommand{\al}{\alpha}
\newcommand{\Om}{\Omega}
\newcommand{\el}{\ell}
\newcommand{\m}{\mathcal}
\newcommand{\lf}{\lfloor}
\newcommand{\rf}{\rfloor}
\newcommand{\lc}{\lceil}
\newcommand{\rc}{\rceil}
\newcommand{\1}{\mathbbm 1}
\newcommand{\lp}{\left(}
\newcommand{\rp}{\right)}
\newcommand{\lb}{\left[}
\newcommand{\rb}{\right]}
\newcommand{\eps}{\varepsilon}
\newcommand{\e}{\varepsilon}
\newcommand{\om}{\omega}
\newcommand{\poly}{\mathrm{poly}}
\newcommand{\calP}{{\cal P}}
\newcommand{\calS}{{\cal S}}
\newcommand{\initOneLiners}{%
    \setlength{\itemsep}{0pt}
    \setlength{\parsep }{0pt}
    \setlength{\topsep }{0pt}
}
\newenvironment{OneLiners}[1][\ensuremath{\bullet}]
    {\begin{list}
        {#1}
        {\initOneLiners}}
    {\end{list}}
\newcommand{\Opt}{\ensuremath{\mathsf{Opt}}\xspace}
\newcommand{\sse}{\subseteq}
\newcommand{\kcut}{$k$\textsc{-Cut}\xspace}
\newcommand{\Laminarkcut}[1]{\textsc{Laminar }$#1$\textsc{-cut}} 
\newcommand{\defn}[1]{\textbf{#1}}
\newcommand{\tO}{\widetilde{O}}
\newcommand{\kclique}{$k$\textsc{-Clique}\xspace}
\newcommand{\cut}{{\small\textsf{Cut}}}
\newcommand{\Ttree}{\text{T-tree}\xspace}
\newcommand{\kl}{\mathfrak{a}}
\newcommand{\kr}{\mathfrak{b}}
\newcommand{\pvc}{\textsc{Partial VC}\xspace}
\newcommand{\desc}{\ensuremath{\mathrm{desc}}}
\newcommand{\apxfactor}{1.81}
\algnewcommand{\IIf}[1]{\State\algorithmicif\ #1\ \algorithmicthen}
\algnewcommand{\EndIIf}{\unskip\ \algorithmicend\ \algorithmicif}
\begin{document}

\title{{\bf Faster Exact and Approximate Algorithms for $k$-Cut}}

\author{ Anupam Gupta\thanks{{\tt anupamg@cs.cmu.edu}. Supported in part by NSF awards
    CCF-1536002, CCF-1540541, and CCF-1617790. } \\ CMU \and Euiwoong Lee\thanks{{\tt euiwoong@cims.nyu.edu}. Part of this work was done as a research fellow at the Simons Institute. }\\ NYU
  \and Jason Li\thanks{{\tt jmli@cs.cmu.edu}. Supported in part by NSF awards
    CCF-1536002, CCF-1540541, and CCF-1617790. }\\ CMU}

\date{}

\thispagestyle{empty}
\maketitle
\begin{abstract}
  In the \kcut problem, we are given an edge-weighted graph $G$ and an
  integer $k$, and have to remove a set of edges with minimum total
  weight so that $G$ has at least $k$ connected components. The current best
  algorithms are an $O(n^{(2-o(1))k})$ randomized algorithm due
  to Karger and Stein, and an $\smash{\tilde{O}}(n^{2k})$
  deterministic algorithm due to Thorup. Moreover,
  several $2$-approximation algorithms are known for the problem (due to
  Saran and Vazirani, Naor and Rabani, and Ravi and Sinha). 

  It has remained an open problem to (a) improve the runtime of exact
  algorithms, and (b) to get better approximation algorithms. In this
  paper we show an $O(k^{O(k)} \, n^{(2\omega/3 + o(1))k})$-time
  algorithm for \kcut. Moreover, we show an $(1+\e)$-approximation
  algorithm that runs in time $O((k/\e)^{O(k)} \,n^{k + O(1)})$, and a
  $\apxfactor$-approximation in fixed-parameter time $O(2^{O(k^2)}\,\poly(n))$.

\end{abstract}

\newpage

\setcounter{page}{1}

\section{Introduction}
\label{sec:introduction}

In this paper we consider the \kcut problem: given an edge-weighted
graph $G = (V,E,w)$ and an integer $k$, delete a minimum-weight set of
edges so that $G$ has at least $k$ connected components. This problem is
a natural generalization of the global min-cut problem, where the goal
is to break the graph into $k=2$ pieces. 
This problem has been actively studied in theory of both exact and
approximation algorithms, where each result brought new insights and tools
on graph cuts. 

It is not \emph{a priori} clear how to obtain poly-time algorithms for
any constant $k$, since guessing one vertex from each part only
reduces the problem to the NP-hard \textsc{Multiway Cut}
problem. Indeed, the first result along these lines was the work of
Goldschmidt and Hochbaum~\cite{GH94} who gave an
$O(n^{(1/2 - o(1))k^2})$-time exact algorithm for \kcut. 
Since then, the exact exponent in terms of $k$ 
has been actively studied. 
The current
best runtime is achieved by an $\tO(n^{2(k-1)})$ randomized algorithm
due to Karger and Stein~\cite{KS96}, which performs random edge
contractions until the remaining graph has $k$ nodes, and shows that the
resulting cut is optimal with probability at least
$\Omega(n^{-2(k-1)})$.  The asymptotic runtime of
$\smash{\tilde{O}}(n^{2(k-1)})$ was later matched by a deterministic
algorithm of Thorup~\cite{Thorup08}. His algorithm was based on
tree-packing theorems; it showed how to efficiently find a tree for
which the optimal $k$-cut crosses it $2k-2$ times. Enumerating over all
possible $2k-2$ edges of this tree gives the algorithm. 

These elegant $O(n^{2k})$-time algorithms are the state-of-the-art, and
it has remained an open question to improve on them. An easy observation
is that the problem is closely related to \kclique, so we may not expect
the exponent of $n$ to go below $(\om/3)k$. Given the interest in
fine-grained analysis of algorithms, where in the range $[(\om/3)k, 2k-2]$
does the correct answer lie?

Our main results give faster deterministic and randomized algorithms for
the problem. 

\begin{restatable}[Faster Randomized Algorithm]{theorem}{FasterKCut}
  \label{thm:FasterKCut}
  Let $W$ be a positive integer. There is a randomized algorithm for
  \kcut on graphs with edge weights in $[W]$ with runtime
  \[ \tO(k^{O(k)}n^{k+\lf(k-2)/3\rf\om +1+ ((k-2)\bmod3)}W) \approx
    O(k^{O(k)} n^{(1+ \om/3)k}),\] that succeeds with probability
  $1 - 1/\poly(n)$.
\end{restatable}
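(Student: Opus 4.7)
My plan is to combine Thorup's tree-packing structural lemma with fast matrix multiplication. Thorup's algorithm produces (in deterministic polynomial time) a spanning tree $T$ in which the optimum $k$-cut crosses at most $2(k-1)$ edges, then enumerates all such subsets of tree edges for a total of $\tO(n^{2(k-1)})$. To beat $n^{2k}$ I need to enumerate fewer objects up front and then let fast matrix multiplication handle the residual problem.

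The plan has two stages. \emph{Stage 1 (outer enumeration).} Guess a ``skeleton'' of the optimal $k$-cut consisting of $k$ representative vertices (one per component) together with a combinatorial ``merge pattern'' describing how the at most $2k-1$ tree-pieces created by the $\leq 2(k-1)$ crossings are eventually regrouped into the $k$ final components. The skeleton itself costs $O(n^k)$; the number of labelled surjections from at most $2k-1$ pieces onto $k$ labels is $k^{O(k)}$, which absorbs the leading factor in the theorem. \emph{Stage 2 (inner optimisation via min-weight clique).} With the skeleton fixed, each final part $P_i$ retains a single ``free'' degree of freedom---intuitively, the root of the one remaining small tree-piece that gets reattached to $P_i$---with two of the $k$ parts already pinned down by the skeleton. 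After a $\poly(n)$-time preprocessing that settles every edge both of whose endpoints are fixed by the skeleton, the total cost of the cut becomes $\sum_{i<j} w(P_i, P_j)$, where each $w(P_i,P_j)$ depends only on the two free choices assigned to $P_i$ and $P_j$.

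That residual problem is exactly minimum-weight $(k-2)$-clique on a host graph of $n$ vertices with integer edge weights in $[\poly(n)\cdot W]$. Using the Nesetril--Poljak reduction combined with rectangular fast matrix multiplication, and the Alon--Galil--Margalit trick for integer weights, this runs in $\tO(W\cdot n^{1+\lf(k-2)/3\rf\om + ((k-2)\bmod 3)})$ time---precisely the exponent needed so that Stage 1 times Stage 2 matches $\tO(k^{O(k)}\,W\cdot n^{k+\lf(k-2)/3\rf\om+1+((k-2)\bmod 3)})$. The main obstacle I anticipate is verifying the pairwise decomposability required by Stage 2: after fixing the outer skeleton, each term $w(P_i,P_j)$ must depend on \emph{only} the two free variables for $P_i$ and $P_j$, not on all of them, which forces a careful identification of what each part's ``free variable'' is, an up-front computation of every fixed cross-contribution, and precise bookkeeping of edges straddling the fixed/free boundary. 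Equally delicate is justifying that Thorup's $2(k-1)$ tree crossings can be split into a $k$-vertex outer skeleton plus a $(k-2)$-edge pairwise correction without losing optimality; I expect this to need a random-sampling step (consistent with the theorem's randomised nature and its $1-1/\poly(n)$ success probability) together with the $k^{O(k)}$ merge-pattern enumeration to absorb any combinatorial slack introduced by the split.
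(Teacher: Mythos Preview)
Your proposal has genuine gaps at exactly the two places you yourself flag as obstacles. Guessing one vertex per component plus a merge pattern does \emph{not} reduce the unknown $\le 2k-2$ tree crossings to $k-2$ single-vertex free variables: a single optimal component can be the union of several tree pieces, each bounded above and below by unknown tree edges, and knowing one vertex inside it does not collapse its shape to a single scalar. The paper's randomisation is of a quite different and concrete kind (Lemma~\ref{lem:Merge}): repeatedly delete a uniformly random tree edge and reinsert a random graph edge across the resulting cut, sampled proportionally to weight. Each iteration succeeds with probability $\Omega(1/(nk^2))$, because with probability $\ge 1/(n-1)$ the deleted edge is ``deletion-worthy'' (its subtree contains exactly one split component), and then the reinserted edge lands inside a single part with probability $\Omega(1/k^2)$ since any $(k{+}1)$-cut has weight at least $(1+\Omega(1/k^2))\cdot w(OPT)$. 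After at most $k-1$ successful iterations the tree is \emph{tight} (exactly $k-1$ crossings); repeating $k^{O(k)}n^{k-1}\log n$ times yields the outer $n^{k-1}$ factor. Your $n^k$ ``skeleton'' has no analogous structural consequence.

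Second, even on a tight tree the cut value is \emph{not} a pairwise function of $k-1$ independent edge choices unless those edges are pairwise incomparable in the rooted tree; when two of them nest on a root--leaf path, the piece between them depends on both simultaneously. The paper handles this with a bottom-up dynamic programme on states $\st(e,s)$ (minimum cost of cutting $e$ together with $s-1$ further edges strictly below $e$). Within the computation of a single state, only the \emph{maximal} cut edges below $e$ need to be chosen, and those are incomparable by definition, so \emph{that} subproblem reduces to min-weight triangle on $O(n^{\lceil \ell/3\rceil})$ nodes with integer weights in $[O(knW)]$, for $\ell\le s-1\le k-2$; an initial brute-force over $(k-2)\bmod 3$ edges converts the ceiling to a floor. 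This is where your ``$(k-2)$-clique'' intuition becomes correct, but only after the tight-tree reduction and only inside this DP; the global pairwise decomposition you posit does not hold, and the vague ``random-sampling step'' you invoke to rescue it is doing no specified work.
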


\begin{restatable}[Even Faster Deterministic
  Algorithm]{theorem}{EvenFasterKCut}
  \label{thm:EvenFasterKCut}
  Let $W$ be a positive integer. For any $\e>0$, there is a
  deterministic algorithm for exact \kcut on graphs with edge weights in
  $[W]$ with runtime
  \[ k^{O(k)}n^{(2\om/3+\e)k+O(1)}W \approx
    O(k^{O(k)} n^{(2\om/3)k}).\]
\end{restatable}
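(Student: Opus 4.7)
The plan is to combine Thorup's tree-packing reduction for \kcut with a three-way decomposition of the tree and fast matrix multiplication.

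First, I would invoke the tree-packing theorem behind Thorup's deterministic algorithm~\cite{Thorup08}: in $\poly(n)$ time one computes a polynomial-size family of spanning trees such that at least one tree $T$ in the family is crossed by the optimal $k$-cut $F^*$ in at most $2(k-1)$ edges. Iterating over the family and fixing such a $T$, the task reduces to choosing a subset $S \subseteq E(T)$ of size at most $2(k-1)$ along with a partition of the $|S|+1 \le 2k-1$ resulting subtrees into $k$ groups, so as to minimize the weight of $G$-edges crossing the induced $k$-partition of $V$. Enumerating $S$ brute-force costs $\binom{n}{2k-2}$, matching Thorup's $\tO(n^{2(k-1)})$ bound; the goal is to beat this step.

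To do so, I would split $T$ into three balanced pieces. Root $T$ and use a centroid-style argument (combined with a $k^{O(k)}$ guess of how the unknown $S^*$ distributes across pieces) to pick $O(1)$ separator edges whose removal partitions $T$ into pieces $T_1, T_2, T_3$, each containing at most $(2k-2)/3 + O(1)$ edges of $S^*$. Enumerating $F_i := S \cap E(T_i)$ in each piece separately costs $n^{(2k-2)/3}$. On top of this I would guess a \emph{skeleton} $\sigma$: a $k^{O(k)}$-size piece of data describing how the $\le 2k-1$ components of $T \setminus S$ map to the $k$ global parts, together with the part-labels at the separator-edge endpoints. For each skeleton, the cost of a triple $(F_1, F_2, F_3)$ decomposes into three local terms (one per piece) plus three bilinear pairwise terms, the latter counting $G$-edges whose endpoints straddle two pieces and lie in different $k$-parts under $\sigma$.

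Given this bilinear decomposition, the pairwise terms populate matrices $A_{ij}$ of dimension $n^{(2k-2)/3} \times n^{(2k-2)/3}$, and finding the minimum-cost triple $(F_1, F_2, F_3)$ per fixed skeleton reduces to a $(\min,+)$-triangle problem on these three matrices. Using Alon--Galil--Margalit-style reductions from min-weight triangle to fast rectangular matrix multiplication over weights in $[\poly(n) \cdot W]$, this is solvable in $\tO(W \cdot n^{\omega(2k-2)/3 + \e})$ time per skeleton. Summing over the $k^{O(k)}$ skeletons, the polynomially many trees, and the polynomially many separator choices yields the claimed $k^{O(k)}\, n^{(2\omega/3+\e)k+O(1)}\, W$ runtime.

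The main obstacle is the skeleton design. Its $k^{O(k)}$-sized description must carry enough information that every cross-piece $G$-edge can be charged using only per-piece data $F_i$ together with $\sigma$ --- otherwise the cost fails to decompose bilinearly and the matrix-multiplication speed-up breaks. Equally crucial is guaranteeing a balanced three-way split: if any piece holds more than $(2k-2)/3+O(1)$ cut edges, the exponent of $n$ inflates, so we must pay the $k^{O(k)}$ factor up front to guess how $S^*$ distributes across the three pieces.
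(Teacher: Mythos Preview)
Your overall strategy---Thorup's tree packing, a three-way balanced split of the tree, then min-weight triangle via matrix multiplication on sides of size $n^{(2k-2)/3}$---matches the paper's approach. The skeleton idea is fine and essentially equivalent to what the paper does (the paper bakes the $[k]$-labeling of components into the auxiliary-graph nodes rather than guessing a global skeleton up front, but either packaging works). The genuine gap is in the balanced-split step.

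You claim a ``centroid-style argument'' yields $O(1)$ separator edges so that each piece contains at most $(2k-2)/3 + O(1)$ edges of the unknown $S^*$. This fails already when $T$ is a star: removing any single tree edge peels off a single leaf, so forming even one piece that contains $\approx 2k/3$ edges of $S^*$ forces you to remove $\Omega(k)$ separator edges, and enumerating those choices costs $n^{\Omega(k)}$, destroying the speedup. Centroid arguments balance vertex counts, not the distribution of an unknown edge set inside a high-degree tree; the $k^{O(k)}$ ``guess of how $S^*$ distributes'' cannot help because the obstruction is structural, not informational.

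The paper's fix is a preprocessing step you omit: replace each high-degree vertex by a binary tree on its children, introducing Steiner vertices, so the new tree has maximum degree $3$ and at most $2n$ vertices (Lemma~\ref{lem:DegreeReduce}). On a degree-$3$ tree the paper shows (Lemma~\ref{lem:BalancedSeparator}, Corollary~\ref{cor:BalancedTripartition}) that $O(\log k)$ separator edges always suffice to balance any edge set three ways---and $\Theta(\log k)$ is tight, so your $O(1)$ was wrong even after degree reduction. Enumerating all $n^{O(\log k)}$ separator choices is then absorbed into $n^{\e k + O(1)}$. The Steiner vertices create a secondary complication---after cutting, some components in a forest may contain only Steiner vertices and hence cannot by themselves witness one of the $k$ parts---which the paper handles by also enumerating, per forest $F_i$, the subset $S_i \subseteq [k]$ of labels realized by components containing at least one non-Steiner vertex.
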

In the above theorems, $\om$ is the matrix multiplication constant, and
$\tO$ hides polylogarithmic terms. 
While the deterministic algorithm from Theorem~\ref{thm:EvenFasterKCut}
is asymptotically faster, the randomized algorithm is better for small
values of $k$. Indeed, using the current best value of $\om<2.373$~\cite{le2014powers},
Theorem~\ref{thm:FasterKCut} gives a randomized algorithm for exact
\kcut on unweighted graphs 
which improves upon the previous best $\tO(n^{2k-2})$-time algorithm of
Karger and Stein for all $k \in [8,n^{o(1)}]$. For $k\le6$, faster
algorithms were given by Levine~\cite{Levine00}. 



\paragraph{Approximation algorithms.}
The \kcut problem has also received significant attention from the
approximation algorithms perspective. There are several
$2(1-1/k)$-approximation algorithms that run in time $\poly(n,
k)$~\cite{SV95, NR01, RS02}, which cannot be improved assuming the Small
Set Expansion Hypothesis~\cite{Manurangsi17}.  Recently, we gave an
$1.9997$-approximation algorithm that runs in $2^{O(k^6)}
n^{O(1)}$~\cite{GuptaLL18}.
In this current paper, we give a $(1+\e)$-approximation
algorithm for this problem much faster than the current best exact
algorithms; prior to our work, nothing better was known for
$(1+\e)$-approximation than for exact solutions.

\begin{restatable}[Approximation]{theorem}{ApproxKCut}
  \label{thm:kcut-approx}
  For any $\e>0$, there is a randomized (combinatorial) algorithm for
  \kcut with runtime $(k/\e)^{O(k)} n^{k+O(1)}$ time on general graphs,
  that outputs a $(1+\e)$-approximate solution with probability $1 -
  1/\poly(n)$.
\end{restatable}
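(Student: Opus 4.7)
The plan is to devise a Karger--Stein-style randomized contraction algorithm whose analysis exploits the $(1+\varepsilon)$-approximation slack to boost the survival-probability bound, thereby saving an $n^{k-2}$ factor over the $n^{2(k-1)}$ trials that exact recovery would need. Each trial contracts uniformly random weighted edges until $r = \poly(k/\varepsilon)$ supernodes remain, then brute-force enumerates all $(k/\varepsilon)^{O(k)}$ ways to group the supernodes into $k$ nonempty classes, computes each candidate's cut weight in the original graph, and returns the minimum; we repeat this an appropriate (polynomially many) number of times---up to $n^{k}\polylog(n)$---and output the best $k$-cut found. This gives the target runtime of $(k/\varepsilon)^{O(k)} n^{k+O(1)}$.

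The key analytical lemma is that, over the randomness of a single contraction, with probability at least $(k/\varepsilon)^{-O(k)} \, n^{-(k-1+O(1))}$, there exists a coarsening of the resulting $r$-supernode partition into $k$ groups whose total cut weight in the original graph is at most $(1+\varepsilon)\, \opt$. The plan for the proof: fix the optimal $k$-partition $(S_1,\ldots,S_k)$ and define the \emph{damage} of a contraction sequence to be the total weight of contracted edges that cross two different $S_i$'s. If the damage stays below $\varepsilon \, \opt$, then each supernode lies predominantly inside a single $S_i$, so assigning each supernode to its majority $S_i$ yields a $(1+\varepsilon)$-approximate $k$-cut that the brute-force enumeration on the contracted graph is guaranteed to check. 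The crux is that allowing up to $\varepsilon \, \opt$ damage---equivalently, letting up to $O(k)$ OPT-cut edges be contracted rather than zero---should boost the survival probability from the Karger--Stein bound of $n^{-2(k-1)}$ up to $n^{-(k-1+O(1))}$, since one now only needs the $k-1$ ``topological'' separations between OPT-parts to survive, rather than all $2(k-1)$ Thorup-style cut crossings.

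The main obstacle is proving the damage-bounded survival lemma with a sharp $n^{-(k-1+O(1))}$ exponent. Unlike the all-or-nothing Karger--Stein analysis, which multiplies marginal survival probabilities across contraction steps, tracking an \emph{accumulated damage budget} requires a more refined random-process argument---plausibly via a martingale tail bound, a direct counting of ``good'' contraction sequences, or a second-moment argument on the number of $(1+\varepsilon)$-approximate $k$-cuts that survive a partial contraction. We also need a matching ``reassembly'' argument that every damage-bounded contraction produces some $(1+\varepsilon)$-approximate $k$-coarsening; this should follow by a local charging argument that assigns each misplaced boundary vertex to a damaged edge.
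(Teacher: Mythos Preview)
Your proposal takes a completely different route from the paper, and the central lemma you need is not established---indeed, the intuition you offer for it conflates two unrelated quantities.

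The heart of your plan is the claim that a single random-contraction trial down to $r=\poly(k/\e)$ supernodes leaves behind a $(1+\e)$-approximate $k$-cut with probability at least $(k/\e)^{-O(k)}n^{-(k-1+O(1))}$. You explicitly flag this as ``the main obstacle'' and sketch possible attack lines (martingales, counting, second moment), but none is carried out. The justification you do give---that ``one now only needs the $k-1$ `topological' separations between OPT-parts to survive, rather than all $2(k-1)$ Thorup-style cut crossings''---mixes up two different $2(k-1)$'s. Thorup's $2(k-1)$ is the number of edges of a \emph{spanning tree} that cross the optimal $k$-cut; it has nothing to do with Karger--Stein. The Karger--Stein exponent $2(k-1)$ arises from the per-step inequality $\Pr[\text{contract an OPT edge} \mid t \text{ supernodes}] \le 2(k-1)/t$, which in turn comes from a lower bound on total edge weight in terms of the current min-$k$-cut. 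Allowing an $\e\cdot\opt$ damage budget does not obviously relax this per-step ratio, and there is no known mechanism by which the product $\prod_t (1-2(k-1)/t)$ improves from $\approx n^{-2(k-1)}$ to $\approx n^{-(k-1)}$ under an approximation slack. Karger's bound on the number of $\alpha$-approximate cuts ($O(n^{2\alpha(k-1)})$ for $k$-cut) suggests that the slack buys at most an $n^{O(\e k)}$ factor, not an $n^{k-1}$ factor.

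Your reassembly step is also incomplete. ``Damage $\le \e\cdot\opt$ implies a majority-assignment of supernodes is a $(1+\e)$-approximate $k$-cut'' does not follow from a local charging argument: contracting a single cross-edge $(u,v)$ of weight $w$ can force one endpoint into the wrong part, and the cut-weight change is governed by that endpoint's entire boundary, which can be much larger than $w$. You would need something stronger than bounded damage to control this.

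For comparison, the paper's proof does not use random contraction at all. It first uses Thorup's tree packing and a randomized edge-swapping procedure (Lemma~\ref{lem:Merge}) to produce, in $k^{O(k)}n^{k-1}m\log n$ total work, a collection of trees one of which is a \emph{tight} \Ttree (crosses the optimal cut exactly $k-1$ times). This is where the $n^{k}$ in the final runtime comes from. Then, given a tight \Ttree, it runs an FPT-time dynamic program (Lemma~\ref{lem:fpt-approx-tight}) that compresses the exponential state space via an $\e$-net of ``important nodes'' and an approximate partial-vertex-cover subroutine, achieving a $(1+\e)$-approximation in $(k/\e)^{O(k)}\poly(n)$ time per tree.
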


The techniques from the above theorem, combined with the previous 
ideas in~\cite{GuptaLL18}, immediately give an improved FPT 
approximation guarantees for the
\kcut problem:

\begin{theorem}[FPT Approximation]
  \label{thm:fpt-apx}
  There is a deterministic $\apxfactor$-approximation algorithm for the
  \kcut problem that runs in time $2^{O(k^2)} \cdot
  n^{O(1)}$. 
\end{theorem}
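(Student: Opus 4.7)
\textbf{Proof plan for Theorem~\ref{thm:fpt-apx}.}
The plan is to revisit the FPT framework of~\cite{GuptaLL18}, which already obtains a $1.9997$-approximation in time $2^{O(k^6)}\poly(n)$, and replace its ``slow'' inner routine with the new algorithms from Theorems~\ref{thm:EvenFasterKCut} and~\ref{thm:kcut-approx}. At a high level, the \cite{GuptaLL18} analysis splits into two regimes depending on how balanced the parts $V_1,\dots,V_k$ of an optimal $k$-cut are. In the \emph{balanced} regime every part contributes comparably to $\Opt$, and the classical Saran--Vazirani $2$-approximation is already bounded by $(2-\gamma)\Opt$ for some constant $\gamma>0$ coming from the averaging argument. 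In the \emph{unbalanced} regime a constant number of ``light'' parts account for essentially all of $\Opt$, and one can guess a combinatorial structure (a laminar family of cheap cuts, or a constant-sized set of anchor edges) that pins these light parts down.

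The upgrade is in how the unbalanced regime is finished off. In~\cite{GuptaLL18} each guess is extended to a full $k$-cut by calling the generic $2$-approximation on the residual instance; this is what drives both the loss and the $2^{O(k^6)}$ term (because the guessed structure has to be very detailed to make a $2$-approximation on the residual actually beat $2$). In the present paper I would instead let the guess only isolate the $O(1)$ heaviest ``special'' parts --- this brings the guess size down to $2^{O(k^2)}$ --- and then solve the residual problem \emph{exactly} for its remaining constant number of parts using the deterministic algorithm of Theorem~\ref{thm:EvenFasterKCut}. Since the number of parts to be determined in the residual is $O(1)$, the runtime of Theorem~\ref{thm:EvenFasterKCut} collapses to $\poly(n)$, and summing over the $2^{O(k^2)}$ guesses gives the claimed total runtime. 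The alternative of calling the $(1+\e)$-approximation of Theorem~\ref{thm:kcut-approx} on the residual would work identically for the runtime but is randomized, so Theorem~\ref{thm:EvenFasterKCut} is the cleaner choice for a deterministic guarantee.

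The remaining step is to rebalance the case analysis: choose the threshold separating ``balanced'' from ``unbalanced'' so that the averaging slack in the balanced case matches the loss inherited from the guess-plus-exact routine in the unbalanced case, and check that the common crossover value is $\apxfactor=1.81$. I expect the main obstacle to be exactly this matching. One has to simultaneously (i) keep the guessed structure coarse enough that there are only $2^{O(k^2)}$ possibilities, (ii) make sure the residual instance after conditioning on the guess really does have only $O(1)$ parts left to find so that Theorem~\ref{thm:EvenFasterKCut} finishes in polynomial time, and (iii) argue that when the guess is correct, an optimal residual solution glued to the guess reconstructs a solution of cost at most $\apxfactor \cdot \Opt$ in the worst case. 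Getting the threshold and the arithmetic to line up at $1.81$ (rather than some intermediate value between $1.81$ and $1.9997$) will require a more careful extremal argument than the one in~\cite{GuptaLL18}, but the machinery for it is exactly what the $(1+\e)$-approximation work here has developed.
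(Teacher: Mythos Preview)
Your plan has a genuine gap in the identification of which subroutine from this paper plugs into the \cite{GuptaLL18} framework, and consequently in the runtime accounting.

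The inner routine that improves the approximation factor is not Theorem~\ref{thm:EvenFasterKCut} (the exact algorithm) and not Theorem~\ref{thm:kcut-approx} as stated (whose $n^{k+O(1)}$ runtime is not FPT). It is Lemma~\ref{lem:fpt-approx-tight}: the deterministic $(1+\e)$-approximation \emph{given a tight \Ttree}, running in $(k/\e)^{O(k)}\poly(n)$. The residual subproblems in the \cite{GuptaLL18} framework are \textsc{Laminar} $r_i$-cut instances on the components $G[S_i]$, and $r_i$ is not $O(1)$; it can be as large as $k$. So Theorem~\ref{thm:EvenFasterKCut} applied there would cost $n^{\Theta(k)}$ and destroy the FPT runtime. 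What makes the residual tractable is that the laminar family of near-minimum cuts inside each $S_i$ already furnishes a tight \Ttree for free (no randomized merge from Lemma~\ref{lem:Merge} is needed), and on a tight \Ttree Lemma~\ref{lem:fpt-approx-tight} gives a $(1+\e_0)$-approximation in FPT time. This is precisely what replaces the old $(2-\delta)$ Laminar-cut subroutine.

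Your high-level description of the \cite{GuptaLL18} framework as a balanced/unbalanced dichotomy with a ``guess $O(1)$ special parts'' step is also off. The actual analysis tracks thresholds $\kl$ and $\kr$ governing when iterated min-cuts and min-$4$-cuts stop being cheap; once they do, non-crossing of cheap cuts yields the laminar (tight-tree) structure. After plugging in Lemma~\ref{lem:fpt-approx-tight}, the improvement to $\apxfactor$ comes from re-optimizing the threshold parameters $\alpha,\beta,\e_4,\e_5$ in that analysis, not from a new extremal argument. The $2^{O(k^2)}$ factor is inherited verbatim from the \cite{GuptaLL18} reduction overhead, not from a new guessing step you would design.
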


\paragraph{Limitations.}
Our exact algorithms raise the natural question: how fast can exact
algorithms for \kcut be? We give a simple reduction showing that while
there is still room for improvement in the running time of exact algorithms, such improvements can only improve the constant in front of
the $k$ in the exponent, assuming a popular conjecture on
algorithms for the \textsc{Clique} problem.

\begin{restatable}[Relationship to Clique]{claim}{Hardness}
  \label{fact:hard}
  Any exact algorithm for the \kcut problem for graphs with edge weights
  in $[n^2]$ can solve the $k$-\textsc{Clique} problem in the
  same runtime. Hence, assuming $k$-\textsc{Clique} cannot be solved in
  faster than $n^{\omega k/3}$ time, the same lower bound holds for the
  $k$-cut problem.
\end{restatable}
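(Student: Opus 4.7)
The plan is to reduce from $k$-Clique to $(k{+}1)$-Cut via a degree-regularization construction; since a shift of one in the parameter does not change the leading coefficient $\omega/3$ of $k$ in the exponent, this yields the claimed lower bound for $k$-Cut.

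Given a $k$-Clique instance $G=(V,E)$ with $|V|=n$, I would build a weighted graph $H$ on vertex set $V \cup \{u\}$ with one auxiliary vertex $u$, putting $w_H(e) = 1$ for every $e \in E$ and $w_H(u,v) = n^2 - \deg_G(v)$ for every $v \in V$. All weights lie in $[0,n^2]$ and, crucially, every $v \in V$ has weighted degree exactly $n^2$ in $H$. For a set $S \subseteq V$ with $|S|=k$, the $(k{+}1)$-partition $\Pi_S$ consisting of each $\{v\}$ for $v \in S$ together with $V(H)\setminus S$ has cut value
\[ \sum_{v\in S} d_H(v) - w_H(E_H(S)) = k n^2 - |E_G(S)|, \]
which equals its minimum possible value $k n^2 - \binom{k}{2}$ precisely when $S$ induces a $k$-clique in $G$.

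The main step, and the main obstacle, is to verify that the minimum $(k{+}1)$-cut of $H$ is actually attained by some $\Pi_S$, so that the numerical output of any $k$-Cut algorithm (applied with parameter $k{+}1$) decides $k$-Clique on $G$. I would handle three cases. Partitions with more than $k{+}1$ parts are suboptimal since refinement can only add edges to the cut. Partitions that isolate $\{u\}$ as a singleton cut all $u$-edges, contributing weight $n \cdot n^2 - 2|E| \ge k n^2$ whenever $k \le n-1$. For the remaining partitions, let $Q$ be the part containing $u$ and $P_1,\ldots,P_k \subseteq V$ the other parts, and set $m = \sum_i |P_i|$; if any $|P_i| \ge 2$ then $m \ge k{+}1$, and degree-regularity yields
\[ \mathrm{cut} \;=\; m\,n^2 \;-\; 2\sum_i |E_G(P_i)| \;-\; \sum_{i<j} |E_G(P_i,P_j)| \;\ge\; m n^2 - 2|E| \;\ge\; (m-1)n^2 \;\ge\; kn^2, \]
again dominating $k n^2 - \binom{k}{2}$. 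The choice $D = n^2$ is exactly what ensures the $O(|E|) \le n(n-1)$ slack from the two negative terms is absorbed by the $n^2$ gap per extra vertex.

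Combining, any $T(n)$-time algorithm for $k$-Cut on graphs with edge weights in $[n^2]$ yields a $T(n)+O(n^2)$-time algorithm for $(k{-}1)$-Clique on $n{-}1$ vertices via this reduction. Hence the conjectured $n^{\omega k/3}$ lower bound for $k$-Clique transfers, with the same leading coefficient $\omega/3$ in front of $k$, to a lower bound for $k$-Cut, as claimed.
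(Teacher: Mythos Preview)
Your construction is essentially identical to the paper's: add one auxiliary vertex $u$ (the paper calls it $s$) with edges of weight $n^2-\deg_G(v)$ to each $v\in V$, then solve $(k{+}1)$-Cut. The paper simply asserts ``it is easy to see that the optimal $(k+1)$-cut in the graph consists of $k$ parts containing singleton vertices $\{v_1\},\ldots,\{v_k\}$ from $V$, and one part containing $s$ and the rest,'' whereas you actually carry out the three-case verification; your formula $\mathrm{cut}=mn^2-2\sum_i|E_G(P_i)|-\sum_{i<j}|E_G(P_i,P_j)|$ and the ensuing bounds are correct. So your proof is correct and follows the same route as the paper, just with the ``easy to see'' step filled in.
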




\subsection{Our Techniques}

Our algorithms build on the approach pioneered by Thorup: using
tree-packings, he showed how to find a tree $T$ such that it crosses the
optimal $k$-cut at most $2k-2$ times. (We call such a tree a
\emph{Thorup tree}, or \emph{\Ttree}.)  Now brute-force search over
which edges to delete from the \Ttree (and how to combine the resulting
parts together) gave an $\tO(n^{2k-2})$-time deterministic
algorithm. This last step, however, raises the natural
question---\emph{having found such a \Ttree, can we use the structure of
  the \kcut problem to beat brute force?} Our algorithms answer the
question in the affirmative, in several different ways. The main ideas
behind our algorithm are dynamic programming and fast
matrix-multiplication, carefully combined with the fixed-parameter
tractable algorithm technique of color-coding, and random sampling in
general.

\paragraph{Fast matrix multiplication.}
Our idea to apply fast matrix multiplication 
starts with the crucial observation that if
(i)~the \Ttree $T$ is ``tight'' and crosses the optimal $k$-cut only
$k-1$ times, and (ii)~these edges are ``incomparable'' and do not lie on
a root-leaf path, then the problem of finding these $k-1$ edges can be
modeled as a max-weight clique-like problem! (And hence we can use
matrix-multiplication ideas to speed up their computation.) 
An important property of this special case is that choosing an edge $e$ to cut
fixes one component in the \kcut solution --- by incomparability,
the subtree below $e$ cannot be cut anymore. 
The cost of a $k$-cut can be determined by the weight of edges between
each pair of components (just like being a clique is determined by pairwise connectivity), 
so this case can be solved via an algorithm similar to \kclique. 

\paragraph{Randomized algorithm.}
Our randomized algorithm removes these two assumptions step by step.
First, while the above intuition crucially relies on assumption~(ii), we
give a more sophisticated dynamic program using color-coding schemes for
the case where the edges are not incomparable. 
Moreover, to remove assumption~(i), we show a
randomized reduction that given a tree that crosses the optimal cut as
many as $2k-2$ times, finds a ``tight'' tree with only $k-1$ crossings
(which is the least possible), at the expense of a runtime of
$O(k^2n)^{k-1}$. Note that guessing which edges to delete is easily done
in $n^{k-1}$ time, but adding edges to regain connectivity while not
increasing the number of crossings can naively take a factor of
$m^{k-1}$ more time. We lose only a $k^{2(k-1)}$ factor using our
random-sampling based algorithm, using that in an optimal \kcut a
split cluster should have more edges going to its own parts than to
other clusters.

\paragraph{Deterministic algorithm.}
The deterministic algorithm proceeds along a different direction and
removes both assumptions~(i) and~(ii) at once.  We show that by deleting
some $O(\log k)$ carefully chosen edges from the \Ttree $T$, we can
break it into three forests such that we only need to delete about
$2k/3$ edges from each of these forests.  Such a deletion is not
possible when $T$ is a star, but appropriately extending $T$ by
introducing Steiner nodes admits this deletion. (And $\Theta(\log k)$ is
tight in this extension.)  For each forest, there are $n^{2k/3}$ ways to
cut these edges, and once a choice of $2k/3$ edges is made, the forest
will not be cut anymore.  This property allows us to bypass (ii) and
establish desired pairwise relationships between choices to delete
$2k/3$ edges in two forests. Indeed, we set up a tripartite graph where
one part corresponds to the choices of which $\leq 2k/3$ edges to cut in
one forest and the cost of the min $k$-cut is the weight of the
min-weight triangle, which we find efficiently using fast matrix
multiplication.  Some technical challenges arise because we need to
some components for some forests may only have Steiner vertices, but we
overcome these problems using color-coding.

\paragraph{Approximation schemes.}
The $(1+\e)$-approximation algorithm again uses the $O(k^2n)^{k-1}$-time
randomized reduction, so that we have to cut exactly $k-1$ edges from
a ``tight'' \Ttree $T$. An exact dynamic program for this problem takes
time $\approx n^k$ --- as it should, since even this tight case captures
clique, when $T$ is a star and hence these $k-1$ edges are incomparable.
And again, we need to handle the case where these $k-1$ edges are not
incomparable. For the former problem, we replace the problem of finding
cliques by approximately finding ``partial vertex covers'' instead. (In
this new problem we find a set of $k-1$ vertices that minimize the total
number of edges incident to them.) Secondly, in the DP we cannot afford
to maintain the ``boundary'' of up to $k$ edges explicitly any more. We
show how to maintain an ``$\eps$-net'' of nodes so that carefully
``rounding'' the DP table to only track a small $f(k)$-sized set of
these rounded subproblems incurs only a $(1+\eps)$-factor loss in
quality.

Our approximate DP technique turns out to be useful to get a
$\apxfactor$-approximation for \kcut in FPT time, improving on our
previous approximation of $\approx 1.9997$~\cite{GuptaLL18}. In particular,
the \emph{laminar cut problem} from~\cite{GuptaLL18} also has a tight \Ttree
structure, and hence we can use (a special case of) our approximate DP
algorithm to get a $(1+\e)$-approximation for laminar cut, instead of
the $2-\e$-factor previously known. Combining with other ideas in the
previous paper, this gives us the $\apxfactor$-approximation.


\subsection{Related Work}

The first non-trivial exact algorithm for the \kcut problem was by
Goldschmidt and Hochbaum, who gave an $O(n^{(1/2- o(1))k^2})$-time
algorithm~\cite{GH94}; this is somewhat surprising because the related
\textsc{Multiway Cut} problem is NP-hard even for $k=3$. They also
proved the problem to be NP-hard when $k$ is part of the input.  Karger
and Stein improved this to an $O(n^{(2-o(1))k})$-time randomized
Monte-Carlo algorithm using the idea of random
edge-contractions~\cite{KS96}. Thorup improved the $O(n^{4k +
  o(1)})$-time deterministic algorithm of Kamidoi et al.~\cite{KYN06} to
an $\tilde{O}(n^{2k})$-time deterministic algorithm based on tree
packings~\cite{Thorup08}. Better algorithms are known for small values
of $k \in [2, 6]$~\cite{NI92, HO92, BG97, Karger00, NI00, NKI00,
  Levine00}.

\paragraph{Approximation algorithms.} The first such result for \kcut was a
$2(1-1/k)$-approximation of Saran and Vazirani~\cite{SV95}.  Later, Naor
and Rabani~\cite{NR01}, and also Ravi and Sinha~\cite{RS02} gave
$2$-approximation algorithms using tree packing and network strength
respectively.  Xiao et al.~\cite{XCY11} extended Kapoor~\cite{Kapoor96}
and Zhao et al.~\cite{ZNI01} and generalized Saran and Vazirani to give
an $(2 - h/k)$-approximation in time $n^{O(h)}$. On the hardness front,
Manurangsi~\cite{Manurangsi17} showed that for any $\eps > 0$, it is
NP-hard to achieve a $(2 - \eps)$-approximation algorithm in time
$\poly(n,k)$ assuming the Small Set Expansion Hypothesis.

In recent work~\cite{GuptaLL18}, we gave a $1.9997$-approximation for
\kcut in FPT time $f(k) \poly(n)$; this does not contradict Manurangsi's
work, since $k$ is polynomial in $n$ for his hard instances. We improve
that guarantee to $\apxfactor$ by getting a better approximation ratio
for the ``laminar'' $k$-cut subroutine, improving from $2-\e$ to $1+\e$.
This follows as a special case of the techniques we develop in
\S\ref{sec:approx}; the rest of the ideas in this current paper are
orthogonal to those in~\cite{GuptaLL18}.

\paragraph{FPT algorithms.} Kawarabayashi and Thorup give the first $f(\Opt)
\cdot n^{2}$-time algorithm~\cite{KT11} for unweighted graphs. Chitnis
et al.~\cite{Chitnis} used a randomized color-coding idea to give a
better runtime, and to extend the algorithm to weighted graphs. Here,
the FPT algorithm is parameterized by the cardinality of edges in the
optimal \kcut, not by the number of parts $k$. For more details on FPT
algorithms and approximations, see the book~\cite{FPT-book}, and the 
survey~\cite{Marx07}.

\subsection{Preliminaries}
\label{sec:preliminaries}

For a graph $G=(V,E,w)$, consider some collection of disjoint sets
$\calS = \{S_1,\lds,S_r\}$. Let $E_G(\calS) = E_G(S_1,\lds,S_r)$ denote
the set of edges in $E_G[\cup_{i = 1}^r S_r]$ (i.e., among the edges
both of whose endpoints lie in these sets) whose endpoints belong to
different sets $S_i$. For any vertex set $S$, let $\partial S$ denote
the edges with exactly one endpoint in $S$; hence
$E_G(\calS) = \cup_{S_i \in \calP} \, \partial S_i$.  For a collection
of edges $F\s E$, let $w(F):=\sum_{e\in F}w(e)$ be the sum of weights of
edges in $F$. In particular, for a \kcut solution $\{S_1,\lds,S_k\}$,
the value of the solution is $w(E_G(S_1,\lds,S_k))$.

For a rooted tree $T=(V_T,E_T)$, let $T_v\s V_T$ denote the subtree of
$T$ rooted at $v\in V_T$. For an edge $e\in E_T$ with child vertex $v$,
let $T_e:=T_v$. Finally, for any set $S \s V_T$, $T_S = \sum_{v \in S} T_v$.

For some sections, we make no assumptions on the edge weights of $G$,
while in other sections, we will assume that all edge weights in $G$ are
integers in $[W]$, for a fixed positive integer $W$. We default to the
former unrestricted case, and explicitly mention transitioning to the
latter case when needed.


\section{A Fast Randomized Algorithm}
\label{sec:randomized}

In this section, we present a randomized algorithm to solve \kcut exactly
in time $\tO(k^{O(k)} n^{(1+ \omega/3)k})$, proving 
Theorem~\ref{thm:FasterKCut}. 
Section~\ref{sec:Thorup} introduces our high-level ideas based on
Thorup's tree packing results.
Section~\ref{sec:Merge} shows
how to refine Thorup's tree to a good tree that crosses the optimal $k$-cut
exactly $k - 1$ times, and
Section~\ref{sec:Remove} presents an algorithm given a good tree.

\subsection{Thorup's Tree Packing and Thorup's Algorithm}
\label{sec:Thorup}

Our starting point is a transformation from the general \kcut problem to
a problem on trees, inspired by Thorup's algorithm~\cite{Thorup08} based
on greedy tree packings. We will be interested in trees that cross the
optimal partition only a few times.
We fix an optimal \kcut solution, $\m S^*:=\{S^*_1,\lds,S^*_k\}$. Let
$OPT:=E_G(S^*_1,\lds,S^*_k)$ be edges in the solution, so that $w(OPT)$
is the solution value.

\begin{definition}[{\Ttree}s]
  \label{def:ttree}
  A tree $T$ of $G$ is a $\ell$-\Ttree if it crosses the
  optimal cut at most $\ell$ times; i.e., $E_T(S^*_1,\lds,S^*_k) \leq
  \ell$.  If $\ell = 2k-2$, we often drop the quantification and call it
  a \Ttree. If $\ell = k-1$, the minimum value possible, then we call it
  a tight \Ttree.
\end{definition}

Our first step is the same as in~\cite{Thorup08}: we compute a
collection $\m T$ of $n^{O(1)}$ trees such that there exists a \Ttree,
i.e., a tree $T \in \m T$ that crosses $OPT$ at most $2k-2$ times.
\begin{theorem}[\cite{Thorup08}, Theorem~1]
  For $\al\in(0,\frac9{10})$, let $\m T$ be a greedy tree packing with
  at least $3m(k/\al)^3\ln(nmk/\al)$ trees. Then, on the average, the
  trees $T\in\m T$ cross each minimum $k$-cut less than $2(k-1+2\al)$
  times. Furthermore, the greedy tree packing algorithm takes
  $\tO(k^3m^2)$ time.
\end{theorem}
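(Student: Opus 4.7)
The plan is to combine the Nash-Williams/Tutte strength bound with the approximate optimality of greedy tree packing. Fix the optimal $k$-cut $\m S^* = \{S^*_1,\ldots,S^*_k\}$ of weight $w(\opt)$. Because $\m S^*$ is one particular partition of $V$ into $k$ parts, the \emph{strength} of $G$, defined by $\tau := \min_P w(E_G(P))/(|P|-1)$ over all partitions $P$, satisfies $\tau \le w(\opt)/(k-1)$. By the weighted Nash-Williams/Tutte theorem, the maximum weight of a fractional spanning-tree packing in $G$ equals $\tau$; it is classical that the natural greedy tree packing achieves at least half of this fractional optimum (the usual ``greedy is $2$-approximate'' phenomenon for tree packings).

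Next I would analyze the explicit greedy packing --- iteratively add a spanning tree that is an MST with respect to the current edge-load vector, increment loads along its edges, and repeat --- via a multiplicative-weights-update analysis in the Plotkin-Shmoys-Tardos style. With an appropriately small step size, after $N = 3m(k/\al)^3 \ln(nmk/\al)$ iterations the resulting packing simultaneously has $N \ge (1 - O(\al))\,\tau/2$ trees and every edge $e$ is contained in at most $(1 + O(\al))\,w_e$ of them. The logarithmic factor is the standard MWU penalty, and the extra factor of $m$ is what buys per-edge, rather than merely average, feasibility.

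The finish is a double counting of the incidence set $\{(T,e) : T \in \m T,\ e \in T \cap \opt\}$. Every spanning tree $T$ must cross the $k$-partition $\m S^*$ in at least $k-1$ edges (fewer edges cannot split $T$ into $k$ components), so this count is at least $(k-1)\,N$. On the other hand, near-feasibility bounds the same count by $(1 + O(\al))\,w(\opt)$. Dividing and using $N \ge (1 - O(\al))\,w(\opt)/(2(k-1))$ bounds the average number of crossings per tree by $\frac{1+O(\al)}{1-O(\al)}\cdot 2(k-1)$; tuning the hidden constants absorbs this into the claimed $2(k-1+2\al)$. The runtime is $\tO(mN) = \tO(k^3 m^2)$, one MST per iteration in $\tO(m)$ time.

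The main obstacle is the tight constant tracking in the MWU analysis: one has to balance the $(1-O(\al))$ slack in the packing size against the $(1+O(\al))$ slack in per-edge feasibility so that the resulting ratio contributes an \emph{additive} $+4\al$ rather than a multiplicative $(1+O(\al))$ factor on the $(k-1)$ term. The other ingredients --- the Nash-Williams strength bound, the ``any spanning tree crosses any $r$-partition in at least $r-1$ edges'' fact, and the per-iteration MST cost --- are essentially black-box from combinatorial optimization.
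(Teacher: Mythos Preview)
The paper does not prove this theorem; it is quoted verbatim from~\cite{Thorup08} and used as a black box. So there is no ``paper's own proof'' to compare against, and I will assess your sketch on its own merits.

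Your high-level plan---Nash--Williams strength, an MWU/PST analysis of the greedy packing, and a double count of tree/cut incidences---is the right shape, but the sketch has two genuine gaps.

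\textbf{Normalization.} The sentence ``the resulting packing simultaneously has $N \ge (1 - O(\al))\,\tau/2$ trees and every edge $e$ is contained in at most $(1 + O(\al))\,w_e$ of them'' does not type-check. The number of trees $N$ is a parameter you chose, not a conclusion; and the load $\ell(e)$ grows linearly in $N$, so it cannot be bounded by the fixed capacity $w_e$. The MWU analysis actually gives $\ell(e)\le (1+\e)\,N\,w_e/\tau$, i.e.\ the \emph{normalized} load $\ell(e)/N$ approximates $w_e/\tau$. With that correction, your double count yields only $\text{avg crossings}\le (1+\e)\,w(\opt)/\tau$.

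\textbf{The missing structural lemma.} For the last display to give $2(k-1)$ you would need $w(\opt_k)\le 2(k-1)\tau$, and this is \emph{false}. Take two copies of $K_n$ joined by a single bridge: the strength is $\tau=1$ (witnessed by the bridge cut), yet the minimum $3$-cut costs $n$. The subtlety is that many optimal fractional packings of value $\tau$ exist, and most of them do \emph{not} have small average crossing number with the min-$k$-cut; the greedy process converges to a specific ``ideal'' packing in which $\ell(e)/N\approx w_e/\tau(e)$, where $\tau(e)$ is the \emph{local} strength of the tightest subgraph containing $e$ (in the bridge example, $\tau(\text{bridge})=1$ but $\tau(\text{internal edge})=n/2$). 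Thorup's proof then establishes the structural inequality $\sum_{e\in \opt_k} w_e/\tau(e)\le 2(k-1)$ for any \emph{minimum} $k$-cut, using minimality together with the strength decomposition of $G$. That inequality is the real content of the theorem, and it is absent from your sketch; the global bound $\tau\le w(\opt)/(k-1)$ that you invoke is the trivial direction and does not help here.
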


The running time comes from the execution of  $\tO(k^{3}m)$ minimum spanning tree computations. Note that, since our results are only interesting when $k\ge7$, resulting in algorithms of running time $\Om(n^{7+2\om})$, we can completely ignore the running time of the greedy tree packing algorithm, which is only run once.
Letting $\al:=1/8$, we get the following corollary.

\begin{corollary}\label{cor:TreePacking}
  We can find a collection of $\tO(k^3m)$ trees such that for a random
  tree $T\in\m T$, $|E_T(S^*_1,\lds,S^*_k)|\le 2k-3/2$ in
  expectation. In particular, there exists a \Ttree $T\in \m T$.
\end{corollary}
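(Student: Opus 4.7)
The plan is to instantiate Thorup's theorem with the specific choice $\alpha = 1/8$ and then deduce the existence statement by an averaging plus integrality argument.

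First I would set $\alpha := 1/8$, which lies in the allowed range $(0, 9/10)$, and compute the bound on the number of trees in the greedy tree packing: the requirement $3m(k/\alpha)^3 \ln(nmk/\alpha) = 3m(8k)^3 \ln(8nmk)$ is $\tO(k^3 m)$, matching the claimed collection size. So a greedy tree packing with this many trees can be produced (in $\tO(k^3 m^2)$ time, which we are ignoring).

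Next, Thorup's theorem directly gives that, averaged over $T \in \mathcal{T}$, the number of crossings $|E_T(S^*_1,\ldots,S^*_k)|$ is strictly less than $2(k-1+2\alpha) = 2k - 2 + 4\alpha = 2k - 3/2$, which is exactly the expectation bound in the corollary when $T$ is drawn uniformly from $\mathcal{T}$.

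Finally, for the ``in particular'' clause, I would invoke an averaging argument: if the expected number of crossings is at most $2k - 3/2$, then at least one tree $T \in \mathcal{T}$ satisfies $|E_T(S^*_1,\ldots,S^*_k)| \leq 2k - 3/2$. Since the number of crossings is a nonnegative integer, this forces $|E_T(S^*_1,\ldots,S^*_k)| \leq \lfloor 2k - 3/2 \rfloor = 2k - 2$, so $T$ is a \Ttree in the sense of Definition~\ref{def:ttree}. There is no real obstacle here—the whole corollary is a direct specialization of Thorup's theorem, with the only small subtlety being the integrality step that converts the non-integer average bound $2k - 3/2$ into an integer crossing bound of $2k - 2$ for at least one tree.
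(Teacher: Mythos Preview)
Your proposal is correct and follows exactly the paper's approach: the paper simply states ``Letting $\alpha:=1/8$, we get the following corollary,'' and you have spelled out the arithmetic ($3m(8k)^3\ln(8nmk)=\tO(k^3m)$ and $2(k-1+2\alpha)=2k-3/2$) together with the averaging/integrality step that yields a \Ttree.
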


In other words, if we choose such a \Ttree $T\in \m T$, we get the
following problem: find the best way to cut $\le 2k-2$ edges of $T$, and
then merge the connected components into exactly $k$ components
$S_1,\lds,S_k$ so that $E_G(S_1,\lds,S_k)$ is minimized. Thorup's
algorithm accomplishes this task using brute force: try all possible
$O(n^{2k-2})$ ways to cut and merge, and output the best one. This gives
a runtime of $\tO(k^3n^{2k-2}m)$, or even $\tO(n^{2k-2}m)$ with a more
careful analysis~\cite{Thorup08}. The natural question is: can we do
better than brute-force?


For the min-cut problem (when $k=2$), Karger was able to speed up this
step from $O(n^{2k-2})=O(n^2)$ to $\tO(n)$ using dynamic tree data
structures~\cite{Karger00}. However, this case is special:
since there are $\le 3$ components produced
from cutting the $\le 2k-2=2$ tree edges, only one pair of components need to be merged.
For larger values of $k$, it is not clear
how to generalize the use of clever data structures to handle multiple
merges.

Our randomized algorithm gets the improvement in three steps:
\begin{itemize}
\item First, instead of trying all possible trees $T\in\m T$, we only
  look at a random subset of $\Om(k\log n)$ trees. By
  Corollary~\ref{cor:TreePacking} and Markov's inequality, the
  probability that a random tree satisfies
  $|E_T(S_1^*,\lds,S_k^*)|\ge2k-1$ is $\le
  (2k-3/2)/(2k-1)=1-\Om(1/k)$. Therefore, by trying $\Om(k\log n)$
  random trees, we find a \Ttree $T$ 
  w.h.p.
\item Next, given a \Ttree $T$ from above, we show how to find a collection of
  $\approx n^{k-1}$ trees such that, with high probability, one of these
  trees $T'$ is a tight \Ttree, i.e., it intersects $OPT$ in exactly
  $k-1$ edges. We show this in \S\ref{sec:Merge}.
\item Finally, given a tight \Ttree $T'$ from the previous step, we show
  how to solve the optimal \kcut in time $\approx O(n^{(\omega/3)k})$,
  much like the \kclique problem~\cite{nevsetvril1985complexity}. The
  runtime is not coincidental; the $W[1]$ hardness of \kcut derives from
  \kclique, and hence techniques for the former must work also for the
  latter.  We show this in \S\ref{sec:Remove}.
\end{itemize}

\subsection{A Small Collection of ``Tight'' Trees}\label{sec:Merge}

In this section we show how to find a collection of $\approx n^{k-1}$
trees such that, with high probability, one of these trees $T'$ is a
tight \Ttree. 
Formally, 

\begin{restatable}{lemma}{Merge}
  \label{lem:Merge}
  There is an algorithm that takes as input a tree $T$ such that
  $|E_T(S^*_1,\lds,S^*_k)|\le2k-2$, and produces a collection of
  $k^{O(k)}n^{k-1}\log n$ trees, such that one of the new trees $T'$
  satisfies $|E_{T'}(S^*_1,\lds,S^*_k)|=k-1$ w.p.\ $1 - 1/\poly(n)$. The
  algorithm runs in time $k^{O(k)}n^{k-1}m\log n$.
\end{restatable}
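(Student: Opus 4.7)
The plan is to transform $T$ into a tight \Ttree by performing a small number of \emph{swap operations}, each replacing a crossing tree edge with a non-crossing edge. Writing $c := |E_T(S^*_1, \ldots, S^*_k)| \leq 2k-2$, we need $r := c - (k-1) \leq k-1$ swaps to bring the crossing count from $c$ down to $k-1$. Since neither $r$ nor the crossings of $T$ are known, I would enumerate the tree edges to swap out and use weighted random sampling for the non-crossing replacements.

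Concretely, for each $r \in \{0, 1, \ldots, k-1\}$ and every ordered tuple $(e_1, \ldots, e_r) \in E(T)^r$---giving $\sum_{r} (n-1)^r = O(n^{k-1})$ tuples in total---I would build candidates iteratively: start with $T^{(0)} := T$, and at step $t$ remove $e_t$ from $T^{(t-1)}$ to obtain components $A_t, B_t$, sample $f_t \in E_G(A_t, B_t)$ with probability proportional to $w(f_t)$, and set $T^{(t)} := T^{(t-1)} - e_t + f_t$. The output candidate is $T^{(r)}$.

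The structural input is the local optimality of $\m{S}^*$: every ``basic subtree'' $\tau$ of $T$ (a connected component of $T$ after removing all $c$ crossings) sits inside a single part $S^*_p$, and shifting $\tau$ to any $S^*_q$ with $q \neq p$ cannot improve the $k$-cut. This yields $w(\tau, S^*_p \setminus \tau) \geq \max_{q \neq p} w(\tau, S^*_q) \geq \tfrac{1}{k-1}\, w(\tau, V \setminus S^*_p)$, so at least a $1/k$ fraction of $\tau$'s weighted boundary is non-crossing---``a split cluster has more weight going to its own part than to any other cluster.'' Using this I would argue that whenever $e_t$ is actually a crossing of $T^{(t-1)}$ (the ``correct'' guess), the sampled $f_t$ is non-crossing with probability $\Omega(1/k^{O(1)})$.

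Chaining $r \leq k-1$ successful swaps gives success probability $k^{-O(k)}$ for the correct tuple, and repeating the sampling $k^{O(k)} \log n$ times per tuple amplifies this to $1 - 1/\poly(n)$, producing $O(n^{k-1} \cdot k^{O(k)} \cdot \log n)$ candidate trees in total and matching the claimed running time (the factor $m$ coming from the cost of one weighted edge sample). The hard step will be the per-sample bound: the $1/k$ inequality pertains to a single basic subtree, whereas the cut $(A_t, B_t)$ aggregates many basic subtrees and non-crossing mass incident to one side could fail to land on the opposite side. I expect to handle this by focusing the sampling weight on the two basic subtrees $\tau_u, \tau_v$ straddling $e_t$ (which necessarily carry weight across the cut) and invoking the $1/k$ bound for each, so that a constant fraction of the sampled weight corresponds to non-crossing edges from $\tau_u \cup \tau_v$ to their within-part neighbours on the other side.
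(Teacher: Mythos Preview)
Your high-level plan---iteratively swap out a crossing tree edge for a non-crossing graph edge, enumerate the edges to delete, and sample the replacement proportionally to weight---matches the paper exactly. The gap is precisely where you flagged it: the per-swap success probability.

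Your proposed fix (focus on the two basic subtrees $\tau_u,\tau_v$ adjacent to $e_t$ and use local optimality) does not work. Two things fail. First, either of $\tau_u,\tau_v$ may be an entire part $S^*_p$, in which case moving it to another part is not a valid $k$-cut and your inequality $w(\tau,S^*_p\setminus\tau)\ge \tfrac1{k-1}w(\tau,V\setminus S^*_p)$ is unavailable. Second, even when $\tau_u\subsetneq S^*_p$, the mass $w(\tau_u,S^*_p\setminus\tau_u)$ can lie entirely on the \emph{same} side $A_t$ as $\tau_u$: nothing forces the other pieces of $S^*_p$ to live in $B_t$. A small example with $k=3$ already breaks the argument: take a path of basic subtrees $\tau_1-S^*_2-\tau_1'-S^*_3$ with $S^*_1=\tau_1\cup\tau_1'$; if $e_t$ is the edge between $\tau_1'$ and $S^*_3$, then $\tau_1'$'s within-part neighbour $\tau_1$ sits in $A_t$, and $\tau_v=S^*_3$ is an entire part. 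More generally, even if you replace $\tau_u$ by the full set $C=S^*_i\cap A_t$, local optimality only compares $w(C,S^*_i\setminus C)$ to $w(C,V\setminus S^*_i)$, which can be tiny relative to $w(E_G(A_t,B_t))$ when most of the cut mass comes from other parts inside $A_t$.

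The paper closes this gap with two ingredients you are missing. First, it does not target an arbitrary crossing edge but a \emph{deletion-worthy} one: a crossing edge $e$ such that exactly one part $S^*_i$ straddles the induced cut $(T_e,V\setminus T_e)$. Such an edge always exists whenever some part is still split (take a deepest split vertex in the contracted tree). This property guarantees that with $C:=S^*_i\cap T_e$ one has $S^*_i\setminus C\subseteq V\setminus T_e$, so \emph{all} of $E(C,S^*_i\setminus C)$ crosses the cut, and moreover every edge of $E_G(T_e,V\setminus T_e)$ is either in $OPT$ or in $E(C,S^*_i\setminus C)$. Second, the lower bound on $w(E(C,S^*_i\setminus C))$ is obtained not by local optimality but by a $(k{+}1)$-cut averaging argument: since $OPT\cup E(C,S^*_i\setminus C)$ is a $(k{+}1)$-way cut, merging a random pair shows $w(OPT)\le\bigl(1-\binom{k+1}{2}^{-1}\bigr)\bigl(w(OPT)+w(E(C,S^*_i\setminus C))\bigr)$, hence $w(E(C,S^*_i\setminus C))=\Omega(k^{-2})\,w(OPT)$ and the sampled edge lands in $E(C,S^*_i\setminus C)$ with probability $\Omega(k^{-2})$. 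Your enumeration over $E(T)^r$ would in fact hit the correct sequence of deletion-worthy edges (each successful swap only removes a crossing edge from the original $E(T)$), but you need these two ingredients to certify that the sampling succeeds on that sequence.
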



The algorithm proceeds by iterations. In each iteration, our goal is to
remove one edge of $T$ and then add another edge back in, so that the
result is still a tree. In doing so, the value of $|E_T(S^*_1,\lds,S^*_k)|$
can either decrease by $1$, stay the same, or increase by $1$. We call
an iteration \textit{successful} if $|E_T(S^*_1,\lds,S^*_k)|$ decreases by
$1$. Throughout the iterations, we will always refer to $T$ as the
current tree, which may be different from the original tree. Finally, if
$|E_T(S^*_1,\lds,S^*_k)|=\el$ initially, then after $\el-(k-1)$
consecutive successful iterations, we have the desired tight \Ttree
$T'$. 

  \begin{wrapfigure}{R}{0.35\textwidth}
    \centering
    \includegraphics[width=0.3\textwidth]{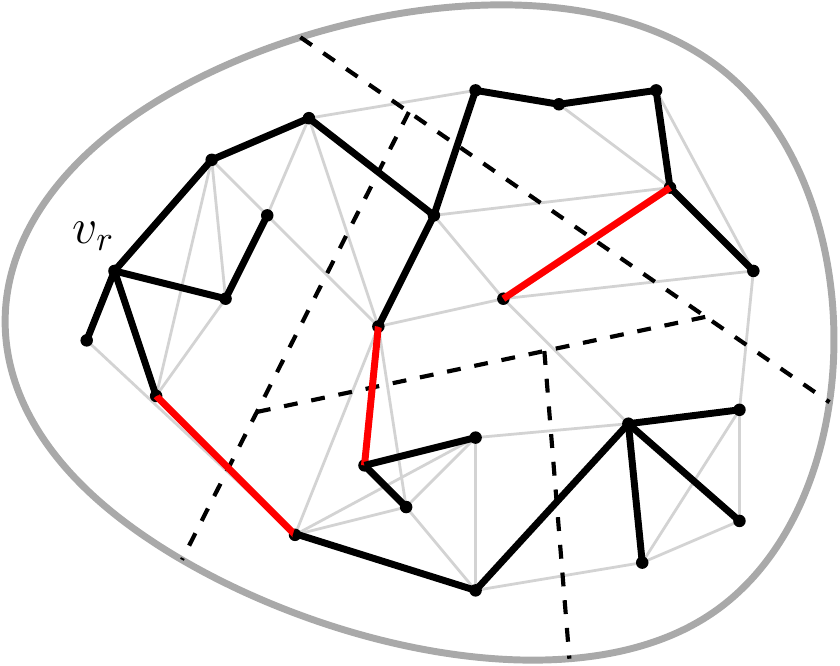}
    \caption{\label{fig:conform} The red edges are deletion-worthy edges
      in this \Ttree; the dashed lines mark the optimal components.}
  \end{wrapfigure}

Assume we know $\el$ beforehand; we can easily discharge this assumption
later. For an intermediate tree $T$ in the algorithm, we say that component
$S^*_i$ is \defn{unsplit} if $S^*_i$ induces exactly one connected
component in $T$, and \defn{split} otherwise. Initially, there are at
most $(k-1)-\el$ split components, possibly fewer if some components
induce many components in $T$. Moreover, if all $\el-(k-1)$ iterations
are successful, all components are unsplit at the end.

\begin{lemma}
  \label{lem:success}
  The probability of any iteration being successful, i.e., reducing the
  number of tree-edges belonging to the optimal cut, is at least
  $\Omega(1/nk^2)$.
\end{lemma}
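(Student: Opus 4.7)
The plan is to exhibit a single-iteration random swap and analyze its success probability using the optimality of the $k$-cut. The iteration I propose: sample a vertex $v \in V$ (with a weighted distribution specified below), sample an edge $e' = (v,u) \in E(G)$ incident to $v$ weighted by $w(e')$, add $e'$ to $T$ to form the unique cycle $Z \subset T \cup \{e'\}$, and remove a random tree edge $e$ of $Z$. The result $T' := T \cup \{e'\} \setminus \{e\}$ is always a tree, and the iteration is successful exactly when $e \in OPT$ and $e' \notin OPT$, since then $|E_{T'} \cap OPT| = |E_T \cap OPT| - 1$.

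\paragraph{Structural inequality from optimality.}
The key ingredient is a moving argument: if $\emptyset \neq C \subsetneq S^*_i$, re-assigning $C$ from $S^*_i$ to any $S^*_j$ still yields a $k$-partition, so its cost is at least $w(OPT)$. This gives $w(E_G(C, S^*_i \setminus C)) \geq w(E_G(C, S^*_j))$ for every $j \neq i$, and averaging over $j$:
\[
w(E_G(C, S^*_i \setminus C)) \;\geq\; \tfrac{1}{k-1}\, w(E_G(C, V \setminus S^*_i)),
\]
so at least a $1/k$ fraction of the boundary weight $w(\partial C)$ lies in ``good'' (same-cluster, non-$OPT$) edges.

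\paragraph{Existence of a successful swap.}
When $\ell := |E_T \cap OPT| > k-1$, some cluster $S^*_i$ is \emph{split}: the subgraph $T[S^*_i]$ has a connected component $C \subsetneq S^*_i$. Because $C$ is a maximal connected piece of $T[S^*_i]$, every tree edge of $\partial_T C$ joins $C$ to a vertex in $V \setminus S^*_i$ and hence lies in $OPT$. The inequality above forces the existence of a non-tree edge $e' = (v,u) \in E_G$ with $v \in C$, $u \in S^*_i \setminus C$, so $e' \notin OPT$. The cycle $Z$ in $T \cup \{e'\}$ must exit $C$ and later enter the component $C(u) \subseteq S^*_i$ containing $u$, crossing at least two tree edges of $OPT$. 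Removing any one of them and adding $e'$ is a valid successful swap.

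\paragraph{Probability bound and main obstacle.}
The success probability decomposes into three factors, summed over good configurations $(v,u,e)$ with $v \in C$, $u \in S^*_i \setminus C$, and $e$ one of the $OPT$ edges on $Z$: (a) a factor $1/n$ for picking a useful $v$; (b) a factor $\Omega(1/k)$ for picking a good edge $e'$, obtained by coupling the weighted edge sampling with the aggregate ratio $w(C \to S^*_i \setminus C)/w(\partial C) \geq 1/k$; and (c) a factor for picking a good tree edge on the cycle, at least $2/|Z|$ under uniform sampling, which yields another $\Omega(1/k)$ after weight-aware arguments bounding the cycle's structure. Multiplying gives the claimed $\Omega(1/nk^2)$. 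The main technical obstacle is that the optimality inequality is aggregated over the whole component $C$ while the sampler operates per vertex and per edge; to preserve the $1/k$ factor cleanly rather than lose polynomial factors in $n$ or $k$, the vertex sampling should be weighted (e.g., $v$ proportional to $w(\partial v)$, equivalently sampling an oriented edge uniformly) so the per-vertex ratios $w(v \to S^*_i \setminus C)/w(\partial v)$ telescope into the aggregate ratio that the structural inequality controls.
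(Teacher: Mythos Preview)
Your structural inequality from the moving argument is correct and is actually a nice alternative to the paper's $(k{+}1)$-cut averaging. However, the iteration you build around it does not deliver the $\Omega(1/nk^2)$ bound; both the ``add'' part and the ``remove'' part break down.

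\textbf{Step (c) is the main gap.} After adding $e'=(v,u)$ with $v\in C$ and $u\in S^*_i\setminus C$, the cycle $Z$ is the tree path from $v$ to $u$ plus $e'$, and $|Z|$ can be $\Theta(n)$. The number of $OPT$ edges on $Z$ is only guaranteed to be at least $2$ (and at most $2k-2$), so uniform removal gives probability $\Theta(1/n)$, not $\Omega(1/k)$. There is no ``weight-aware'' distribution on tree edges of $Z$ that an algorithm can sample from which boosts this to $\Omega(1/k)$: tree edges carry no intrinsic weights tied to $OPT$, and any proxy like $w(E(T_e,V\setminus T_e))$ can be large on non-$OPT$ edges and small on $OPT$ edges. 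So this factor is $\Omega(1/n)$, and the product is at best $\Omega(1/n^2 k)$.

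\textbf{Steps (a)+(b) do not combine to $\Omega(1/nk)$ either.} Your inequality controls $w(E(C,S^*_i\setminus C))/w(\partial C)\ge 1/k$, where $\partial C$ counts only edges leaving $C$. But with uniform vertex sampling the relevant denominator is $w(\partial v)$, which also includes edges \emph{inside} $C$; a single heavy internal edge in $C$ drives $\sum_{v\in C}\tfrac{1}{n}\cdot \tfrac{w(E(\{v\},S^*_i\setminus C))}{w(\partial v)}$ to $0$. With edge-weighted vertex sampling (equivalently, sampling $e'$ proportional to $w(e')$ from all of $E(G)$) the probability of a good $e'$ is $w(E(C,S^*_i\setminus C))/(2w(E))$, and nothing bounds $w(E)$ by $O(nk)\cdot w(E(C,S^*_i\setminus C))$. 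The ``telescoping'' you describe collapses the per-vertex ratios into the wrong aggregate.

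For contrast, the paper reverses the order: it first deletes a uniformly random tree edge (hitting a specific \emph{deletion-worthy} edge $e$ with probability $\ge 1/(n-1)$), and then samples the replacement edge only from $E_G(T_e,V\setminus T_e)$ by weight. The point of ``deletion-worthy'' is that exactly one optimal part $S^*_i$ straddles this $2$-cut, so the non-$OPT$ edges in $E_G(T_e,V\setminus T_e)$ are precisely $E(C,S^*_i\setminus C)$ with $C=S^*_i\cap T_e$. This restricts the sampling universe to weight at most $w(OPT)+w(E(C,S^*_i\setminus C))$, and a $(k{+}1)$-cut averaging shows $w(E(C,S^*_i\setminus C))\ge \Omega(1/k^2)\,w(OPT)$, yielding the $\Omega(1/k^2)$ factor cleanly. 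The restriction of the sampling universe to the current $2$-cut is exactly the idea your approach is missing.
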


\begin{proof}
  Each successful iteration has two parts: first we must delete a
  ``deletion-worthy'' edge (which happens with probability $1/(n-1)$),
  and then we add a ``good'' connecting edge (which happens with
  probability $\Omega(1/k^2)$). The former just uses that a tree has
  $n-1$ edges, but the latter must use that there are many good edges
  crossing the resulting cut---a naive analysis may only give
  $\Omega(1/m)$ for the second part.

  We first
  describe the edges in $T$ that we would like to delete. These are the
  edges such that if we delete one of them, then we are likely to make a
  successful iteration (after selectively adding an edge back in). We
  call these edges \defn{deletion-worthy}.  Let us first root the tree
  $T=(V,E_T)$ at an arbitrary, fixed root $v_r\in V$. For any edge $e$,
  let $T_e$ denote the subtree below it obtained by deleting the edge
  $e$.

  \begin{definition}
    A \emph{deletion-worthy} edge $e\in E_T$ satisfies the following two
    properties:
    \begin{OneLiners}
    \item[(1)] The edge crosses between two parts of the optimal
      partition, i.e., $e\in E_T(S^*_1,\lds,S^*_k)$.

      \medskip
    \item[(2)] There is exactly one part $S^*_i \in \calS^*$ satisfying
      $S^*_i\cap T_e\ne\emptyset$ and $S^*_i-T_e\ne\emptyset$. In other
      words, exactly one component of $\m S^*$ intersects $T_e$ but is
      not completely contained in $T_e$. Note that, by condition (1),
      $S^*_i$ is necessarily split. 
    \end{OneLiners}
  \end{definition}

  \begin{claim}
    If there is a split component $S_i^*$, there exists a deletion-worthy
    edge $e\in E_T$.
  \end{claim}
  \begin{proof} For each $S^*_i$, contract every connected component of
    $S^*_i$ induced in $T$, so that split components contract to
    multiple vertices. Root the resulting tree at $v_r$, and take a
    vertex $v\in V$ of maximum depth whose corresponding component
    $S^*_i$ is split. It is easy to see that $v\ne v_r$ and the parent
    edge of $v$ in the rooted tree is deletion-worthy.
  \end{proof}
  Finally, we describe the deletion part of our algorithm. The procedure
  is simple: \emph{choose a random edge in $T$ to delete}. With
  probability $\ge1/(n-1)$, we remove a deletion-worthy edge in
  $T$. This gives rise to the $n\inv$ factor in the probability of a
  successful iteration.

  Now we show that, conditioned on deleting a deletion-worthy edge, we
  can selectively add an edge to produce a successful iteration with
  probability $k^{-O(1)}$. In particular, we add a random edge in
  $E_G(T_e,V-T_e)$---i.e., an edge from subtree under $e$ to the rest of
  the vertices---where the probability is weighted by the edge weights
  in $E_G(T_e,V-T_e)$. We show that this makes the iteration successful
  with probability $\Om(1/k^2)$. (Recall that the iteration is
  successful if the number of tree edges lying in the optimal cut
  decreases by~$1$.)

  First of all, it is clear that adding any edge in $E_G(T_e,V-T_e)$
  will get back a tree. Next, to lower bound the probability of success,
  we begin with an auxiliary lemma.

  \begin{claim}
    \label{clm:little-merge}
    Given a set of $k+1$ components $S_1,\lds,S_{k+1}$ that partition $V$,
    we have
    \[w(OPT)\le \bigg(1-\bn{k+1}2\inv\bigg)\cd w(E_G(S_1,\lds,S_{k+1})).\]
  \end{claim}
  \begin{proof}
    Consider merging two components $S_i,S_j$ uniformly at random. Every
    edge in $E(S_1,\lds,S_{k+1})$ has probability $\bn{k+1}2\inv$ of
    disappearing from the cut, so the expected new cut
    is \[\bigg(1-\bn{k+1}2\inv\bigg)\cd E(S_1,\lds,S_{k+1}),\] and
    $w(OPT)$ can only be smaller.
  \end{proof}

  For convenience, define $C:=S^*_i\cap T_e$, where $S_i^*$ is the split
  component corresponding to the deletion-worthy edge $e$ we just
  deleted. Observe that the only edges in $E(T_e,V-T_e)$ that are not in
  $OPT$ must be in $E(C,S^*_i-C)$; this is because, of the components
  $S^*_j$ intersecting $T_e$, only $S^*_i$ is split. Therefore,
  \[w(E(T_e,V-T_e))\le w(OPT)+w(E(C,S^*_i-C)),\]and the probability of
  selecting an edge in $E(C,S^*_i-C)$ is
  \begin{gather}
    \f{w(E(C,S^*_i-C))}{w(E(T_e,V-T_e))}\ge
    \f{w(E(C,S^*_i-C))}{w(OPT)+w(E(C,S^*_i-C))}.\label{eq:1}
  \end{gather}
  \begin{claim}
    $w(E(C,S^*_i-C)) \ge \lp \lp1-\bn{k+1}2\inv\rp\inv-1 \rp \cd w(OPT) =
    \Om(1/k^2)\cd w(OPT)$.
  \end{claim}
  \begin{proof}
    The set of edges $OPT \cup E(C,S^*_i-C)$ cuts the graph $G$ into
    $k+1$ components. Claim~\ref{clm:little-merge} implies this set has
    total weight $\geq \lp1-\bn{k+1}2\inv\rp\inv w(OPT)$. Observing that
    the edges of $OPT$ and $E(C,S^*_i-C)$ are disjoint from each other
    completes the proof.
  \end{proof}

  Using the above claim in~(\ref{eq:1}) means the probability of
  selecting an edge in $E(C,S^*_i-C)$ is $\Om(1/k^2)$. Hence the
  probability of an iteration being successful is $\Omega(1/(nk^2))$,
  completing the proof of Lemma~\ref{lem:success}.
\end{proof}

Since we have $\ell$ iterations, the probability that each of them is
successful is $\el^{-O(\el)}n^{-\el}$. If we repeat this algorithm
$\el^{O(\el)}n^\el\log n$ times, then with probability $1 - 1/\poly(n)$,
one of the final trees $T'$ will satisfy
$|E_{T'}(S^*_1,\lds,S^*_k)|=k-1$. We can remove the assumption of
knowing $\el$ by trying all possible values of $\el\in[k-1,2k-2]$,
giving a collection of $k^{O(k)}n^{k-1}\log n$ trees in running time
$k^{O(k)}n^{k-1}m\log n$. This completes the proof of
Lemma~\ref{lem:Merge}.

\subsection{Solving \kcut on ``Tight'' Trees}\label{sec:Remove}

In the previous section, we found a collection of $\approx n^k$ trees
such that, with high probability, the intersection of one of these trees
with the optimal $k$-cut $OPT$ consists of only $k-1$ edges. In this
section, we show that given this tree we can find the optimal $k$-cut in
time $\approx n^{\omega k/3}$. This will follow from
Lemma~\ref{lem:Remove} below. In this section, we restrict the edge
weights of our graph $G$ to be positive integers in $[W]$.

\begin{restatable}{lemma}{Remove}\label{lem:Remove}
  There is an algorithm that takes a tree $T$ and outputs, from among
  all partitions $\{S_1,\lds,S_k\}$ that satisfy
  $|E_T(S_1,\lds,S_k)|=k-1$, a partition
  $\m S^\dag:=\{S^\dag_1,\lds,S^\dag_k\}$ minimizing  the number of
  inter-cluster edges $E_G(S^\dag_1,\lds,S^\dag_k)$, 
  in time $\tO(k^{O(k)}n^{\lf(k-2)/3\rf\om+2+(k-2)\bmod3}W)$.
\end{restatable}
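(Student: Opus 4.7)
The plan is a $k$-clique-style reduction via fast matrix multiplication. Since $T$ is a tight \Ttree, any feasible partition is obtained by cutting exactly $k-1$ edges of $T$, producing $k$ subtrees whose pairwise cost $\sum_{i<j} w(E_G(S_i,S_j))$ is to be minimized. This pairwise-sum structure is exactly what enables a clique-type speed-up: think of each of the $k$ components as a ``super-vertex'' and of the pairwise weights as ``super-edges'' in a $k$-clique-cost problem.

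First I would enumerate the \emph{topology} of the $k-1$ cut edges, i.e.\ their ancestor/descendant poset in the rooted tree $T$, together with the labeling of the $k$ resulting components. There are at most $k^{O(k)}$ such topologies (rooted forests on $k$ labeled nodes with slot labels on the $k-1$ edges), and I solve the subproblem for each topology separately. After the topology is fixed, I apply color-coding (derandomized via $k$-perfect hash families, losing a further $k^{O(k)}\log n$ factor) so that the $k-1$ cut edges lie in pre-assigned colour classes. Each ``slot'' $i\in[k-1]$ is then associated with a known candidate set of tree-edges.

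Now comes the matrix-multiplication step. Partition the $k-1$ slots into three groups $A,B,C$ of sizes as close to $(k-1)/3$ as possible, handling the remainder (at most two slots) by an outer brute-force enumeration that contributes a factor of at most $n^{(k-2)\bmod 3}$. For each group $X\in\{A,B,C\}$, enumerate the at most $n^{|X|}$ tuples of candidate cut-edges consistent with the topology; each tuple determines the vertex content of the components assigned to $X$. For each pair of groups, precompute the contribution to $\sum_{i<j} w(E_G(S_i,S_j))$ coming from cross-edges between these components. The total cost now decomposes as a sum of three pairwise weights $\omega(A,B)+\omega(B,C)+\omega(A,C)$, so the best $k$-cut is exactly a \emph{min-weight triangle} in a tripartite weighted graph with parts of size $\approx n^{(k-1)/3}$. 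Since edge weights of $G$ lie in $[W]$, the triangle weights lie in $[n^2W]$, and min-weight triangle can be found in $\tO(N^{\omega}\,W)$ time by the standard reduction (e.g.\ binary-searching the weight and using boolean matrix multiplication for detection). Balancing the three group sizes and bookkeeping the off-by-$\{0,1,2\}$ remainder gives the claimed exponent $\lfloor(k-2)/3\rfloor\omega+2+((k-2)\bmod 3)$, with the $k^{O(k)}$ overhead coming from topology enumeration and color-coding derandomization.

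The main obstacle I anticipate is inter-group bookkeeping under nested cuts. When the $k-1$ cut edges are comparable in $T$, a component is a subtree \emph{with sub-subtrees carved out}, so its actual vertex set (and hence its contribution to pairwise weights) depends on choices made in \emph{all three} groups simultaneously, not just one. This threatens the clean decomposition of the total cost into three pairwise terms. The topology guess in the first phase should be fine-grained enough to fix, for every vertex of $T$, which of the $k$ components it belongs to as a function of the group-local choices. Verifying that the pairwise contributions then genuinely telescope --- so that the problem really reduces to min-weight triangle and fast matrix multiplication applies --- is the most delicate part of the argument.
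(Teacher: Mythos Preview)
Your reduction to min-weight triangle via matrix multiplication is the right engine, and for pairwise-incomparable cut edges it matches the paper verbatim. Where you and the paper diverge is on handling nested cuts.

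You propose enumerating the full ancestor/descendant topology of the $k-1$ cut edges and then running one global triangle computation. You correctly flag the obstacle---when slot $i$'s child slots land in other groups, the component $T_{e_i}\setminus\bigcup_c T_{e_c}$ is not determined by any single group, so your claim that ``each tuple determines the vertex content of the components assigned to $X$'' is false as stated---but you do not resolve it. In fact the cost \emph{does} telescope into terms depending on at most two slots: writing $w(\cut(F))=\tfrac12\sum_i w(\partial C_i)$ and expanding each $w(\partial C_i)$ by inclusion--exclusion over the (incomparable) child slots of $i$ yields only terms of the form $w(\partial T_{e_a})$, $w(E(T_{e_c},V\setminus T_{e_a}))$, and $w(E(T_{e_c},T_{e_{c'}}))$ for parent--child and sibling pairs in the fixed topology. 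Consistency with the guessed topology is also pairwise-checkable (it is a conjunction over pairs of slots of ``the ancestor/descendant/incomparable relation in $T$ matches that in the topology''), so the triangle reduction goes through. But this telescoping is precisely the nontrivial content of the lemma, and you leave it as an anticipated obstacle rather than an argument; as written, the proposal has a gap exactly where you say it does.

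The paper sidesteps the issue entirely by a bottom-up dynamic program instead of a single global triangle. Define $\st(e,s)$ as the minimum of $w(\cut(\{e\}\cup F'))$ over $F'\subseteq E(T_e)$ of size $s-1$, computed leaf-to-root. To evaluate $\st(e,s)$ one guesses the number $\ell$ and the budgets $s_1+\cdots+s_\ell=s-1$ of the \emph{maximal} cut edges $e_1,\dots,e_\ell$ directly below $e$ (a $k^{O(k)}$ overhead, not a full topology enumeration); these maximal edges are incomparable by construction, so the incomparable-case formula applies with $w(\partial T_{e_i})$ replaced by the already-computed value $w(\st(e_i,s_i))$, and the triangle trick runs cleanly at every DP node with no nesting to worry about. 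The stated exponent then comes from $O(kn)$ states, each solved by a triangle instance on $\approx n^{(k-2)/3}$ nodes with weights bounded by $O(knW)$, together with an outer brute-force over $(k-2)\bmod 3$ edges to replace the ceiling by a floor.
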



Given a tree $T = (V, E_T)$ and a set $F\s E_{T}$ of tree edges,
deleting these edges gives us a vertex partition
$\calS_F = \{S_1, \ldots, S_{|F|+1}\}$. Let $\cut(F)$ be the set of
edges in $G$ that go between the clusters in $\calS_F$; i.e., 
\begin{gather}
  \cut(F) := E(S_1, \ldots, S_{|F|+1}).\label{eq:cut-def}
\end{gather}
Put another way, these are the edges
$(u,v) \in E$ such that the unique $u$-$v$ path in $T$ contains an edge
in $F$. Note that Lemma~\ref{lem:Remove} seeks a set $E^\dag \s E_T$ of size
$k-1$ that minimizes $w(\cut(F))$.

\subsubsection{A Simple Case: Incomparable Edges}
\label{sec:incomp-rand}

Our algorithm builds upon the algorithm of Ne\v{s}et\v{r}il and
Poljak~\cite{nevsetvril1985complexity} for \kclique, using Boolean
matrix multiplication to obtain the speedup from the naive $O(n^k)$
brute force algorithm. It is instructive to first consider a restricted
setting to highlight the similarity between the two algorithms. This
setting is as follows: we are given a vertex $v_r\in V$ and the promise
that if the input tree $T=(V,E_T)$ is rooted at $v_r$, then the optimal
$k-1$ edges $E^\dag:=E_T(S^\dag_1,\lds,S^\dag_k)$ to delete are
\textit{incomparable}. By incomparable, we mean any root-leaf path in
$T$ contains at most one edge in $E^\dag$.

Like the algorithm of~\cite{nevsetvril1985complexity}, our algorithm
creates an auxiliary graph $H=(V_H,E_H)$ on $O(n^{\lc k/3\rc})$
nodes. Our graph construction differs slightly in that it always
produces a tripartite graph, and that this graph has edge weights. In
this auxiliary graph, we will call the vertices \emph{nodes} in order to
differentiate them from the vertices of the tree. 
\begin{itemize}
\item The nodes in graph $H$ will form a tripartition
  $V_1\cup V_2\cup V_3=V_H$. For each $r$, let $\m F_r\s 2^E$ be the
  family of all sets of exactly $r$ edges in $E_T$ that are pairwise
  incomparable in $T$. For each $i=1,2,3$, define
  $r_i:=\lf\frac{(k-1)+(i-1)}3\rf$ so that $r_1+r_2+r_3=k-1$. For each
  $i=1,2,3$ and each $F\in\m F_{r_i}$, add a node $v_i^F$ to $V_i$
  representing set $F$. 
\item Consider a pair $(V_a,V_b)$ of parts in the tripartition with
  $(a,b) \in (1,2), (2,3), (3,1)$. 
  Consider a pair of sets
  $F^a:=\{e^a_1,\lds,e^a_{r_a}\}\in\m F_{r_a}$,
  $F^b:=\{e^b_1,\lds,e^b_{r_b}\}\in\m F_{r_b}$; recall these are sets of
  $r_a$ and $r_b$ incomparable edges in $T$.  If the edges in $F^a$ are
  also pairwise incomparable with the edges in $F^b$, then add an edge
  $(v_a^{F^a},v_b^{F^b})\in V_a\times V_b$ of weight
\[w_H(v^{F^a}_a,v^{F^b}_b):=\sum_{i=1}^{r_a}w(E(T_{e^a_i},V-T_{e^a_i}))
  - \sum_{i=1}^{r_a}\sum_{j=i+1}^{r_a}w(E(T_{e_i^a},T_{e_j^a})) -
  \sum_{i=1}^{r_a}\sum_{j=1}^{r_b}w(E(T_{e^a_i},T_{e^b_j})).\]
\end{itemize}

Observe that every triple of nodes in graph $H$ that form a triangle
together represent $r_1+r_2+r_3=k-1$ many incomparable edges. Moreover,
the weights are set up so that for any triangle
$(v^{F^1}_1,v^{F^2}_2,v^{F^3}_3)\in V_1\times V_2\times V_3$ such that
$F:=F^1\cup F^2\cup F^3=\{e_1,\lds,e_{k-1}\}$, the total weight of the
edges is equal
to
  \begin{gather}
  w_H(v^{F^1}_1,v^{F^2}_2)+w_H(v^{F^2}_2,v^{F^3}_3)+w_H(v^{F^3}_3,v^{F^1}_1)=\sum_{i=1}^{k-1}
  w(E(T_{e_i},V-T_{e_i}))-\sum_{i=1}^{k-1}\sum_{j={i+1}}^{k-1}w(E(T_{e_i},T_{e_j})).
  \label{eq:TriangleWeight}\end{gather}
A straightforward counting argument shows that this is exactly
$w(E(T_{e_1},\lds,T_{e_{k-1}}))=\cut(F)$, the solution value of cutting
the edges in $F$.

Hence, the problem reduces to computing a minimum weight triangle in
graph $H$. While the minimum weight triangle problem is unlikely to
admit an $O(N^{3-\e})$ time algorithm on a graph with $N$ vertices with
arbitrary edge weights, the problem does admit an $\tO(MN^{\om})$ time
algorithm when the graph has integral edge weights in the range
$[-M,M]$~\cite{williams2010subcubic}. Since the original graph $G$ has
integral edge weights in $[W]$, the edge weights in $H$ must be in the
range $[-O(Wm),O(Wm)]$. Therefore, we can set $N:=O(n^{\lc (k-1)/3\rc})$
and $M:=O(Wm)$ to obtain an $\tO(Wn^{\lc (k-1)/3\rc\om}m)$ time
algorithm in this restricted setting.

\subsubsection{The General Algorithm}
\label{sec:general-rand}

\newcommand{\st}{{\small\textsf{State}}}

Now we prove Lemma~\ref{lem:Remove} in full generality, and show how to
find $E^\dag$. The ideas we use here will combine the
matrix-multiplication idea from the restricted case of incomparable
edges, together with dynamic programming.
\begin{quote}
  Given a tree edge $e\in E_T$, and an integer $s\in[k-2]$, let
  $\st(e,s)$ denote a set of edges~$F$ in subtree $T_e$ such that
  $|F|=s-1$ and $\cut(\{e\}\cup F)$ is minimized.
\end{quote}
In other words, $\st(e,s)$ represents the optimal way to cut edge $e$
along with $s-1$ edges in $T_e$. For ease of presentation, we assume
that this value is unique. Observe that, once all of these states are computed, the remaining problem boils down to choosing an integer $\el\in[k-1]$, integers $s_1,\lds,s_\el$ whose sum is $k-1$, and incomparable edges $e_1,\lds,e_\el$ that minimizes
\[ \cut \lp \bigcup_{i=1}^\el \st(e_i,s_i) \rp= \sum_{i=1}^{k-1}\st(e_i,s_i) - \sum_{i=1}^{k-1}\sum_{j=i+1}^{k-1}w(E(T_{e_i},T_{e_j})) .\]
Comparing this expression to (\ref{eq:TriangleWeight}) suggests that this problem is similar to the incomparable case in \S\ref{sec:incomp-rand}, a connection to be made precise later.

  We now compute states for all edges
$e\in E_T$, which we do from bottom to top (leaf to root).  When $e$ is
a leaf edge, the states are straightforward: $\st(e,1)=\cut(\{e\})$ and
$\st(e,s)=\infty$ for $s>1$. Also, for each edge $e\in E_T$, define
$\desc(e)$ to be all ``descendant edges'' of $e$, formally defined as
all edges $f\in E_T-e$ whose path to the root contains edge $e$.

Fix an edge $e\in E_T$ and an $s\in[k-2]$, for which we want to compute
$\st(e,s)$. Suppose we order the edges in $T_e$ in an arbitrary but
fixed order. Let us now figure out some properties for this (unknown)
value of $\st(e,s)$.  As a thought experiment, let $F^\dag$ be the list
of all the ``maximal'' edges in $\st(e,s)$---in other words,
$f\in F^\dag$ iff $f \in \st(e,s)$ and $f\notin \desc(f')$ for all
$f'\in\st(e,s)$. Let $\el^\dag:=|F^\dag|$ and
$F^\dag=(e^\dag_1, \lds, e^\dag_{\el^\dag})$ be the sequence in the
defined order, and for each $e^\dag_i$, let
$s^\dag_i:=1+|\desc(e^\dag_i) \cap \st(e,s) |$. Observe that
$\sum_is^\dag_i=s-1$, and that we must satisfy
\begin{gather}
  \st(e,s)=\bigcup_{i=1}^{\el^\dag}\lp\{e^\dag_i\}\cup\st(e^\dag_i,s^\dag_i)\rp.
\end{gather}
Also,
\begin{align*}
w(\st(e,s))= E(T_e,V-T_e) &+ \sum_{i=1}^{\el^\dag}
  w(E(G[T_e]) \cap \cut\lp\{e^\dag_i\}\cup\st(e^\dag_i,s^\dag_i)\rp) \\
&-
  \sum_{i=1}^{\el^\dag}
  \sum_{j=i+1}^{\el^\dag}w(E_{G[T_e]}[T_{e^\dag_i},T_{e^\dag_j}]), \end{align*}
since the only edges double-counted in the first summation of
$w(\st(e,s))$ are those connecting different
$T_{e^\dag_i},T_{e^\dag_j}$. 

Given these ``ideal'' values $\el^\dag$ and $\{s_i^\dag\}$, our
algorithm repeats the following procedure multiple times:
\begin{OneLiners}
 
\item Pick a number $\el$ uniformly at random in $[s-1]$.  Then, let
  function $\sigma:[\el]\to[s-1]$ be chosen uniformly at random among all
  $\le(s-1)^\el$ possible functions satisfying
  $\sum_{i=1}^\el \sigma(i)=s-1$. With probability
  $\ge(s-1)^{-(\el^\dag+1)}=k^{-O(k)}$, we correctly guess
        $\el=\el^\dag$ and $\sigma(i)=s^\dag_i$ for each $i\in[\el]$.\footnote{Of course, we could instead brute force over all $k^{O(k)}$ possible choices of $\el$ and $\sigma$.}
\item Construct an auxiliary graph $H$ as follows. As in
  \S\ref{sec:incomp-rand}, $H$ has a tripartition
  $V_1\cup V_2\cup V_3=V_H$, and assume there is an arbitrary but fixed
  total ordering on the edges of the tree. For each $r$, let
  $\m F_r\s 2^E$ be the family of all sets of exactly $r$ edges in $E_T$
  that are pairwise incomparable in $T$. For each $i=1,2,3$, let
  $r_i:=\lf\frac{\el+(i-1)}3\rf$ so that $r_1+r_2+r_3=\el$, and for each
  $F\in\m F_{r_i}$, add a node $v_i^F$ to $V_i$ representing the edges
  $F$ \textit{as a sequence in the total order}.

  Also, define $R_i:=\sum_{j=1}^{i-1}r_i$ for $i=1,2,3,4$. Note that
  $R_1=0$ and $R_4 = r_1 +r_2+r_3 = \ell$.  Our intention is map the integer values
  $\{\sigma(R_i+1), \sigma(R_i+2), \ldots, \sigma(R_{i+1})\}$ to the
  sequences represented by nodes in $V_i$, as we will see
  later. Consider each tripartition pair $(V_a,V_b)$ with $(a,b) \in
  (1,2), (2,3), (3,1)$. For each pair $F^a\in\m F_{r_a}$, $F^b\in\m F_{r_b}$ represented
  as ordered sequences $F^a = (e^a_1,\lds,e^a_{r_a})$ and
  $F^b = (e^b_1,\lds,e^b_{r_b})$, if the edges in $F^a$ are pairwise
  incomparable with the edges in $F^b$, then add an edge
  $(v_a^{F^a},v_b^{F^b})\in V_a\times V_b$ in the auxiliary graph of
  weight
  \begin{multline}
    w_H(v^{F^a}_a,v^{F^b}_b):=\sum_{i=1}^{r_a}w\Big(\st\big(e^a_i,{\sigma}(R_a+i)\big)\Big) \\
    -
    \sum_{i=1}^{r_a}\sum_{j=i+1}^{r_a}w(E_{G[T_e]}(T_{e_i^a},T_{e_j^{a}}))
    -
    \sum_{i=1}^{r_a}\sum_{j=1}^{r_b}w(E_{G[T_e]}(T_{e^a_i},T_{e^b_j})).
  \end{multline}
  For any triangle
  $(v^{F^1}_1,v^{F^2}_2,v^{F^3}_3)\in V_1\times V_2\times V_3$ such that
  $F:=F^1\cup F^2\cup F^3$ has ordered sequence $(e_1,\lds,e_\el)$, the
  total weight of the edges is equal to
  \begin{multline}
    w_H(v^{F^1}_1,v^{F^2}_2)+w_H(v^{F^2}_2,v^{F^3}_3)+w_H(v^{F^3}_3,v^{F^1}_1)
     \\=\sum_{i=1}^\el
    w(\st(e^a_i,{\sigma}(i)))-\sum_{i=1}^\el\sum_{j={i+1}}^\el
    w(E_{G[T_e]}(T_{e_i},T_{e_j})).
  \end{multline}
\end{OneLiners}

A straightforward counting argument shows that this is
exactly
\[w\Big(\cut\big(\{e\}\cup \bigcup_{i=1}^\el\st(e_i,\sigma(i))\big)\Big)
- w(E(T_e,V-T_e)).\] Thus, the weight of each triangle, with
$w(E(T_e,V-T_e))$ added to it, corresponds to the cut value of one
possible solution to $\st(e,s)$. Moreover, if we guess $\ell$ and
$\sigma: [\ell] \to [s-1]$ correctly, then this triangle will exist in
auxiliary graph $H$, and we will compute the correct state if we compute
the minimum weight triangle in $\tO(Wn^{\lc \el/3\rc\om}m)$ time.  Since
the probability of guessing $\ell, \sigma(\cdot)$ correctly is
$k^{-O(k)}$, we repeat the guessing $k^{O(k)}\log n$ times to succeed
w.h.p.\ in time $\tO(k^{O(k)}n^{\lc (k-2)/3\rc\om}mW)$. This concludes the computation of each $\st(e,s)$; since there are $O(kn)$ such states, the total running time becomes $\tO(k^{O(k)}n^{\lc(k-2)/3\rc\om+1}mW)$. 

Lastly, to compute the final \kcut value, we let $s:=k-1$ and construct the same auxiliary graph $H$, except that $k-2$ is replaced by $k-1$ and the relevant graph $G[T_e]$ becomes the entire $G$. By the same counting arguments, the weight of triangle
  $(v^{F^1}_1,v^{F^2}_2,v^{F^3}_3)\in V_1\times V_2\times V_3$ such that
  $F:=F^1\cup F^2\cup F^3$ has ordered sequence $(e_1,\lds,e_\el)$ is exactly
\[ w\Big(\cut\big(\{e\}\cup \bigcup_{i=1}^\el\st(e_i,\sigma(i))\big)\Big)
 .\]
Again, by repeating the procedure $k^{O(k)}\log n$, we compute an optimal \kcut w.h.p., in time $\tO(k^{O(k)}n^{\lc (k-1)/3\rc\om}mW)$. Note that this time is dominated by the running time $\tO(k^{O(k)}n^{\lc (k-2)/3\rc\om+1}mW)$ of computing the states.

In order to get the runtime claimed in Theorem~\ref{thm:FasterKCut}, we
need a couple more ideas---however, they can be skipped on the first
reading, and we defer them to the Appendix~\ref{sec:app-random}.


\section{A Faster Deterministic Algorithm}
\label{sec:det-algo}

\newcommand{\map}[1]{\sigma(#1)}

In this section, we show how to build on the randomized algorithm of the
previous section and improve it in two ways: we give a deterministic
algorithm, with a better asymptotic runtime. (The algorithm of the
previous section has a better runtime for smaller values of $k$.)
Formally, the main theorem of this section is the following:

\EvenFasterKCut*

Our main idea is a more direct application of matrix multiplication,
without paying the $O(n^k)$ overhead in the previous section.
Instead of converting a given T-tree to a ``tight'' tree where
matrix multiplication can be combined with dynamic programming,
with only $n^{O(\log k)}$ overhead, 
we partition the given T-tree to subforests that are amenable to
direct matrix multiplication approach.

As in \S\ref{sec:randomized} we build on the framework of
Thorup~\cite{Thorup08}, where the \kcut problem reduces to $n^{O(1)}$
instances of the following problem: given the graph $G$ and a spanning
tree $T$, find a way to cut $\le2k-2$ edges from $T$, and then merging
the connected components of $T$ into $k$ connected components, that
minimizes the number of cut edges in $G$. Again, the optimal $k$-cut is
denoted by $\calS^* = \{S_1^*, \ldots, S_k^*\}$. 

For the rest of this section, let $T$ be some spanning tree in the
instance that crosses the optimal $k$-cut in $(r-1) \leq 2k-2$ edges.
If we delete these $r-1$ edges from $T$, this gives us $r$ components,
which we denote by $C^*_1,\lds,C^*_r$ --- these are a refinement of
$\calS^*$, and hence can be then be merged together to give us
$\calS^*$. Let $E_T^*:=E_T(C_1^*,\lds,C^*_r) = E_T(S_1^*,\lds,S^*_k)$ be
these $r-1$ cut edges in $T$. 

\subsection{Balanced Separators}

We first show the existence of a small-size \textit{balanced separator}
in the following sense: there exist forests $F_1,F_2,F_3$ whose vertices
partition $V(T)$, such that
\begin{OneLiners}
  \item[(i)] we can delete $O(\log k)$ edges in $T$ to get the forests,
  i.e., $|E(T)-\bigcup_{i=1}^3E(F_i)|=O(\log k)$, and 
\item[(ii)] we want to cut few edges from each forest, i.e.,
  $|E(F_i)\cap E_T^*|\le\lc 2k/3\rc$ for each $i$.
\end{OneLiners}
Of course, small-size balanced edge separators typically do not exist in
general trees, such as if the tree is a star. So we first apply a
degree-reducing step. This operation reduces the maximum degree of the
tree to $3$, at a cost of introducing ``Steiner'' vertices, which are
handled later.

\begin{lemma}[Degree-Reduction]
  \label{lem:DegreeReduce}
  Given a tree $T = (V_T,E_T)$, we can construct a tree
  $T' = (V_{T'}, E_{T'})$, where $V_{T'} = V_T \cup X$, where $X$ are called the
  Steiner vertices, such that
  \begin{OneLiners}
  \item[1.] $T'$ has maximum degree $3$.
  \item[2.] $|V(T')|\le 2|V(T)|$
  \item[3.] For every way to cut $r$ edges in $T$ and obtain components
    $C_1,\lds,C_{r+1}$, there is a way to cut $r$ edges in $T'$ and
    obtain components $C'_1,\lds,C'_{r+1}$ such that each $C_i$ is
    precisely $C_i' \cap V_T$. 
  \end{OneLiners}
\end{lemma}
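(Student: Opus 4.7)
The plan is to build $T'$ by rooting $T$ at an arbitrary vertex $v_r$ and then, for every vertex $v$ with more than two children in this rooting, replacing the ``star'' connecting $v$ to its children with a binary-tree gadget. Concretely, if $v$ has children $c_1,\dots,c_d$ with $d\ge 2$, I construct an arbitrary rooted binary tree $B_v$ with $d$ leaves, identify the root of $B_v$ with $v$ itself, identify the leaves of $B_v$ with $c_1,\dots,c_d$, and add the remaining $d-2$ internal nodes of $B_v$ to the set $X$ of fresh Steiner vertices. The resulting graph $T'$ is a tree, and each original edge $(v,c_i)\in E_T$ is replaced by a unique path $P_{v,c_i}$ in $T'$ whose internal vertices are all Steiner vertices.

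For property~1, I would check degrees case by case: a non-root non-leaf $v\in V_T$ has at most two neighbors inside its gadget $B_v$ plus one edge to its parent gadget, so degree at most $3$; the root $v_r$ gains at most $3$ gadget neighbors and has no parent; an original leaf retains degree $1$; and each Steiner vertex is an internal non-root node of some $B_v$ and hence has exactly three gadget neighbors. For property~2, I use that a rooted binary tree with $d$ leaves has $d-1$ internal nodes; identifying one of them with $v$ contributes $\max(d_v^{\mathrm{ch}}-2,0)$ Steiner vertices, where $d_v^{\mathrm{ch}}$ is the number of children of $v$. Summing, the total number of Steiner vertices is at most $\sum_v d_v^{\mathrm{ch}}=|V_T|-1$, giving $|V(T')|\le 2|V_T|-1$.

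For property~3, given a cut $F\subseteq E_T$ with $|F|=r$ producing components $C_1,\dots,C_{r+1}$, I form $F'\subseteq E_{T'}$ by replacing each edge $(v,c_i)\in F$ (where $v$ is the parent of $c_i$) with the unique edge of $P_{v,c_i}$ that is incident to $c_i$. Because each such ``child-adjacent'' edge of $T'$ is associated with exactly one child $c_i$, these choices are pairwise distinct, so $|F'|=r$ and $T'\setminus F'$ has exactly $r+1$ components $C'_1,\dots,C'_{r+1}$. To verify $C'_j\cap V_T=C_j$, I would argue at the level of paths: for any $u,w\in V_T$, the unique $u$--$w$ path in $T'$ is obtained from the unique $u$--$w$ path in $T$ by expanding each traversed edge $(v,c_i)$ into the full path $P_{v,c_i}$; this $T'$-path meets $F'$ if and only if one of the expanded edges $(v,c_i)$ lies in $F$, since the only edges of $F'$ lying on $P_{v,c_i}$ are the terminal one at $c_i$, and any $T'$-path that enters $c_i$'s subtree via $P_{v,c_i}$ must use that terminal edge. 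Hence $u$ and $w$ are in the same component of $T\setminus F$ iff they are in the same component of $T'\setminus F'$, giving the desired correspondence.

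I do not anticipate a serious obstacle — the construction is a standard binary-tree gadget — but the one point to state carefully is the rule for which edge of $P_{v,c_i}$ is cut: a naive choice (for example, the edge of $P_{v,c_i}$ incident to $v$) would ``contaminate'' sibling paths sharing Steiner vertices and accidentally disconnect multiple children at once, whereas cutting the edge adjacent to $c_i$ localizes the effect and guarantees the bijection $C_j\leftrightarrow C'_j$.
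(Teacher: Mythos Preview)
Your proposal is correct and follows essentially the same construction as the paper: root $T$, replace each high-degree star by a binary tree with the original children as leaves, and for property~3 cut the parent edge of each child $c_i$ in $T'$ (exactly your ``edge of $P_{v,c_i}$ incident to $c_i$''). Your write-up is in fact more detailed than the paper's, which simply asserts property~3 without the path-based justification you supply.
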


\begin{proof}
  Root the tree $T$ at an arbitrary root, and select any non-Steiner
  vertex $v\in V_T$ with more than two children. Replace the star
  composed of $v$ and its children with an arbitrary binary tree with
  $v$ as the root and its children as the leaves. This process does not
  introduce any new vertex with more than two children, so we can repeat
  it until it terminates, giving us a tree $T'$ of maximum degree
  $3$. Every star of $z$ edges adds exactly $z-1$ Steiner nodes, and
  there are $\le|V_T|-1$ edges initially, so $\le |V_T|-2$ Steiner
  vertices are added throughout the process, and $|V_{T'}|\le
  2|V_{T}|$. Finally, if we cut some $r$ edges $(u_i,v_i)\in E_T$ where
  $v_i$ is the parent of $u_i$, then we can cut the parent edge of each
  $u_i$ in $T'$ to obtain the required components.
\end{proof}

Having applied Lemma~\ref{lem:DegreeReduce} to $T$ to get $T'$,
Property~(3) shows that we can still delete $\le2k-2$ edges in $T'$ to
obtain the components of the optimal solution before merging. To avoid
excess notation, we assume that $T$ itself is a tree of degree $\le3$,
possibly with Steiner nodes. From now on, our task is to delete
$\le2k-2$ edges of $T$ and merge them into $k$ components, \textit{each
  of which containing at least one non-Steiner vertex}, that minimizes
the number of cut edges in $G$. To show that the aforementioned forests
$F_1,F_2,F_3$ exist in the new tree $T$, we introduce the following
easy lemma:

\begin{lemma}\label{lem:BalancedSeparator}
  Let $T$ be a tree of degree $\le 3$ and $F\s E(T)$ be a subset of the
  edges. For any integer $r \in [1,|F|-1]$, there exists a vertex
  partition $A,B$ of $V(T)$ such that $|E_T(A,B)|=O(\log (r+1))$, and the
  induced subgraphs $T[A]$ and $T[B]$ have at most $r$ and at most $|F|-r$ edges
  from $F$, respectively.
\end{lemma}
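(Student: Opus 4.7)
The plan is to build $(A, B)$ by a recursive ``peel-a-subtree'' strategy, halving the remaining $F$-target in each step while cutting only one edge per step. Root $T$ at an arbitrary leaf and set $f(v) := |F \cap E(T_v)|$. Since $f(\mathrm{root}) = |F| > r$, I would walk down from the root, always descending to the child with larger $f$-value, and stop at the first vertex $v$ with $f(v) \le r$. The key structural claim is $f(v) \in [(r-1)/2,\, r]$. The upper bound is immediate. For the lower bound, rooting at a leaf guarantees that every vertex in the walk has at most two children (non-root vertices in the degree-$3$ tree have $\le 2$ children, and the leaf-root has exactly $1$). Letting $v'$ be the parent of $v$: if $v'$ has a single child, then $f(v') \le f(v)+1$ together with $f(v') \ge r+1$ forces $f(v) = r$; otherwise the sibling $c'$ has $f(c') \le f(v)$ by our choice, so
\[ r+1 \;\le\; f(v') \;=\; f(v) + f(c') + \1[e_v \in F] + \1[e_{c'} \in F] \;\le\; 2f(v) + 2, \]
giving $f(v) \ge (r-1)/2$.

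Given such $v$, I would cut $e_v$, assign $V(T_v)$ to $A$, and recurse on $T' := T \setminus T_v$ with target $r' := r - f(v) \le \lceil r/2 \rceil$, re-rooting $T'$ at $\mathrm{parent}(v)$. The re-rooting enforces a bookkeeping invariant: since the recursion only ever places non-root subtrees into $A$, the root of $T'$ (namely $\mathrm{parent}(v)$) lands in $B$, which guarantees $e_v \in E_T(A,B)$ and that the subtrees assigned to $A$ across different recursion levels are pairwise non-adjacent in $T$. Under this invariant, $|F \cap E(T[A])| = f(v) + r' = r$ and $|F \cap E(T[B])| \le |F|-r$, and $E_T(A,B)$ is exactly the union of cut edges across recursion levels. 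Moreover, after cutting $e_v$ the new root of $T'$ has lost its edge to $v$ and so still has at most two children, so the same analysis applies inductively. The recurrence $g(r) \le 1 + g(\lceil r/2 \rceil)$ then solves to $g(r) = O(\log(r+1))$.

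The main obstacle is the degeneracy of the lower bound $f(v) \ge (r-1)/2$ when $r \in \{1,2\}$, where $f(v)$ could be $0$ and a single recursion step may fail to decrease the target. I would treat these as explicit base cases: for $r=1$, pick any $F$-edge $(u,w)$ and cut the at most four remaining edges incident to $\{u,w\}$ (using $\deg \le 3$), isolating the pair as a component of $A$ containing exactly one $F$-edge; the case $r=2$ is handled analogously with $O(1)$ cuts. Combined with the halving recurrence for $r \ge 3$, this yields the claimed $O(\log(r+1))$ bound on $|E_T(A,B)|$.
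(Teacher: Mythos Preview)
Your approach is the same as the paper's: peel off a subtree carrying between roughly $r/2$ and $r$ edges of $F$, recurse on the remainder with the target roughly halved, and count cuts. There is, however, a real gap in the recursion. For the walk to be well-defined you need $f(\text{root of }T') > r'$, but your update rules give
\[
|F'| - r' \;=\; \big(|F| - f(v) - \1[e_v \in F]\big) - \big(r - f(v)\big) \;=\; (|F|-r) - \1[e_v \in F],
\]
so the surplus $|F'|-r'$ shrinks by one whenever the cut edge $e_v$ lies in $F$. Once it reaches zero the walk stops at the root (which has $f(\text{root})=|F'|=r'$), and there is no parent edge to cut. Concretely: take $r=5$, $|F|=6$, root $v_0$ (a leaf) with child $v_1$, and let $v_1$ have two children $v_2,v_3$ each with $f(v_i)=2$ and $(v_1,v_i)\in F$, while $(v_0,v_1)\notin F$. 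Your first step peels $T_{v_2}$ (with $f(v_2)=2$, matching your lower bound $(r-1)/2=2$) and cuts $e_{v_2}\in F$, leaving $r'=3$ and $|F'|=3$; the next walk stalls at the new root $v_1$, and since $r'\ge 3$ your base cases do not trigger.

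The paper's proof sidesteps this by absorbing the parent edge into the peeled count: it works with $E^+(T_u):=E(T_u)\cup\{(u,v)\}$ and sets $r':=r-|E^+(T_u)\cap F|$. That one change gives the cleaner lower bound $|E^+(T_u)\cap F|>r/2$ (so no ad hoc base cases for small $r$) and, crucially, makes $|F'|-r'=|F|-r$ an invariant of the recursion, so the walk never stalls. Your argument is easily repaired the same way.
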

\begin{proof}
  We provide an algorithm that outputs a collection of $O(\log r)$
  disjoint subtrees whose union comprises $A$. Root $T$ at a degree-$1$
  vertex, and find a vertex of maximal depth whose rooted subtree
  contains $>r$ edges from $F$. The degree condition ensures that $v$
  has $\le2$ children, and by maximality, all of $v$'s children have
  $\le r$ edges in $F$ in their subtrees. Moreover, the edges in $T_v$
  is precisely the union of the edge sets $E(T_u)\cup\{(u,v)\}$ for all
  children $u$ of $v$. For convenience, define
  $E^+(T_u):=E(T_u)\cup\{(u,v)\}$ for a child $u$ of $v$. So there must
  be a child $u$ of $v$ satisfying $|E^+(T_u) \cap F| \in (r/2,r]$.

  If $|E^+(T_u) \cap F|=r$, then $(A,B)=(V(T_u),V(T)-V(T_u))$ is a
  satisfying partition with $|E_T(A,B)|=1$, and we are done. Otherwise,
  recurse on the tree $T':=T[V(T)-V(T_u)]$ where we remove $(u,v)$ and
  the subtree below it, with the parameters $r':=r-|E^+(T_u)\cap F|$ and
  $F':= F \setminus E^+(T_u)$ to get partition $(A',B')$, and set
  $A:=A'\cup V(T_u)$ and $B:=B'$. By recursion, we guarantee that
  \begin{align*}
    |E(T[A])\cap F|&\le|E(T[A'])\cap F'|+|E^+(T_u)\cap F|\\
                   &\le (r-|E^+(T_u)\cap F|) + |E^+(T_u)\cap F| =r
  \end{align*}
  and
  \begin{align*}
    |E(T[B])\cap F'|=|E(T[B'])\cap F'|&\le |F'|-(r-|E^+(T_u)\cap F|)\\
                                      &= (|F|-|E^+(T_u)\cap F|)-(r-|E^+(T_u)\cap F|)
                                        =|F|-r.
  \end{align*}
  Since the value of $r$ drops by at least half each time, there are
  $O(\log r)$ steps of the recursion. Each step can only add the
  additional edge $(u,v)$ to $|E_T(A,B)|$, so $|E_T(A,B)|=O(\log r)$.
\end{proof}

\begin{corollary}\label{cor:BalancedTripartition}
  There exist forests $F_1,F_2,F_3$ whose vertices partition $V(T)$ such
  that
  \begin{OneLiners}
  \item[(i)] the number of crossing edges is
    $|E(T)-\bigcup_{i=1}^3E(F_i)|=O(\log |E_T^*|)$, and
  \item[(ii)] $|E(F_i)\cap E_T^*|\le\lc |E_T^*|/3\rc$ for each $i$.
  \end{OneLiners}

\end{corollary}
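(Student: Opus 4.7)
The plan is to apply Lemma~\ref{lem:BalancedSeparator} twice, each time with parameter $r = \lceil |E_T^*|/3 \rceil$. Write $s := |E_T^*|$ and assume $s \ge 2$ (the cases $s \le 1$ are trivial: take $F_1 := T$ and $F_2, F_3$ empty). First, invoke Lemma~\ref{lem:BalancedSeparator} on $T$ with $F := E_T^*$ and $r := \lceil s/3 \rceil$, obtaining a partition $(A_1, B_1)$ of $V(T)$ with $|E_T(A_1, B_1)| = O(\log s)$, $|E(T[A_1]) \cap E_T^*| \le \lceil s/3 \rceil$, and $|E(T[B_1]) \cap E_T^*| \le \lfloor 2s/3 \rfloor$. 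Set $F_1 := T[A_1]$.

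The key structural observation, which I extract from the \emph{proof} (not merely the statement) of Lemma~\ref{lem:BalancedSeparator}, is that $T[B_1]$ is a connected tree rather than a generic forest: the algorithm there assembles $A_1$ as a disjoint union of rooted subtrees peeled off from $T$, leaving the root and everything above each peeled subtree in $B_1$. Since $T$ has maximum degree $\le 3$, so does $T[B_1]$, so I can apply Lemma~\ref{lem:BalancedSeparator} a second time to it. If $|E(T[B_1]) \cap E_T^*| \le \lceil s/3 \rceil$, set $F_2 := T[B_1]$ and $F_3 := \emptyset$. Otherwise apply Lemma~\ref{lem:BalancedSeparator} to the tree $T[B_1]$ with $F' := E(T[B_1]) \cap E_T^*$ and $r' := \lceil s/3 \rceil$ (valid since $|F'| > r'$), producing $(A_2, B_2)$ partitioning $B_1$ with $|E_{T[B_1]}(A_2, B_2)| = O(\log s)$, $|E(T[A_2]) \cap E_T^*| \le \lceil s/3 \rceil$, and $|E(T[B_2]) \cap E_T^*| \le \lfloor 2s/3 \rfloor - \lceil s/3 \rceil \le \lceil s/3 \rceil$ (a small case check on $s \bmod 3$). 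Taking $F_2 := T[A_2]$ and $F_3 := T[B_2]$, the three vertex sets $A_1, A_2, B_2$ partition $V(T)$, each $F_i$ has at most $\lceil |E_T^*|/3 \rceil$ edges of $E_T^*$, and the only edges absent from $\bigcup_i E(F_i)$ are those from the two applications of the lemma, totaling $O(\log |E_T^*|)$.

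The step I expect to be the main obstacle is justifying that the second application of Lemma~\ref{lem:BalancedSeparator} is legal, i.e., that $T[B_1]$ is itself a tree. Working only from the statement of the lemma, $T[B_1]$ would have to be handled as a generic forest, and a direct greedy bin-packing of its component trees into two bins of capacity $\lceil s/3 \rceil$ need not succeed even when every component has $\le \lceil s/3 \rceil$ edges of $E_T^*$ and the total is $\le \lfloor 2s/3 \rfloor$ (three equal-size components slightly exceeding $s/6$ already form such an obstruction). Repairing this by performing multiple additional splits on individual components risks blowing past the $O(\log s)$ crossing-edge budget. Opening the proof of Lemma~\ref{lem:BalancedSeparator} to see that $A_1$ is obtained by peeling off rooted subtrees, so that $T[B_1]$ stays connected, turns the second step into just another black-box invocation of the same lemma.
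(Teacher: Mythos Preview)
Your proof is correct and follows the same two-step plan as the paper: apply Lemma~\ref{lem:BalancedSeparator} with $r=\lceil |E_T^*|/3\rceil$, peel off $F_1$, then apply the lemma once more on the $B$-side. The arithmetic check $\lfloor 2s/3\rfloor-\lceil s/3\rceil\le\lceil s/3\rceil$ and the handling of the degenerate case $|F'|\le\lceil s/3\rceil$ are both fine.

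Where you diverge from the paper is in how you make the second application legal. The paper does \emph{not} argue that $T[B_1]$ is connected; instead it treats $T[B_1]$ as a forest, adds a set $F^+$ of auxiliary edges to reconnect it into a tree, applies Lemma~\ref{lem:BalancedSeparator} to that auxiliary tree, and then strips $F^+$ out of the resulting $F_2,F_3$. Since $|F^+|\le|E_T(A_1,B_1)|=O(\log s)$, this does not hurt the crossing-edge budget. Your route---opening the proof of Lemma~\ref{lem:BalancedSeparator} to see that $A_1$ is a disjoint union of rooted subtrees, so that $T[B_1]$ is already a tree---is a legitimate and cleaner alternative: it avoids the auxiliary edges entirely and with them the (minor) worry of preserving the degree-$3$ bound when reconnecting. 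The trade-off is modularity: the paper's argument uses Lemma~\ref{lem:BalancedSeparator} purely as a black box, whereas yours depends on a structural byproduct of its proof.
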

\begin{proof}
  We apply Lemma~\ref{lem:BalancedSeparator} with $F:=E_T^*$ and
  $r:=\lc|E_T^*|/3\rc$ to obtain the separation $(A,B)$, and then set
  $F_1:=T[A]$. Before applying the lemma again on $B$, we first connect
  the connected components of $B$ arbitrarily into a tree; let $F^+$
  denote the added edges. Then, we apply with $F:=E_T^*-E[F_1]$ and
  $r:=\lc|E_T^*|/3\rc$ to obtain separation $(A',B')$, and then set
  $F_2:=T[A']-F^+$ and $F_3:=T[B']-F^+$.
\end{proof}

Given this result, our algorithm starts by trying all possible
$n^{O(\log k)}$ ways to delete $O(\log k)$ edges of $T$ and partition
the connected components into three forests. By
Corollary~\ref{cor:BalancedTripartition}, one of these attempts produces
the desired $F_1,F_2,F_3$ satisfying the two properties.

\subsection{Matrix Multiplication}
\label{sec:mat-mult}

The balanced partitioning procedure from the previous section gives us
three forests $F_1,F_2,F_3$, such that the optimal solution cuts at most
$2k/3$ edges in each --- and then combines the resulting pieces
together. The algorithm now computes these solutions separately for each
forest, and then uses matrix multiplication to combine these solutions
together.

Indeed, for each $F_i\in\{F_1,F_2,F_3\}$, the algorithm computes all
$O(n^{\lc 2k/3\rc})$ ways to cut $\le\lc 2k/3\rc$ edges in $F_i$,
followed by all $3^{O(k)}$ ways to label each of the
$\le\lc 2k/3\rc+O(\log k)$ connected components with a label in
$[k]$. For each one forest, note that some of these components might
only contain Steiner vertices of the tree; we call these the
\emph{Steiner components}, and the other the \emph{normal
  components}. For each subset $S\s[k]$, let $\m F_i^S$ denote all
possible ways to cut and label $F_i$ in the aforementioned manner such
that the set of labels that are attributed to at least one normal
component is precisely~$S$.

The algorithm now enumerates over every possible triple of subsets
$S_1,S_2,S_3\s [k]$ (not necessarily disjoint) whose union is exactly
$[k]$.  Note that there are at most $7^k$ of these triples. For each
triple $S_1, S_2, S_3$, we construct the following tripartite auxiliary
graph $H=(V_H,E_H)$ on $O(k^{O(k)}n^{\lc 2k/3\rc})$ vertices, 

with tripartition
$V_H = V_1 \uplus V_2 \uplus V_3$. For each $i=1,2,3$, each element in
$\m F_i^{S_i}$ is a tuple $(X_i, \sigma_i)$ where $X_i \sse F_i$ is a
set of edges that we cut from $F_i$, and $\sigma_i$ is a labeling of the
normal components in the resulting forest so that the label set is
exactly $S_i$. Now for each $(X,\sigma) \in\m F_i^{S_i}$, add a node
$v_i^{(X, \sigma)}$ to $V_i$. Moreover, for each tripartition pair
$(V_a,V_b)$ with $(a,b) \in (1,2),(2,3),(3,1)$, and for each way
$(X_a, \sigma_a) \in\m F_a^{S_a}$ to cut $F_a$ into components
$C^a_1,\lds,C^a_{r_a}$ with labels $\sigma_a(1),\lds,\sigma_a(r_a)$, and
for each way $(X_b, \sigma_b) \in\m F_b^{S_b}$ to cut $F_b$ into
components $C^b_1,\lds,C^b_{r_b}$ with labels
$\sigma_b(1),\lds,\sigma_b(r_b)$, we add an edge
$(v_a^{(X_a, \sigma_a)},v_b^{(X_b, \sigma_b)})\in V_a\times V_b$ of weight
\begin{multline}
  w_H(v_a^{(X_a, \sigma_a)},v_b^{(X_b, \sigma_b)}):=
  \sum_{i=1}^{r_a}\sum_{j=i+1}^{r_a} \1[\sigma_a(i)\ne \sigma_a(j)]\cd
  w(E_G[C_i^a,C_j^a]) \\ +
  \sum_{i=1}^{r_a}\sum_{j=1}^{r_b}\1[\sigma_a(i)\ne \sigma_b(j)]\cd
  w(E_G[C_i^a,C_j^b]),
\end{multline}
where $\1$ is the indicator function, taking value $1$ if the
corresponding statement is true and $0$ otherwise. Finally, the
algorithm computes the minimum weight triangle in $H$.

A straightforward counting argument shows that the weight of each
triangle
$(v_1^{(X_1, \sigma_i)},v_2^{(X_2, \sigma_2)},v_3^{(X_3, \sigma_3)})$ in
$H$ is exactly the value of the cut in $G$ obtained by merging all
components in $F^1,F^2,F^3$ with the same label together. In particular,
for the correct triple $S_1,S_2,S_3$ for $E_T^*$, there is a triangle in
$H$ whose weight is the cost of the optimal solution, and the algorithm
will find it, proving the correctness of the algorithm.

As for running time, the algorithm has an $n^{O(\log k)}7^{k}$ overhead
for the guesswork of finding the forests $(F_1, F_2, F_3)$ and the
correct triple $(S_1, S_2, S_3)$ of subsets of labels. This is followed
by computing matrix multiplication on a graph with
$k^{O(k)}n^{\lc2k/3\rc}$ nodes, with edge weights in
$[-Wm,Wm]$. Altogether, this takes $k^{O(k)}n^{(2\om/3+\e)k+O(1)}W$ for
any $\e>0$, proving Theorem~\ref{thm:EvenFasterKCut}.


\section{An $(1+\e)$-Approximation Algorithm}
\label{sec:approx}

We now give a $(1+\e)$-approximation algorithm for the $k$-cut problem
that achieves a running time better than both the previous
algorithms. Moreover, the ideas we develop here allow us to get a better
constant-factor approximation for \kcut in FPT time. The main theorem we
prove is the following:

\ApproxKCut*

The runtime of $n^{k}$ comes from the reduction given in
Lemma~\ref{lem:Merge} that, given a \Ttree---i.e., a tree that crosses
the optimal $k$-cut in $\leq 2k-2$ edges---alters it to return a
collection of $n^{k+O(1)}$ trees that contain at least one tight \Ttree
$T$, i.e., one that crosses the optimal $k$-cut in exactly $k-1$
edges. How do we find the right $k-1$ edges to cut, to minimize the
total weight of edges in $G$ that go between different components?  It
is this problem that we give an FPT-PTAS for: we show how to approximate
\kcut on tight {\Ttree}s to within a $(1+\e)$-factor in time FPT in $k$:

\begin{lemma}\label{lem:fpt-approx-tight}
Given a tree $T$ satisfying $|E_T(S^*_1,\lds,S^*_k)|=k-1$, there is a deterministic $(1+\e)$-approximation algorithm for the \kcut problem with runtime $(k/\e)^{O(k)}\poly(n)$.
\end{lemma}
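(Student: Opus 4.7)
My plan is to adapt the bottom-up dynamic programming framework of \S\ref{sec:general-rand} to the given tight \Ttree, rooted at an arbitrary leaf. Since $T$ crosses the optimal $k$-cut in exactly $k-1$ edges, the task reduces to picking $F\subseteq E_T$ with $|F|=k-1$ minimizing $w(\cut(F))$; for each pair $(e,s)$ I would compute an approximate version of the state $\st(e,s)$ from the exact algorithm, representing the best way to cut $e$ together with $s-1$ further edges inside $T_e$. The exact DP of \S\ref{sec:Remove} has two $n^{\Theta(k)}$ bottlenecks: first, a state $\st(e,s)$ is really a partition of $T_e$ into $s$ pieces, and distinct partitions interact differently with the rest of $G$; second, combining the states of several children at a branching node requires a triangle-style matrix-multiplication subroutine of cost $n^{\omega k/3}$. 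I would replace each by an approximate subroutine whose cost depends only on $k$ and $\e$, not on $n$, in its exponent.

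For the first bottleneck I would maintain only a small $\e$-net of candidate partitions per $(e,s)$. Two partitions of $T_e$ behave equivalently to the remainder of the algorithm if, for every component $C$ of the ambient $k$-cut that might later attach to a piece of $T_e$, the vector of cross-weights $(w(E(C,P)))_P$ over the partition pieces $P$ agrees up to a $(1+\e/k)$ factor. Quantizing these cross-weights on a multiplicative grid of resolution $\e/k$, and using that the optimal partition has at most $k$ pieces, the number of equivalence classes can be bounded by $(k/\e)^{O(k)}$; the DP keeps only the cheapest representative per class. A standard snapping argument then shows that any candidate partition is represented within a $(1+\e/k)$ factor at each of the $O(k)$ nontrivial DP merges, so telescoping gives the overall $(1+\e)$ guarantee after a constant rescaling of $\e$.

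For the second bottleneck I would replace the \kclique-style minimum-weight-triangle subroutine by an FPT-PTAS for the \emph{partial vertex cover} problem: given an edge-weighted graph, select $t$ vertices minimizing the total weight of edges incident to them. This admits a $(1+\e)$-approximation in $(t/\e)^{O(t)}\poly(n)$ deterministic time via color-coding with $O(t)$ colors, combined with a bounded-state DP over color classes and derandomized using a standard perfect hash family. At each branching DP node the combination task reduces to a PVC instance with $t=O(k)$, yielding the required runtime. The main obstacle I anticipate is verifying that the $\e$-net compression and the PVC approximation compose cleanly across the $O(k)$ merges: each merge introduces two independent sources of multiplicative error, so I would set both subroutines to sub-accuracy $\e/k^{O(1)}$ and argue via an exchange argument that snapping-plus-PVC at each merge only costs a $(1+\e/k^{O(1)})$-factor locally, so the final error is $(1+\e/k^{O(1)})^{O(k)} \le 1+\e$.
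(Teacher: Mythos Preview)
Your high-level architecture matches the paper's: a bottom-up DP on the tight \Ttree, with a compressed family of ``states'' replacing the full set of partitions of each $T_e$, and a Partial Vertex Cover FPT-PTAS to combine children at a branching node. The PVC half of your plan is essentially what the paper does, and your observation that there are only $O(k)$ ``nontrivial'' merges (those with $\ell\ge 2$) is exactly the reason the error telescopes.

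The genuine gap is in your state-compression step. You declare two partitions of $T_e$ equivalent when, ``for every component $C$ of the ambient $k$-cut that might later attach to a piece of $T_e$, the vector of cross-weights $(w(E(C,P)))_P$ agrees up to a $(1+\e/k)$ factor.'' But those components $C$ are precisely what the algorithm is trying to discover; in a bottom-up DP they are not available when you process $T_e$. If you instead quantize against \emph{all} subsets $C\subseteq V\setminus T_e$, the number of equivalence classes blows up with $n$; if only against the $k$ optimal components, you do not know them. In addition, a multiplicative grid is ill-defined when cross-weights can be zero, and the recursion \emph{subtracts} the double-counted terms $w(E(T_{U_i},T_{U_j}))$, so a small multiplicative perturbation of a subtracted term does not in general yield a small multiplicative perturbation of the difference.

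The paper's fix is a concrete, solution-independent net. Using a $2$-approximation value $M$ as a global scale, for each $v$ it precomputes a set $I_v\subseteq T_v$ of $\poly(k/\e)$ \emph{important nodes} with the property that every component of $T_v\setminus I_v$ sends at most $W=\poly(\e/k)\,M$ total edge weight into $V\setminus T_v$. A DP state at $v$ is then just $R=T_U\cap I_v$, and one proves the \emph{additive} bound $w\big(E(T_U\setminus \round_v(T_U),\,V\setminus T_v)\big)\le kW$, which propagates cleanly through the subtractions. Two further ingredients you have not supplied are also needed for the stated bound: bringing the number of relevant $R$ down from $2^{|I_v|}=2^{\poly(k/\e)}$ to $(k/\e)^{O(k)}$ requires a separate VC-dimension/Sauer argument on the family $\{T_u\cap I_v : u\in T_v\}$, and the \emph{deterministic} guarantee requires replacing the random guesses (of $\ell$, the $s_i$, the $R_i$, and the PVC coloring) by explicit $(n,k,q)$-universal sets and splitter families. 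Without a solution-independent representation of the interface of $T_e$ with $V\setminus T_e$, the rest of your plan does not get off the ground.
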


%

In this section, we only aim at a running time of $2^{\poly(k/\e)}\poly(n)$, in an effort to display our main ideas in a more streamlined fashion. To prove the running time required for Lemma~\ref{lem:fpt-approx-tight}, we defer the additional ideas to \S\ref{sec:s4-improvements}.

Firstly, we need an estimate for $w(OPT)$, for which a coarse
approximation algorithm suffices. Indeed, let $M$ be the value of a
$2$-approximation algorithm to \kcut~\cite{SV95}, so that our algorithm
knows $M$ and $w(OPT)\le M\le 2w(OPT)$. Also, recall
from~(\ref{eq:cut-def}) that given tree $T$ and a set of edges
$F \sse E(T)$, if $S_1, S_2, \ldots, S_{|F|+1}$ is the vertex partition
obtained by deleting edges $F$ from tree $T$, then
$\cut(F) = E_G(S_1, \ldots, S_{|F|+1})$ denotes the edges in the
underlying graph $G$ that cross this partition. We make the following
simple observation.

\begin{observation}
  \label{obs:M}
  For each of the $k-1$ edges $e\in E_T[S_1^*,\lds,S_k^*]$, $\cut(\{e\}) \le M$.
\end{observation}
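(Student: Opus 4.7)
The plan is to exploit tightness of the T-tree $T$ crucially. Since $|E_T[S_1^*,\ldots,S_k^*]| = k-1$ and $T$ is a spanning tree on $|V|$ vertices, removing these $k-1$ edges from $T$ produces exactly $k$ connected components. None of the remaining tree edges crosses between different $S_i^*$'s, so each such component is contained in some single $S_i^*$; and since there are $k$ components and $k$ parts, the induced partition of $V$ is exactly $\{S_1^*,\ldots,S_k^*\}$. In other words, each optimal part $S_i^*$ is itself connected in $T$ after the cut.

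Next, fix any $e \in E_T[S_1^*,\ldots,S_k^*]$ and consider the 2-partition $(A, V \setminus A)$ of $V$ obtained by deleting only the single edge $e$ from $T$. Because this is a coarsening of the $k$-partition obtained by deleting all $k-1$ edges $E_T[S_1^*,\ldots,S_k^*]$, every $S_i^*$ lies entirely on one side of $(A, V \setminus A)$. Consequently, any edge of $G$ crossing $(A, V \setminus A)$ must have its endpoints in two distinct optimal parts, so it lies in $OPT = E_G(S_1^*,\ldots,S_k^*)$. This gives the containment $\cut(\{e\}) \subseteq OPT$.

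Finally, taking weights and using $w(OPT) \le M$ yields
\[
w(\cut(\{e\})) \;\le\; w(OPT) \;\le\; M,
\]
which is the desired inequality. The argument is essentially bookkeeping once the tightness of $T$ is invoked; the only real conceptual step is the observation that a tight T-tree forces each $S_i^*$ to be a single subtree after the $k-1$ cuts, which in turn forces every single-edge cut of $T$ to be a coarsening of the optimal partition. No obstacle is expected beyond stating this cleanly.
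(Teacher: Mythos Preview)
Your proof is correct and is exactly the intended reasoning: the paper states this as an observation without proof, and the implicit justification is precisely that tightness of $T$ forces each $S_i^*$ to be a single subtree, so any single-edge cut $\cut(\{e\})$ for $e\in E_T[S_1^*,\ldots,S_k^*]$ is contained in $OPT$, hence has weight at most $w(OPT)\le M$.
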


This allows us to contract all edges $e\in E(T)$ with $\cut(\{e\})>M$,
since they cannot be cut in the optimal solution. Henceforth, assume
that every edge $e\in E(T)$ has $\cut(\{e\})\le M$.

\subsection{The Game Plan}
\label{sec:game-plan}

We want to apply dynamic programming on the tree $T$, which we root at
an arbitrary vertex. The first question to ask is: for each subtree
$T_v$, $v\in V(T)$, what dynamic programming states should we compute
and store? As is typical in dynamic programming algorithms, we want our
states to be as informative as possible, so that computing new states
can be done efficiently. However, we also want a small number of states.
Hence, we need to find a balance between a sparse representation of
states and a fast way to compute them.

For each vertex $v\in V(T)$ and integer $s\in[k-1]$, we want to store a
collection of states for $v$ such that one of them provides information
about $E_T[S_1^*,\lds,S_k^*]$ when restricted to the scope of $T_v$. One
way is the following: for $v, s, \{e_1, \ldots, e_\ell\}$, find the best
way to cut $s$ edges in the subtree below $v$, given that the cut edges
closest to $v$ are these $\ell$ incomparable edges. (We formalize this
below in \S\ref{sec:DQ}.) This dynamic program captures the problem
exactly. But since $\ell$ could be close to $k$ (for star-like graphs),
there could be roughly $n^{k-1}$ states, which would be no better than
brute-force search. Indeed, the reduction from clique shows we do not
expect to solve the problem exactly on stars faster than $n^{\om k/3}$
time; see \S\ref{sec:lower-bounds}. Hence, we compress the number of
states at a loss of a $(1+\e)$-approximate factor. Indeed, we represent
each ``true'' state $(u_1,\lds,u_\el)$ approximately with a ``small''
family of \textit{representative} states---i.e., a family with size that
is FPT in $k$.

\subsection{The Ideal Dynamic Program}
\label{sec:DQ}

\newcommand{\DQ}{\textsf{\small ExactDP}\xspace}
\newcommand{\DPStar}{\textsf{\small SmallDP}\xspace}
\newcommand{\DPee}{\textsf{\small PolyDP}\xspace}
\newcommand{\round}{\textsf{\small round}\xspace}

We extend the definition of $\cut$ from a mapping for edge sets given
in~(\ref{eq:cut-def}) to vertex sets: for a set of vertices
$v_1,\lds,v_\el\in V(T)-\{v_r\}$ such that $e_i$ is the parent edge of $v_i$
in the rooted tree $T$, we define
\begin{gather}
  \cut(\{v_1,\lds,v_\el\}):=\cut(\{e_1,\lds,e_\el\}). \label{eq:cut-vtx}
\end{gather}
For every subset $U\s T_v$ of at most $k-1$ incomparable vertices and
integer $s\in[|U|,k-1]$, define $\DQ(v,s,U)$ to be the minimum value of
$\cut(U')$ over all subsets $U'\s T_v$ of size exactly $s$ whose
``maximal'' vertices are exactly $U$; in other words, $U'\supseteq U$
and every vertex in $U'-U$ is a descendant of (exactly) one vertex in $U$.

We now define a recursive statement for $\DQ(v,s,U)$. There are two cases, depending on whether $v\in U$ or
not. If $v\notin U$, then the following recursive statement is true:
\begin{gather}
  \DQ(v,s,U):= \min_{U',v_i,s_i,U'_i}\bigg(\sum_{i=1}^\el \DQ(v_i,s_i,U'_i) - \sum_{i=1}^\el
  \sum_{j=i+1}^\el w(E(T_{U'_i},T_{U'_j}))\bigg), \label{eq:DQDef}
\end{gather}
where the minimum is (i) over all $U'\supseteq U$ whose maximal vertices are exactly $U$, and $v_1,\lds,v_\el$ are the children of $v$ whose subtrees $T_{v_i}$ intersect $U'$, and $U'_i:= U'\cap T_{v_i}$; and (ii) over all positive integers $s_1,\lds,s_\el$ summing to $s$. Note that the weight in the double summation accounts for the double-counted
edges, and is thus subtracted from the expression. If $v\in U$, then the
recursion becomes
\begin{multline}
  \DQ(v,s,U) := \cut(\{v\}) + \min_{U',v_i,s_i,U'_i}\bigg(\sum_{i=1}^\el \lp \DQ(v_i,s_i,U'_i) -
  w(E(T_{U'_i},V-T_v)) \rp \\ - \sum_{i=1}^\el\sum_{j=i+1}^\el
  w(E(T_{U'_i},T_{U'_j})) \bigg), \label{eq:DQdef-vinU}
\end{multline}
where the minimum is (ii) over all positive integers $s_1,\lds,s_\el$ summing to $s-1$ this time, and with (i) the same. Again, all subtractions in the expression handle double-counted edges.
\begin{observation}
Starting with the base states
\begin{enumerate}
\item $\DQ(v,0,\emptyset)=0$,
\item $\DQ(v,1,\{v\})=\cut(\{v\})$, and
\item $\DQ(v,s,U)=\infty$ for $s\ge2$, $U\s\{v\}$ \end{enumerate}
for all leaves $v\in V(T)$, by applying DP with the recursions above, we can compute the correct values of $\DQ$.
\end{observation}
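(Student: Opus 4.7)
The plan is to prove the observation by bottom-up induction on the subtrees of $T$. The base case at a leaf $v$ is immediate: since $T_v = \{v\}$, the only incomparable subsets $U \subseteq T_v$ are $\emptyset$ and $\{v\}$, and the three listed values follow directly from the definition of $\DQ$ (in particular $\DQ(v,s,U)=\infty$ for $s \geq 2$ because $T_v$ contains too few vertices to form a cut set of size $s$).

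For the inductive step at an internal vertex $v$, I will first verify that the recursion ranges over the correct collection of candidate cut sets $U'$. Any $U' \supseteq U$ with $\max(U') = U$ and $|U'|=s$ decomposes across the children $v_1, \ldots, v_\ell$ of $v$ whose subtrees meet $U'$: setting $U'_i := U' \cap T_{v_i}$ and $s_i := |U'_i|$, one has $\sum_i s_i = s$ when $v \notin U$ and $\sum_i s_i = s-1$ when $v \in U$ (accounting for the single element $v$ itself, since $U = \{v\}$ in that case). Conversely, any choice of children, sub-sets $U'_i$ with compatible maximal structure, and sizes $s_i$ reassembles into a valid $U'$. Hence, by the inductive hypothesis applied to each child, the right-hand side of the recursion ranges over the same family of $U'$ as the definition of $\DQ(v,s,U)$.

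The main technical step is verifying that the cost formula matches $\cut(U')$, which I will do by classifying each edge of $G$ by where its endpoints lie relative to the parts $T_{U'_1}, \ldots, T_{U'_\ell}$. For $v \notin U$, the partition induced by $U'$ has a ``top'' component $(V - T_v) \cup (T_v - \bigcup_i T_{U'_i})$ and other components entirely inside some $T_{U'_i}$. I will check four cases for an edge $(x,y) \in E(G)$: both endpoints outside every $T_{U'_i}$ (contributes $0$ to each side); exactly one endpoint in some $T_{U'_i}$ (contributes $1$ to $\cut(U')$ and $1$ to $\DQ(v_i,s_i,U'_i)$, since the other endpoint sits in the subproblem's top $V - T_{v_i}$); both endpoints in the same $T_{U'_i}$ (handled entirely inside $\DQ(v_i,s_i,U'_i)$); and endpoints in distinct $T_{U'_i}, T_{U'_j}$ (contributes $1$ to $\cut(U')$ but is counted once each by $\DQ(v_i,s_i,U'_i)$ and $\DQ(v_j,s_j,U'_j)$, so the subtraction $\sum_{i<j} w(E(T_{U'_i}, T_{U'_j}))$ corrects the overcount). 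The $v \in U$ case is analogous: $\cut(\{v\})$ captures all edges from $T_v$ to $V - T_v$ once, and the extra subtraction $\sum_i w(E(T_{U'_i}, V - T_v))$ removes the double-count incurred by $\DQ(v_i,s_i,U'_i)$ on edges from $T_{U'_i}$ to $V - T_v$.

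The main obstacle is the bookkeeping mismatch between subproblem and parent: $\DQ(v_i, s_i, U'_i)$ implicitly treats $V - T_{v_i}$ as one top component, but in the parent problem that region is further split into sibling subtrees, the stub $T_v - \bigcup_i T_{v_i}$, and (when $v \in U$) the detached $V - T_v$. The edge-classification above must carefully confirm that this coarser subproblem view induces exactly the overcounting corrected by the subtracted cross-terms, and nothing more; once verified, the induction closes and both recursions correctly compute $\DQ$.
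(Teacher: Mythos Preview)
Your proposal is correct. The paper states this result as an observation without proof; your bottom-up induction with an edge-by-edge inclusion--exclusion accounting is precisely the intended (implicit) verification. One minor imprecision worth tightening: in your case ``exactly one endpoint in some $T_{U'_i}$'', the other endpoint need not literally sit in $V-T_{v_i}$---it could lie in $T_{v_i}-T_{U'_i}$ or in $T_{v_j}-T_{U'_j}$ for some $j\neq i$---but in every such sub-case it lands in the large component of the $U'_i$-partition (and is not cut by any $U'_j$ with $j\neq i$), so your conclusion still holds.
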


In order to compress the number of states for this dynamic program we
need the notion of \emph{important} nodes and \emph{representatives},
which we define in the next sections.
Given a subtree $T_v$, the important nodes $I_v$ should be thought of as
a constant-sized family of consistent ``samples'' of nodes, such that we
can ``round'' our guesses for which edges to delete to their nearest
sample points. These rounded set of states are, loosely speaking, the
representatives. 
 


\subsection{Important Nodes and Representatives}\label{sec:ImportantAndReps}

\subsubsection{Important Nodes}
\label{sec:Important}

Given tree $T$, some node $v \in V(T)$, we define a set of
\emph{important nodes} within $V(T_v)$, the nodes in the subtree below
$v$. One can think of these essentially as an ``$\e$-net'' of the nodes
in $T$, in a certain technical sense.
For each node $u\in T_{v}$, assign a vertex weight $\phi_v(u)$ to $u$
equal to the total weight of edges in $G$ that connect $u$ to vertices
outside $T_{v}$; i.e.,
\begin{gather}
  \phi_v(u) : = w(E_G(\{u\},V-T_{v})). \label{eq:weight}
\end{gather}
Observe that the total $\phi_v(\cdot)$ weight of all vertices in $T_{v}$
is exactly $E(T_{v},V-T_{v})$, which is at most $M$ by
Observation~\ref{obs:M}.  We want a set $I_{v} \sse V(T_{v})$ of
\emph{important nodes} for $T_v$ such that

\begin{leftbar}
  \begin{enumerate}
  \item[(P1)] Every connected component in $T_{v}-I_v$ has total
    $\phi_v(\cdot)$ weight at most $W:=\poly(\e/k)M$, whose exact value is determined later.

  \item[(P2)] The size $|I_v|$ of any important set is at most
    $4M/W + 1=\poly(k/\e)$.

  \item[(P3)] For each pair of vertices $v,p\in V(T)$ where
    $v$ is a descendant of $p$, $I_p\cap T_v\s I_v$.

  \item[(P4)] $v \in I_v$.
  \end{enumerate}
\end{leftbar}

\begin{theorem}[Important Nodes]
  \label{thm:algo-imp}
  Given $T$, there is a polynomial-time algorithm to find a set $I_v$ of
  important nodes for each node $v \in V(T)$, satisfying properties
  (P1)-(P4) above.
\end{theorem}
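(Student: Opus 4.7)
The plan is to construct the sets $I_v$ bottom-up, processing $V(T)$ in post-order. For a leaf $v$ I set $I_v := \{v\}$, which trivially satisfies (P1) (vacuously, since $T_v - I_v = \emptyset$), (P2), and (P4); (P3) is vacuous. For an internal $v$ with children $c_1,\lds,c_d$ whose important sets $I_{c_i}$ are already built, define the candidate pool
\[
  S_v \;:=\; \{v\}\cup\bigcup_i I_{c_i}.
\]
This pool already satisfies (P1) and (P4) for $v$: the components of $T_v - S_v$ are exactly the components of the various $T_{c_i} - I_{c_i}$, and since $V - T_v \s V - T_{c_i}$ gives $\phi_v(u) \le \phi_{c_i}(u)$ on $T_{c_i}$, each component has $\phi_v$-weight at most $W$ by induction. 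Moreover, restricting the final choice to $I_v \s S_v$ automatically enforces (P3): for each child $c_i$ we get $I_v \cap T_{c_i} \s I_{c_i}$, and transitivity down the recursion extends this to all ancestor--descendant pairs.

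The remaining task is to shrink $S_v$ to a subset $I_v \s S_v$ that contains $v$ and still satisfies (P1), with $|I_v| \le 4M/W + 1$. To do this I will form a contracted tree $T^{S_v}$ by collapsing each chunk $K$ of $T_v - S_v$ to a single ``chunk node'' $n_K$ with weight $\phi_v(K) := \sum_{x \in K}\phi_v(x) \le W$, while keeping each $u \in S_v$ with weight $\phi_v(u)$. Total weight is at most $M$, and a key point is that removing a subset $I_v \s S_v$ from $T^{S_v}$ produces components whose weights coincide exactly with the $\phi_v$-weights of the corresponding components of $T_v - I_v$. I will then run a Frederickson-style greedy post-order sweep on $T^{S_v}$: maintain a running subtree weight $f(z) := w(z) + \sum_{c} f(c)$ where the sum ranges over children $c$ of $z$ in $T^{S_v}$ not yet placed into $I_v$, and whenever $f(z)$ exceeds a suitable constant fraction of $W$ and $z \in S_v$, add $z$ to $I_v$ and reset $f(z) := 0$; add $v$ at the end. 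The size bound $|I_v| \le 4M/W + 1$ follows by a standard amortization, since every non-root cut consumes $\Omega(W)$ weight out of the total budget $M$.

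The main obstacle I anticipate is that chunk nodes are not in $S_v$ and therefore cannot themselves be selected by the greedy, which in principle could allow a heavy chunk to join with uncut $S_v$-neighbors and push a component's weight above $W$. I plan to handle this by a two-step refinement: first, mandatorily include into $I_v$ every $u \in S_v$ with $\phi_v(u)$ exceeding a small fraction of $W$ (there are only $O(M/W)$ such vertices, costing a constant factor in the (P2) bound), so that after this preprocessing each single $S_v$-vertex is ``light''; and second, tune the greedy threshold so that the accumulated weight $f$ never crosses $W$ while passing through a chunk node of weight $\le W$. A straightforward case analysis on each component of $T^{S_v} - I_v$ then verifies (P1) for $T_v$, and the overall construction runs in polynomial time since each bottom-up round does $O(|T_v|)$ work and there are $|V(T)|$ rounds.
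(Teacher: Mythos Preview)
Your bottom-up framework is a genuinely different route from the paper's: the paper goes \emph{top-down}, initializing $I_v$ from the parent via $I_p\cap T_v$ and then \emph{adding} nodes where (P1) breaks (which can happen because $\phi_v\ge\phi_p$ on $T_v$), whereas you go \emph{bottom-up}, initializing from the children via $S_v=\{v\}\cup\bigcup_i I_{c_i}$ and then \emph{removing} nodes (which is possible because $\phi_v\le\phi_{c_i}$ on $T_{c_i}$). Your reduction of (P3) to the single containment $I_v\subseteq S_v$ is correct and clean.

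The gap is in the shrinking step. Take a chunk node $n_K$ of weight exactly $W$ whose children in $T^{S_v}$ are $c_1,\dots,c_d\in S_v$, each with a tiny but positive accumulated value $0<f(c_i)\le\theta$. Your sweep does not select any $c_i$, so $f(n_K)=W+\sum_i f(c_i)>W$; whatever $S_v$-node you later select above $n_K$, the resulting component of $T_v-I_v$ that contains $K$ already has $\phi_v$-weight exceeding $W$, so (P1) fails. No fixed threshold $\theta$ fixes this: forcing $f(n_K)\le W$ would require selecting every child with $f>0$, and then your ``each non-root cut consumes $\Omega(W)$'' amortization collapses, since those cuts can consume arbitrarily little weight. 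Your preprocessing (mandatorily including $S_v$-nodes with large $\phi_v$) is orthogonal to the problem, because the $c_i$ in this scenario are light by design.

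The paper's top-down direction sidesteps this entirely: it never shrinks, only grows, and bounds $|I_v|$ by a charging argument that splits $I_v$ into ``active'' nodes (each in charge of $\ge W/2$ worth of current $\phi_v$-weight, hence at most $2M/W$ of them) and ``retired'' nodes (each charged to $\ge W/2$ of \emph{new} edge weight appearing between $T_v$ and its ancestors, also totaling $\le M$). That global charging is what makes the exact bound $4M/W+1$ go through; a purely local Frederickson-style argument on $T^{S_v}$, restricted to cutting only at $S_v$-nodes, does not give it.
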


The approach is to start off with the empty set at the root, and proceed
top-down, adding nodes to representative sets as the $\phi$-weight
increases. The details, along with the proof of why the size remains controlled, are deferred to
\S\ref{sec:appendix-approx}, since they are orthogonal to the present story.

\subsubsection{Representative States}
\label{sec:repr-states}

The idea of representative states is simple: instead of keeping track of
all possibly $\approx n^k$ states in \DQ above, we ``round'' each state
to a close-by representative, such that there are only a constant number
of such representatives, but we incur only a small multiplicative error.

Fix a vertex $v\in V(T)$. We focus on computing representative states
for $v$ in the subtree $T_{v}$. Recall the notion of $\phi_v$-weight
from~(\ref{eq:weight}), and the properties~(P1)-(P3) of important nodes
$I_v \sse T_v$. 





\begin{definition}[Representatives]
  The \emph{representative} of a subset $S\s T_v$ is
  $\sigma_v(S):=S\cap I_v$, i.e., the set of important nodes within $S$.
\end{definition}

The function $\sigma_v:2^{T_v}\to 2^{I_v}$ maps sets to their
representatives. Typically, we will deal with representatives $R$ that
are downward-closed, that is, if $u \in R$ then all of its descendants
that are also in $I_v$ also belong to $R$. Given a representative, the
function $\sigma_v\inv$ is an inverse of sorts, indicating a canonical
set to consider for each representative. Figure~\ref{fig:round} gives a
pictorial depiction of this and following definitions.

\begin{definition}
Given a representative $R$, define its \emph{canonical set}
$\sigma_v\inv(R):= T_R := \cup_{x \in R} T_x$.
\end{definition}

Observe that $\sigma_v(\sigma_v\inv(R)) = R$ for any downwards-closed
representative $R$.  We now show that any set of the form
$S:=\bigcup_{i=1}^\el T_{u_i}$ is roughly equal to the canonical set of
its representative $\sigma_v\inv(\sigma_v(S))$, in the following sense:
$E(S,V-T_v) \approx E(\sigma_v\inv(\sigma_v(S)),V-T_v)$.
For brevity, let us define
\begin{gather}
  \boxed{\round_v(S) := \sigma_v\inv(\sigma_v(S)).} \label{eq:round}
\end{gather}

\begin{figure}
  \centering
  \includegraphics[width=0.6\textwidth]{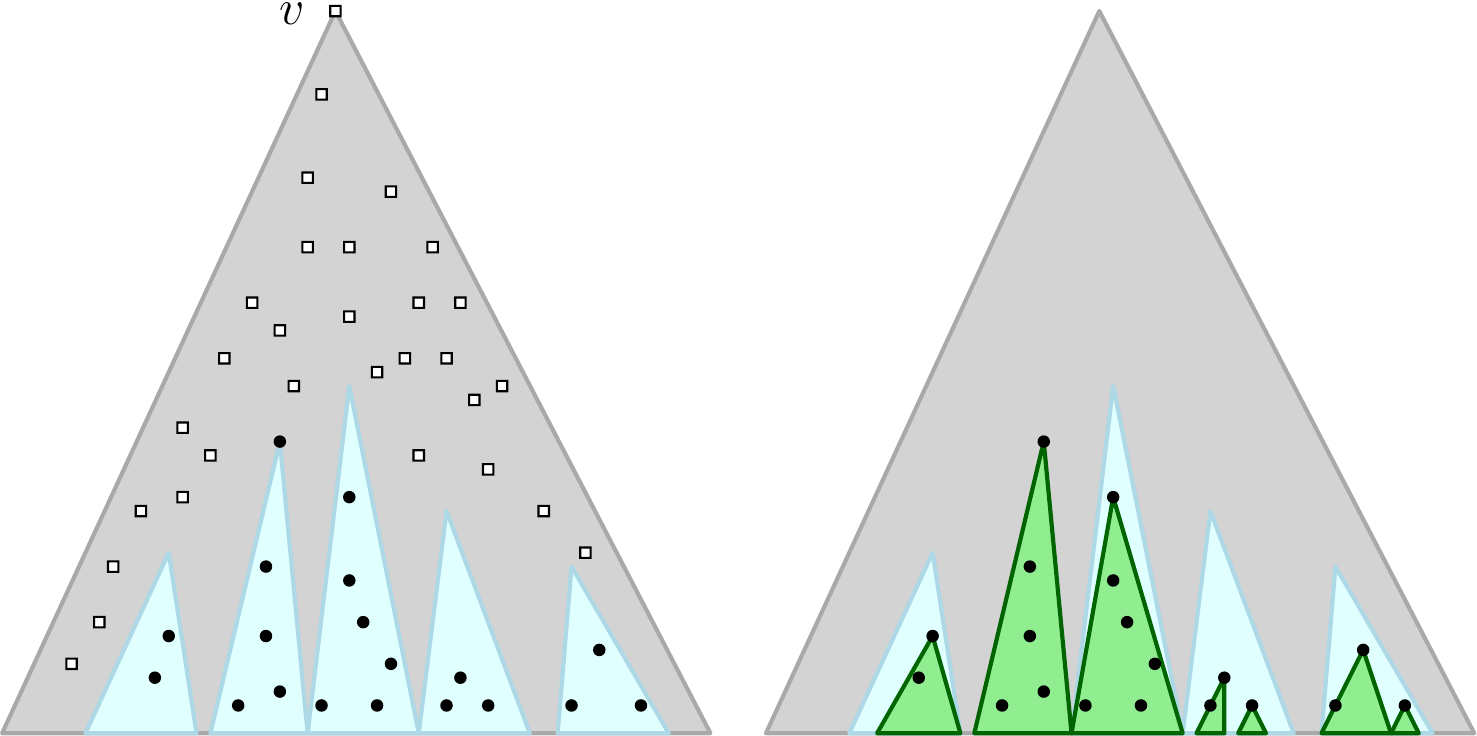}
  \caption{\label{fig:round} The boxes and dots are the set of important
    nodes $I_v$. For the downwards-closed blue set $S$, its
    representative $R = \sigma_v(S)$ is the set of important nodes
    (dots) within it. The union of the green subtrees $G$ then form the
    canonical set for $R$, and hence $\round_v(S) = G$.}
\end{figure}

\begin{lemma}\label{lem:ImportantApproximate}
  Fix incomparable vertices $u_1,\lds,u_\el\in T_{v}$, and let
  $S:=\bigcup_{i=1}^\el T_{u_i}$. Then,
\begin{OneLiners}
\item[(1)] $\round_v(S) \s S$
\item[(2)] $w(E(S-\round_v(S),V-T_v)) \le \el W$
\end{OneLiners}
\end{lemma}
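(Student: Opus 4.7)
My plan is to unpack $\round_v(S)$ using the definitions, then leverage property (P1) of the important nodes---that every connected component of $T_v - I_v$ has small $\phi_v$-weight---to control the contribution of the ``rounding error'' $S - \round_v(S)$.

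Part (1) should be immediate from the definitions: any important node $x \in \sigma_v(S) = S \cap I_v$ lies in some $T_{u_i} \subseteq S$, so $T_x \subseteq T_{u_i} \subseteq S$, giving $\round_v(S) = \bigcup_{x \in \sigma_v(S)} T_x \subseteq S$.

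For part (2), I would write $S - \round_v(S) = \bigcup_{i=1}^\ell A_i$ where $A_i := T_{u_i} - \round_v(S)$; these pieces are pairwise disjoint since the $T_{u_i}$ are disjoint by incomparability. The key claim is that each $A_i$ lies inside a single connected component of $T_v - I_v$. This will follow from characterizing membership in $A_i$: incomparability of the $u_j$'s forces any strict ancestor of $u_i$ to lie outside $S$, so an important node $x \in S \cap I_v$ can witness $y \in T_{u_i}$ being in $\round_v(S)$ only if $x$ itself lies on the $u_i$-to-$y$ path in $T$. Hence $y \in A_i$ iff the $u_i$-to-$y$ path avoids $I_v$, which makes $A_i$ a connected subtree of $T_{u_i}$ consisting entirely of non-important vertices; in particular $A_i$ sits in one connected component of $T_v - I_v$.

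Given this containment, property (P1) bounds $\sum_{y \in A_i} \phi_v(y) \le W$. Since $\phi_v(y) = w(E_G(\{y\}, V - T_v))$ by~(\ref{eq:weight}) and the $A_i$ are disjoint,
\[ w(E(S - \round_v(S), V - T_v)) = \sum_{i=1}^\ell \sum_{y \in A_i} \phi_v(y) \le \ell W, \]
as desired. The only nontrivial step is the ancestor argument using incomparability to rule out interference from important nodes outside $T_{u_i}$; everything else is bookkeeping.
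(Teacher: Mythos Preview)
Your proof is correct and follows essentially the same route as the paper: both reduce part~(2) to showing that every vertex of $S-\round_v(S)$ lies in the connected component of $T_v-I_v$ containing some $u_i$, and then invoke property~(P1). The paper states this containment in one sentence, whereas you spell out the supporting ancestor argument (that incomparability forces any $x\in S\cap I_v$ witnessing $y\in T_{u_i}\cap\round_v(S)$ to lie on the $u_i$--$y$ path), which is exactly the detail the paper leaves implicit.
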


\begin{proof}
  Property (1) clearly holds by the definitions of $\sigma_v$ and
  $\sigma_v\inv$. For property (2), suppose that $u\in
  S-\round_v(S)$. Recall that $T_v-I_v$ is a forest whose connected components each have weight $\le W$. 
        Then, $u$ must be in a connected component of $T_v-I_v$
        containing a vertex $u_i$, for some $i\in[\el]$. Overall the vertices $u_1,\lds,u_\el$ are
  each responsible for one subtree of weight $\le W$. Thus,
  \[ w(E(S-\round_v(S),V-T_v)) = \sum_{u\in
    S-\round_v(S)}\phi_v(u) \le \el W . \qedhere\]
\end{proof}

Lemma~\ref{lem:ImportantApproximate} suggests that ``rounding'' each
state in \DQ to its representative should changes the answer only by
$kW$; we now proceed to the precise dynamic program.

\subsection{A Smaller Dynamic Program}


Recall the ``ideal'' DP \DQ from \S\ref{sec:DQ}: we want to approximate
its states by a smaller set. The new DP \DPStar will still require $n^{\Om(k)}$
time to compute, but will closely approximate \DQ. In the next section
we show the final link in the chain: how to compute a good approximation
of \DPStar in FPT time.

We define $\DPStar(v,s,R)$ for each $v\in V(T)$, $s\in[k-1]$, and $R\s
I_v$, which we think of as the ``rounded'' DP states after approximating
sets with their representatives. While we do not always enforce it, imagine that $R\s I_v$ is always downward-closed inside $I_v$; that is, if $u\in R$, then $T_u\cap I_v\s R$. In a perfect world, the state
$\DPStar(v,s,R)$ would equal the smallest value of $\DQ(v,s,U)$ over all
$U$ such that $\sigma_v(T_U)=R$. At a high level, we will argue that
since representatives only cause a small additive error,
$\DPStar(v,s,R)$ will still be close to the smallest value of
$\DQ(v,s,U)$.

The base states for $\DPStar$ are the same: $\DPStar(v,0,\emptyset)=0$ and $\DPStar(v,1,\{v\})=\cut(\{v\})$ for leaves $v\in V(T)$.
Before we recursively define the DP states $\DPStar(v,s,R)$, we first
introduce the following lemma, which shows that the expressions for
$\DQ$ from~(\ref{eq:DQDef}) and~(\ref{eq:DQdef-vinU}) do not change much
if $T_{U_i}$ is replaced by $\round_{v_i}(T_{U_i})$, and similarly for
$T_{U_j}$. Note that we will be loose on the additive error bounds (such
as the bound $kW$ below), since it simplifies the argument and does not
affect our running time asymptotically.

\begin{lemma}\label{lem:DPAdditiveError}
  Consider a vertex $v\in V(T)$ and subset of incomparable vertices $U\s
  T_{v}$ of size at most $k$. Let $v_1,\lds,v_\el$ be all the children
  of $v$ whose subtrees $T_{v_i}$ intersect $U$, and define $U_i:=U\cap
  T_{v_i}$. Then,
  \[ 0\le
  w(E(T_{U_i},V-T_v))-w(E(\round_{v_i}(T_{U_i}),V-T_v))\le
  kW \] and
  \[0\le
  w(E(T_{U_i},T_{U_j}))-w(E(\round_{v_i}(T_{U_i}),\round_{v_j}(T_{U_j})))\le
  2kW .\]
\end{lemma}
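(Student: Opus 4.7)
My plan is to derive both inequalities essentially by bookkeeping, using only Lemma~\ref{lem:ImportantApproximate} applied at each child $v_i$ together with the fact that, since $v_1,\ldots,v_\el$ are distinct children of $v$, their subtrees $T_{v_1},\ldots,T_{v_\el}$ are pairwise disjoint and each contained in $T_v$. Non-negativity in both inequalities is immediate: Lemma~\ref{lem:ImportantApproximate}(1) gives $\round_{v_i}(T_{U_i})\s T_{U_i}$, and removing vertices can only remove edges from the two cuts considered, so the rounded quantities are at most the original ones (edge weights being non-negative).

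For the first inequality, write $T_{U_i}=\round_{v_i}(T_{U_i})\sqcup B_i$, where $B_i:=T_{U_i}-\round_{v_i}(T_{U_i})$. Then the difference $w(E(T_{U_i},V-T_v))-w(E(\round_{v_i}(T_{U_i}),V-T_v))$ equals exactly $w(E(B_i,V-T_v))$. Since $T_{v_i}\s T_v$, we have $V-T_v\s V-T_{v_i}$, and therefore $w(E(B_i,V-T_v))\le w(E(B_i,V-T_{v_i}))$. The latter is exactly $\sum_{u\in B_i}\phi_{v_i}(u)$, which by Lemma~\ref{lem:ImportantApproximate}(2) applied at $v_i$ to the incomparable set $U_i\s T_{v_i}$ is at most $|U_i|\cd W\le kW$.

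For the second inequality, do the analogous decomposition on both sides: $T_{U_i}=\round_{v_i}(T_{U_i})\sqcup B_i$ and $T_{U_j}=\round_{v_j}(T_{U_j})\sqcup B_j$. Expanding bilinearly, the difference equals
\[
w(E(B_i,\round_{v_j}(T_{U_j})))+w(E(\round_{v_i}(T_{U_i}),B_j))+w(E(B_i,B_j)).
\]
Since $v_i\ne v_j$ are distinct children of $v$, the subtrees $T_{v_i}$ and $T_{v_j}$ are disjoint, so in particular $T_{U_j}\s V-T_{v_i}$ and $T_{U_i}\s V-T_{v_j}$. Grouping the first and third terms together,
\[
w(E(B_i,\round_{v_j}(T_{U_j})))+w(E(B_i,B_j))= w(E(B_i,T_{U_j}))\le w(E(B_i,V-T_{v_i}))\le |U_i|W,
\]
again by Lemma~\ref{lem:ImportantApproximate}(2) at $v_i$. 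Symmetrically, $w(E(\round_{v_i}(T_{U_i}),B_j))\le w(E(V-T_{v_j},B_j))\le |U_j|W$. Summing gives $(|U_i|+|U_j|)W\le 2kW$, as required.

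The proof is entirely mechanical; there is no real obstacle. The only point that deserves a moment of care is that $\round_{v_i}$ is defined relative to the important-node set $I_{v_i}$ of the child's subtree, so Lemma~\ref{lem:ImportantApproximate} must be invoked at $v_i$ (and $v_j$), not at $v$ — which is exactly why we need property~(P3) guaranteeing the hierarchical consistency of the $I_u$'s across ancestors/descendants, and why passing from $V-T_v$ to $V-T_{v_i}$ (an enlargement of the "outside") is free.
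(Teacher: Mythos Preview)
Your proof is correct and follows essentially the same route as the paper: both arguments use Lemma~\ref{lem:ImportantApproximate}(1) for the containment $\round_{v_i}(T_{U_i})\s T_{U_i}$ (giving non-negativity), and Lemma~\ref{lem:ImportantApproximate}(2) applied at the child $v_i$ together with $V-T_v\s V-T_{v_i}$ and $|U_i|\le k$ to bound the additive error; your bilinear expansion for the second inequality is just an unpacked version of the paper's set-containment $E(T_{U_i},T_{U_j})\setminus E(\round_{v_i}(T_{U_i}),\round_{v_j}(T_{U_j}))\s E(B_i,V-T_{v_i})\cup E(B_j,V-T_{v_j})$. (One small aside: property~(P3) is not actually used in this particular lemma---Lemma~\ref{lem:ImportantApproximate} at $v_i$ is self-contained---so that remark can be dropped.)
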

\begin{proof}
  For the first inequalities, it suffices to show that $T_{U_i}\supseteq
  \round_{v_i}(T_{U_i})$ and
  \[ w(E(T_{U_i}-\round_{v_i}(T_{U_i}),V-T_v))\le kW .\] They follow
  immediately from Lemma~\ref{lem:ImportantApproximate} and the facts
  that $|U_i|\le k$ and $V-T_{v}\s V-T_{v_i}$.  For the second, observe
  that
  \begin{align*}
    &\ E(T_{U_i},T_{U_j})-E(\round_{v_i}(T_{U_i}),\round_{v_j}(T_{U_j}))  \\
    &~~~\s \bigg( E(T_{U_i}-\round_{v_i}(T_{U_i}),V-V_{v_i}) ) \cup (
    E(T_{U_j}-\round_{v_j}(T_{U_j}),V-V_{v_j}) \bigg) , 
  \end{align*}
  which has total weight at most $2kW$, again by Lemma~\ref{lem:ImportantApproximate}. 
\end{proof}

We now define the $\DPStar$ states. Recall that we have added vertex $v$
to $I_v$ for all $v$, so again, we have two cases. If $v\notin R$, then
the recursion is
\begin{gather}
  \DPStar(v,s,R) := \min_{\el,v_i,s_i,R_i} \lb \sum_{i=1}^\el
  \DPStar(v_i,s_i,R_i) - \sum_{i=1}^\el\sum_{j=i+1}^\el
  w(E(\sigma_v\inv(R_i), \sigma_v\inv(R_j))) \rb \label{eq:DPStarDef} ,
\end{gather}
where the minimum is over all choices of $\el\in[s]$, distinct children
$v_1,\lds,v_\el$ of $v$, positive integers $s_1,\lds,s_\el$ whose sum is
$s$, and representatives $R_1,\lds,R_\el$ of $v_i,\lds,v_\el$ such that
$R_i\s I_{v_i}$ and $R_i \cap I_v = R\cap T_{v_i}$ for each
$i\in[\el]$. For the last condition, $R_i \cap I_v = R\cap T_{v_i}$,
observe that if $I_{v_i} = I_v\cap T_{v_i}$, then $R_i$ must be $R\cap
T_{v_i}$. However,
$I_{v_i}\supseteq I_v\cap T_{v_i}$ in general, so we can view $R_i$ as a
``refinement'' of $R$ inside $T_{v_i}$.

If $v\in R$, then the $s_i$ satisfy $\sum_is_i=s-1$ instead, and the
recursion becomes
\begin{align*}
  \DPStar(v,s,R) := \cut(\{v\}) + \min_{\el,v_i,s_i,R_i} \bigg[
  &\sum_{i=1}^\el \lp
  \DPStar(v_i,s_i,R_i) - w(E(\sigma_v\inv(R_i),V-T_v)) \rp \\
  &- \sum_{i=1}^\el\sum_{j=i+1}^\el w(E(\sigma_v\inv(R_i),
  \sigma_v\inv(R_j))) \bigg];
\end{align*}

\begin{lemma}\label{lem:DPStarApprox}
  For each vertex $v\in V(T)$, integer $s\in[k-1]$, and downward-closed
  subset $R\s I_v$,
  \[ \min_{U:\sigma_v(T_U)=R} \DQ(v,s,U) \le \DPStar(v,s,R) \le
  \min_{U:\sigma_v(T_U)=R} \DQ(v,s,U) + (8s-4)k^2W .\]
\end{lemma}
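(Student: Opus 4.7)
I will prove both inequalities simultaneously by structural induction on the subtree $T_v$. The base case, when $v$ is a leaf, holds trivially with equality: the only nontrivial base states are $\DQ(v,0,\emptyset) = \DPStar(v,0,\emptyset) = 0$ and $\DQ(v,1,\{v\}) = \DPStar(v,1,\{v\}) = \cut(\{v\})$; all other states are $\infty$ on both sides. In the inductive step at $v$, I will treat the two directions separately and focus on the case $v \notin R$; the case $v \in R$ is analogous, with the only extra work being the rounding on the $w(E(\sigma_v^{-1}(R_i), V{-}T_v))$ terms, controlled in exactly the same way by Lemma~\ref{lem:DPAdditiveError}.

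For the lower bound, given minimizers $\ell, v_i, s_i, R_i$ of the $\DPStar(v,s,R)$ recursion, I will invoke the inductive hypothesis on each child to get a set $U_i \sse T_{v_i}$ with $\sigma_{v_i}(T_{U_i}) = R_i$ and $\DQ(v_i, s_i, U_i) \le \DPStar(v_i, s_i, R_i)$. Setting $U := \bigcup_i U_i$, the compatibility constraint $R_i \cap I_v = R \cap T_{v_i}$ from the $\DPStar$ recursion, together with property (P3) (which gives $I_v \cap T_{v_i} \sse I_{v_i}$), forces $\sigma_v(T_U) = R$. Moreover, every $x \in R_i \sse T_{U_i}$ has an ancestor in $U_i$, so $T_{R_i} \sse T_{U_i}$; hence $w(E(T_{U_i}, T_{U_j})) \ge w(E(T_{R_i}, T_{R_j}))$. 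Substituting these two facts into the $\DQ$ recursion yields $\DQ(v,s,U) \le \DPStar(v,s,R)$, as desired.

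For the upper bound, I will start with an optimal $U^*$ minimizing $\DQ(v,s,U)$ over $U$ with $\sigma_v(T_U) = R$, decompose $U^* = \bigcup_i U_i^*$ across the children of $v$ whose subtrees meet $U^*$, and set $R_i^* := \sigma_{v_i}(T_{U_i^*})$. Property (P3) again gives $R_i^* \cap I_v = R \cap T_{v_i}$, so $(R_i^*)$ is a legitimate competitor in the $\DPStar$ recursion. The inductive upper bound gives $\DPStar(v_i, s_i, R_i^*) \le \DQ(v_i, s_i, U_i^*) + (8s_i - 4)k^2 W$, and since $\round_{v_i}(T_{U_i^*}) = T_{R_i^*}$, Lemma~\ref{lem:DPAdditiveError} yields $w(E(T_{R_i^*}, T_{R_j^*})) \ge w(E(T_{U_i^*}, T_{U_j^*})) - 2kW$. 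Plugging these into the $\DPStar$ recursion produces an accumulated additive error of at most $\sum_i (8s_i - 4)k^2 W + 2kW \binom{\ell}{2}$ over $\DQ(v,s,U^*)$.

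Finally, I will verify that this accumulated error is absorbed by $(8s-4)k^2 W$. Using $\sum_i s_i = s$ and $\ell \le s \le k-1$, the slack $-4\ell k^2 W$ comfortably dominates the $2kW\binom{\ell}{2} = O(k\ell^2 W)$ rounding term for any $\ell \le 4k$, which is automatic; the $v \in R$ case introduces an additional $\ell k W$ term (from the $w(E(\sigma_v^{-1}(R_i), V{-}T_v))$ correction) that fits into the same budget. The main obstacle is not any single step but the careful bookkeeping: in particular, (a) verifying the delicate compatibility identity $R_i^* \cap I_v = R \cap T_{v_i}$ from property (P3) and the definition of $\sigma_{v_i}$, and (b) ensuring the $(8s-4)k^2 W$ envelope accommodates the per-level rounding errors uniformly across both branches of the $v \in R$ / $v \notin R$ case split.
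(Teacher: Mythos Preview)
Your proposal is correct and follows essentially the same inductive argument as the paper: induct from the leaves, split on $v\in R$ versus $v\notin R$, pull witnesses $U_i$ from the children via the inductive hypothesis for the lower bound, and for the upper bound push the optimal $U^*$ down to the children and use Lemma~\ref{lem:DPAdditiveError} to control the rounding on the cross terms. Your bookkeeping is in fact slightly tidier than the paper's: you correctly count $\binom{\ell}{2}$ cross terms (each losing at most $2kW$) and absorb them into the $-4\ell k^2W$ slack via $\ell\le k$, which lets you treat $\ell=1$ and $\ell\ge2$ uniformly, whereas the paper splits these two sub-cases and bounds the number of cross terms somewhat loosely.
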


\begin{proof}
  We apply induction from the leaves of the tree to the root. Observe
  that if $U\s T_v$ and $R\s I_v$ satisfy $\sigma_v(T_U)=R$, then $v\in
  U\iff v\in R$, so we can separate the cases $v\in R$ and $v\notin R$.

  \paragraph{Case 1: $v\notin R$.} To show the first inequality,
  consider the values $\el,v_i,s_i,R_i$ that achieve the minimum of
  $\DPStar(v,s,R)$ in (\ref{eq:DPStarDef}). By induction, for each
  $i\in[\el]$, there exists $U_i$ such that $\sigma_{v_i}(T_{U_i})=R_i$
  and $\DQ(v_i,s_i,U_i)\le \DPStar(v_i,s_i,R_i)$. Recalling the
  definition of $\round$, 
  \[ w(E(\sigma_v\inv(R_i),
  \sigma_v\inv(R_j)))=w(E(\round_{v_i}(T_{U_i}),\round_{v_j}(T_{U_j})))
  \ge w(E(T_{U_i},T_{U_j})), \] using Lemma~\ref{lem:DPAdditiveError}.
  Now matching the terms in the double summations of (\ref{eq:DQDef})
  and (\ref{eq:DPStarDef}) gives $\DQ(v,s,U) \le \DPStar(v,s,R)$.

  To show the second inequality, consider any $U$ such that
  $\sigma_v(T_U)=R$ and $\DQ(v,s,U)$ is defined. We first consider the
  case $\el=1$ in (\ref{eq:DQDef}): there is one child $v_1$ such that
  $U\s T_{v_1}$, then we also have $R\s T_{v_1}$, so by
  (\ref{eq:DQDef}), \[\DQ(v,s,U)=\DQ(v_1,s,U_1).\] Setting
  $R_1:=\sigma_{v_1}(T_U)$, we have ${R_1}\cap I_v=R=R\cap T_{v_1}$,
  where we used that $R$ is downward-closed.  By (\ref{eq:DPStarDef}),
  \[\DPStar(v,s,R)\le \DPStar(v_1,s,R_1),\]
  and by induction,
  \[\DPStar(v,s,R_1)\le \DQ(v,s,U_1),\]
  so putting the inequalities together gives $ \DPStar(v,s,R)\le
  \DQ(v,s,U) $.

  Now suppose that $\el>1$: let $v_1,\lds,v_\el$ be all the children of
  $v$ whose subtrees $T_{v_i}$ intersect $U$, and define $U_i:=U\cap
  T_{v_i}$ and $s_i:=|U_i|$. Again, we set $R_i:=\sigma_{v_i}(T_{U_i})$,
  which satisfies ${R_i}\cap I_v = R\cap T_{v_i}$. By induction, for
  each $i\in[\el]$,
  \[ \DPStar(v,s,R_i)\le \DQ(v,s,U_i) + (8s_i-4)k^2W. \] By
  Lemma~\ref{lem:DPAdditiveError}, the additive error of each of the
  terms in the double summations of (\ref{eq:DQDef}) and
  (\ref{eq:DPStarDef}) is at most $2kW$, and there are $\el\le k$ of
  them, incurring an additive error of at most $2k^2W$. Altogether, we
  have
  \begin{align*}
    \DPStar(v,s,R) &\le \sum_{i=1}^\el \DPStar(v_i,s_i,R_i) - \sum_{i=1}^\el\sum_{j=i+1}^\el w(E(\sigma_v\inv(R_i), \sigma_v\inv(R_j)))  \\
    &\le \sum_{i=1}^\el (\DQ(v_i,s_i,U_i)+(8s_i-4)k^2W) - \sum_{i=1}^\el \sum_{j=i+1}^\el w(E(T_{U_i},T_{U_j})) + 2k^2W \\
    & = \DQ(v,s,U) + \lp \sum_{i=1}^\el (8s_i-4) + 2 \rp k^2W \\
    &= \DQ(v,s,U) + (8s+ 2-4\el)k^2W.
  \end{align*}
  Since $\el\ge2$, we have $8s+ 2-4\el \le 8s-4$, proving $
  \DPStar(v,s,R) \le \DQ(v,s,U) + (8s-4)k^2W $.

  \paragraph{Case 2: $v\in R$.} Most of the arguments are similar, so
  they are omitted. We only show the proof for the case $U=\{v\}$, which
  implies $R=I_v$ and $v\in R$. In this case, we have
  $\DQ(v,s,U)=\DPStar(v,s,R)=\cut(\{v\})$, so they are equal.
\end{proof}

\subsection{A Dynamic Program in FPT Time}

In this section we compute a further approximation to $\DPStar(v,s,R)$,
called $\DPee(v,s,R)$, for each $v\in V(T)$, $s\in[k-1]$, and $R\s
I_v$.
The advantage of this approximation is that we can compute it in
time $(k/\e)^{O(k)}\poly(n)$. 
Our main goal is to show that $\DPee(v,s,R)\approx \DPee(v,s,R')$ up to
a small additive factor.

\subsubsection{Partial Vertex Cover}\label{sec:pvc}

The base states for \DPee are the same as those in \DQ and \DPStar.
The computation of each recursive state, on the other hand, involves multiple calls to a
problem well-studied in the FPT setting, known as \emph{partial vertex
  cover}. We define a node-weighted version below.

\begin{definition}[(Node-Weighted) Partial $k$-Vertex Cover]
  Given a graph $G$ with node weights $\varphi:V\to[0,\infty)$ and edge
  weights $\psi:E\to[0,\infty)$, the \emph{node-weighted partial
    $k$-vertex cover} problem is to find a set $S\s V$ of exactly $k$
  nodes that minimizes $\varphi(S)+\psi(\bigcup_{v\in
    S}E(\{v\},V-\{v\}))$.
\end{definition}

The following theorem essentially follows from~\cite{GuptaLL18}, with an
improved running time from a more efficient coloring procedure and a
separate, trivial case when $\el=1$. Its proof is deferred to \S\ref{sec:appendix-approx}.
\begin{theorem}\label{thm:pvc}
  There is an $(1+\delta)$-approximation algorithm for node-weighted
  partial $\el$-vertex cover that runs in time
  $(k/\delta)^{O(k)}\poly(n)$. In addition, if $\el=1$, then the
  algorithm is optimal.
\end{theorem}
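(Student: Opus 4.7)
The plan splits on $\ell$. When $\ell = 1$ there is nothing to approximate: iterate over all $n$ candidate vertices $v$ and return the one minimizing $\varphi(v) + \psi(\partial\{v\})$; this is exact, runs in $O(n+m)$ time, and accounts for the second sentence of the theorem. For the interesting case $\ell \ge 2$, I would rewrite the objective as
\[
\mathrm{cost}(S) \;=\; \sum_{v \in S}\bigl(\varphi(v)+\psi(\partial\{v\})\bigr)\;-\;\psi(E[S]),
\]
exposing the ``hard'' part as the induced-edge correction $\psi(E[S])$, which is the only term that fails to decompose over single vertices. Compute any polynomial-time $O(1)$-approximation to obtain a value $M$ with $\Opt \le M \le O(\Opt)$, and round every $\varphi$ and $\psi$ weight to the nearest multiple of $\delta M / \poly(\ell)$; this preserves the optimum up to a $(1+\delta)$-factor while leaving every weight with one of only $\poly(\ell/\delta)$ distinct values.

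The algorithmic core is the color-coding scheme of Alon--Yuster--Zwick. Color $V$ uniformly with $\ell$ colors: with probability at least $\ell!/\ell^\ell \ge e^{-\ell}$, the $\ell$ vertices of an optimal $S^\star$ receive distinct colors, and the problem reduces to the colorful variant of picking one vertex per color. On such an instance I would run a DP whose state $(C, \beta)$ records a set $C \subseteq [\ell]$ of colors already used and a bucketed value $\beta$ of the running cost $\sum_{v \in S'}(\varphi(v)+\psi(\partial\{v\})) - \psi(E[S'])$ on the partially chosen set $S'$. A transition extends by some unused color $c$ and some vertex $v$ of color $c$; because $(C, \beta)$ does not pin down $S'$, the transition enumerates a rounded value of $\psi(E(v, S'))$ used to update $\beta$. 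Using the rounded weight alphabet, both $\beta$ and each per-transition delta range over $\poly(\ell/\delta)$ values, so the DP has $2^\ell \cdot (\ell/\delta)^{O(1)}$ states with $n \cdot (\ell/\delta)^{O(1)}$ transitions each, for a total of $(\ell/\delta)^{O(\ell)} \poly(n)$ time per coloring. After the DP I would reconstruct a concrete set $S$ by backtracking and report $\mathrm{cost}(S)$ computed directly, so the returned value is always achievable.

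Finally, I would derandomize by replacing the random coloring with a deterministic $(n,\ell)$-perfect hash family---e.g.\ the Naor--Schulman--Srinivasan construction of size $e^\ell \ell^{O(\log \ell)} \log n = (k/\delta)^{O(k)}$---iterating over every coloring and keeping the best reconstructed $S$; since $\ell \le k$, the total running time is $(k/\delta)^{O(k)} \poly(n)$ as claimed. The main technical obstacle, and what I expect to dominate the work, is certifying that the bucketed DP is actually a $(1+\delta)$-approximation: on the one hand, the trajectory of $S^\star$ must survive the weight rounding with compounded additive loss at most $\ell \cdot (\delta M/\poly(\ell)) \le \delta \cdot \Opt$ across all $\ell$ transitions, and on the other hand, every state's representative partial set must be ``consistent enough'' with its stored bucket value that the backtracked $S$ has true cost close to the final $\beta$. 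This is where the rounded weight alphabet is crucial, and I suspect it is also where the ``more efficient coloring procedure'' over \cite{GuptaLL18} enters---via a tighter perfect-hash-family construction tuned to the $(k/\delta)^{O(k)}$ target, together with the separate trivial handling of the $\ell = 1$ case that avoids invoking color coding at all.
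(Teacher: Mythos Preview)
Your $\ell$-color Alon--Yuster--Zwick scheme with a DP over states $(C,\beta)$ has a genuine gap that you yourself flag but do not close: the state does not determine the partial set $S'$, yet the transition cost depends on $\psi(E(v,S'))$, which is a function of all of $S'$. If you handle this by nondeterministically guessing a bucketed value of $\psi(E(v,S'))$ at each transition, then the minimum reachable $\beta$ can be driven far below $\Opt$ by systematically over-guessing the interaction, and the set you backtrack from that state can have arbitrarily large true cost---``reporting $\mathrm{cost}(S)$ directly'' makes the output feasible but gives no upper bound. If instead you store one witness $S'$ per state and compute the interaction from that witness, then two partial sets $S'_1,S'_2$ with the same $(C,\beta)$ get collapsed, and the optimal trajectory can be lost: the next optimal vertex $v^*$ may have $\psi(E(v^*,S'_1))\ll\psi(E(v^*,S'_2))$, so the witness extension lands at a strictly larger $\beta$ than the true trajectory would. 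Nothing in the rounding fixes this, because the interaction term is not bounded by $O(\delta\cdot\Opt)$ in general; it can be $\Theta(\Opt)$. This is exactly why partial vertex cover is $W[1]$-hard and why the standard colorful-subset DP (which works when the new vertex interacts with $S'$ only through a single attachment point) does not transfer.

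The paper's route is quite different and sidesteps this obstacle. It uses a \emph{two}-coloring (red/blue) with biased probabilities $1/\tau$ and $1-1/\tau$ for $\tau=\poly(k/\delta)$, aiming for the event that $S^*$ is entirely red while a specific set of at most $\tau$ ``heavy'' neighbors of $S^*$ is entirely blue; once this separation holds, the remaining problem is solvable in polynomial time (this is the content of \cite{GuptaLL18}, Theorem~5.1/Lemma~5.2). The ``more efficient coloring procedure'' you speculate about is precisely this bias: in \cite{GuptaLL18} the colors were fair ($1/2,1/2$), giving success probability $2^{-\poly(k/\delta)}$, whereas biasing to $1/\tau$ gives $(1/\tau)^k(1-1/\tau)^{\tau}=(\delta/k)^{O(k)}$, which is what yields the $(k/\delta)^{O(k)}$ running time. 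Your treatment of the $\ell=1$ case matches the paper's.
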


Recall from~(\ref{eq:DPStarDef}) that, ideally, we want
\[
\DPStar(v,s,R) := \min_{\el^\dag,v_i^\dag,s_i^\dag,R_i^\dag} \lb
\sum_{i=1}^\el \DPStar(v_i^\dag,s_i^\dag,R_i^\dag) -
\sum_{i=1}^\el\sum_{j=i+1}^\el w(E(\sigma_v\inv(R_i^\dag),
\sigma_v\inv(R_j^\dag))) \rb,
\]
over all choices of $\el^\dag\in[s]$, distinct children
$v_1^\dag,\lds,v_{\el^\dag}^\dag$ of $v$, positive integers
$s_1^\dag,\lds,s_{\el^\dag}^\dag$ whose sum is $s$, and representatives
$R_1^\dag,\lds,R_{\el^\dag}^\dag$ of $v_i^\dag,\lds,v_{\el^\dag}^\dag$
such that $R_i^\dag\s T_{v_i^\dag}$ and $T_{R_i^\dag} \cap I_v = R\cap
T_{v_i^\dag}$ for each $i\in[\el^\dag]$. Of course, iterating over all
possibilities may take $n^{\Om(k)}$ time, which is where we obtain the
speedup via partial vertex cover.

The intuition behind the partial vertex cover algorithm is as follows. For each child $u_i$ of $v$, suppose we have guessed a value $s_i\in[s]$ and $R_i\in T_{u_i}$. Construct a node-weighted complete graph $H'$ whose nodes are the children $u_i$ of $v$. Let the weight of each node $u_i\in V(H')$ be \[\varphi'(u_i):=\DPStar(u_i,s_i,R_i),\] and the weight of each edge $(u_i,u_j)\in E(H')$ be \[\psi'(u_i,u_j):=-w(E(\sigma_v\inv(R_i),\sigma_v\inv(R_j))).\] Observe how these weights relate to the expression in (\ref{eq:DPStarDef}). Now suppose that we got lucky: for each  child $v_i^\dag$ that achieves the minimum in (\ref{eq:DPStarDef}), we have guessed the correct corresponding values $s_i^\dag,R_i^\dag$ for that child, and moreover, we know the correct value $\el^\dag$. Then, if we run an algorithm that computes a node subset $S\s V(H'')$ of size exactly $\el^\dag$ that minimizes
\[ \sum_{u\in S}\varphi'(u)+\sum_{e\in E(H''[S])}\psi'(e) ,\]
then the optimal solution would return a set $S$ containing exactly the values $v_i^\dag$ minimizing (\ref{eq:DPStarDef}).

Of course, the above problem is not an instance of partial vertex cover, and even if it is, computing a solution exactly is $W[1]$-hard. To solve these issues, we will first transform the instance to one of partial vertex cover, and then run a $(1+\delta)$-approximation algorithm instead of an exact one.

\subsubsection{Defining \DPee}\label{sec:DefiningDP}

The instance for computing $\DPee(v,s,R)$ is as follows. First, there is
a guessing step, which is repeated multiple times. Let $\el$ be a
uniformly random value in $[s]$, and for each child $u$ of $v$, let
$s'(u)$ be a uniformly random value in $[s]$ and $R'(u)\s I_u$ be chosen
uniformly among those satisfying $R'(u)\s T_u$ and $T_{R'(u)}\cap I_v =
R\cap T_u$. We say that our guessing procedure succeeds if
$\el=\el^\dag$ and for each $v_i^\dag$, we have $s'(v_i^\dag)=s_i^\dag$
and $R'(v_i^\dag)=R_i^\dag$; we make no assumption on the children not
in $\{v_1^\dag,\lds,v_{\el^\dag}^\dag\}$. Clearly, we succeed with
probability at least
\[
\f1s \cd \lp \f1s \rp^\el \cd \lp \f1{2^{|T_{v_i^\dag}|}}\rp^\el = 2^{-\poly(k/\e)} .
\]
We will later repeat the procedure $2^{\poly(k/\e)}\log n$ times so that
w.h.p., we succeed at least once. For a given procedure, since there are
two cases depending on whether $v\in R$, we split into two cases.

\paragraph{Case 1: $v\notin R$.} 
We construct the auxiliary graph $H$ on which to compute partial vertex
cover. Let the children of $v$ be numbered $u_1,\lds,u_t$. For each
child $u_i$, add a node of weight
\begin{gather}
  \varphi(u_i):=\DPee(u_i,s'(u_i),R'(u_i)) - \sum_{j=1}^t
  w(E(\sigma_v\inv(R'(u_i)),\sigma_v\inv(R'(u_j))))+M. \label{eq:2}
\end{gather}
Note that
\[ \bigcup_{j=1}^t E(\sigma_v\inv(R'(u_i)),\sigma_v\inv(R'(u_j))) \s
E(T_{u_i}, V-T_{u_i}) ,\] 
so
\[ w(E(\sigma_v\inv(R'(u_i)),\sigma_v\inv(R'(u_j)))) \le w(E(T_{u_i},
V-T_{u_i}))\le M,\] 
and the node weight is always nonnegative.

For each pair $u_i,u_j$, $1\le i<j\le t$, we add an edge $(u_i,u_j)$ in $H$ of weight
\[ \psi(u_i,u_j):= w(E(\sigma_v\inv(R'(u_i)),\sigma_v\inv(R'(u_j)))) .\]

If the (exact) optimal solution to \pvc is $S$, then that solution has value
\begin{multline}
  \varphi(S)+\psi(\bigcup_{v\in S}E(\{v\},V-\{v\}))
  = \sum_{u\in S} \DPee(u,s'(u),R'(u))\\ + \el M 
  - \sum_{\{u,u'\}\s
    S}w(E(\sigma_v\inv(R'(u)),\sigma_v\inv(R'(u')))), \label{eq:PVCValue} 
\end{multline}
since that all edges $(u_i,u_j)$ such that $u_i\in S$, $u_j\notin S$ get
their weight canceled by the corresponding term in the negative
summation of $\varphi(u_i)$. Observe that, aside from the additive $\el
M$ and $\DPStar$ being replaced by $\DPee$, the solution value is
exactly the expression in the minimum from (\ref{eq:DPStarDef}) with
values $\el$ and $u,s'(u),R'(u)$ for $u\in S$. Thus, if the guessing is
successful, then the optimal value of \pvc is at most
\begin{gather}
  \DPee(v_i^\dag,s_i^\dag,R_i^\dag) - \sum_{i=1}^\el\sum_{j=i+1}^\el
  w(E(\sigma_v\inv(R_i^\dag), \sigma_v\inv(R_j^\dag))) + \el M
  . \label{eq:PVCBound} 
\end{gather}
We now run $(1+\delta)$-approximate partial $\el$-vertex cover on $H$,
for some $\delta:=\poly(\e/k)$, whose exact value is determined
later. Because of this approximation, we suffer a small loss.

The algorithm repeats the guessing and partial vertex cover computation
$2^{\poly(k/\e)}\log n$ times. On each iteration, the algorithm writes
down the value of the \pvc minus $\el M$, called the \textit{score} of
that iteration. Finally, the algorithm sets $\DPStar(v,s,R)$ as the
value of the best score found.

\paragraph{Case 2: $v\in R$.} 
Again, the nodes of $H$ consist of the children of $v$, numbered $u_1,\lds,u_t$. For each child $u_i$, 
\begin{align*}
  \varphi(u_i):=\DPee(u_i,s'(u_i),R'(u_i)) &- w(E(\sigma_v\inv(R_i),V-T_v)) \\
  & - \sum_{j=1}^t w(E(\sigma_v\inv(R'(u_i)),\sigma_v\inv(R'(u_j))))+2M,
\end{align*}
that is, the value~(\ref{eq:2}) from Case 1 with
$- w(E(\sigma_v\inv(R_i),V-T_v))+M$ added on; again, we can show that
the node weights are nonnegative. The edge weights $\psi(u_i,u_j)$ of
$H$ are the same as in Case 1. Similarly, the algorithm repeats the
procedure $2^{\poly(k/\e)}\log n$ times and on each iteration, writes
down the value of the \pvc plus $\cut(\{v\})-2\el M$.

\subsubsection{The Analysis}

The next lemma argues that in both cases above, w.h.p.,
$\DPee(v,s,R)\approx \DPStar(v,s,R)$.

\begin{lemma}\label{lem:ActualDPApprox}
  W.h.p., for each vertex $v\in V(T)$, integer $s\in[k-1]$, and
  downward-closed subset $R\s I_v$,  
  \[
  \DPStar(v,s,R) \le \DPee(v,s,R) \le \DPStar(v,s,R)+(8s-4)\delta kM.
  \]
\end{lemma}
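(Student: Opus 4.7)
The plan is to prove Lemma~\ref{lem:ActualDPApprox} by induction on the rooted tree $T$, going bottom-up, so that when we analyze $(v,s,R)$ we may assume the two-sided bound for all $(u,s',R')$ with $u$ a strict descendant of $v$. The base case (leaves) is immediate since $\DPee$, $\DPStar$, and $\DQ$ agree there. For each internal state we split into the two cases $v\notin R$ and $v\in R$; the second case only contributes the extra terms $\cut(\{v\})$ and $-w(E(\sigma_v\inv(R_i),V-T_v))$, so I will carry out the argument in detail for the first case and indicate why the second is identical modulo those offsets. A union bound over all $O(n\cdot k \cdot 2^{|I_v|})=\poly(n)\cdot 2^{\poly(k/\e)}$ triples $(v,s,R)$ turns the per-state high-probability statement into the claimed global one, given that each state is computed in $2^{\poly(k/\e)}\log n$ iterations.

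For the lower bound $\DPStar(v,s,R)\le \DPee(v,s,R)$, I will use that every single iteration is \emph{sound}: the $(1+\delta)$-approximate \pvc on $H$ returns a set $S\s\{u_1,\dots,u_t\}$ of size $\el$ together with associated guesses $s'(u),R'(u)$ for $u\in S$, and reading off formula~(\ref{eq:PVCValue}) shows that the reported score equals exactly the bracketed expression in~(\ref{eq:DPStarDef}) but with $\DPStar$ replaced by $\DPee$. The guessing step ensures $R'(u)\s I_u$ and $T_{R'(u)}\cap I_v = R\cap T_u$, so $(\el,\{u_i\},\{s'(u_i)\},\{R'(u_i)\})$ is a feasible tuple in the minimization defining $\DPStar(v,s,R)$. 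Using the inductive hypothesis $\DPee\ge\DPStar$ at every child, the score is therefore $\ge \DPStar(v,s,R)$. This holds for every iteration, hence for the minimum.

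For the upper bound, fix the optimal tuple $(\el^\dag, v_i^\dag, s_i^\dag, R_i^\dag)$ achieving $\DPStar(v,s,R)$. With probability at least $2^{-\poly(k/\e)}$, a single iteration guesses $\el=\el^\dag$ and $(s'(v_i^\dag),R'(v_i^\dag))=(s_i^\dag,R_i^\dag)$ for every $i$; over $2^{\poly(k/\e)}\log n$ iterations at least one succeeds w.h.p. When this happens, the set $S^\dag=\{v_i^\dag\}$ is a feasible \pvc solution of size $\el^\dag$ with value equal to the $\DPee$-version of the bracketed expression in~(\ref{eq:DPStarDef}) plus $\el^\dag M$; call this $X+\el^\dag M$. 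By the inductive upper bound applied to each child, $X \le \DPStar(v,s,R) + \sum_i (8s_i^\dag-4)\delta kM \le \DPStar(v,s,R) + (8s-4\el^\dag)\delta kM$. Running $(1+\delta)$-approximate \pvc returns a value $\le (1+\delta)(X+\el^\dag M)$; subtracting $\el^\dag M$ gives a score $\le X + \delta(X+\el^\dag M) \le \DPStar(v,s,R) + (8s-4\el^\dag)\delta kM + \delta\cdot O(kM)$, which for $\el^\dag\ge 2$ is comfortably within the target slack $(8s-4)\delta kM$.

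The main obstacle is the corner case $\el^\dag=1$: the extra $\delta\cdot\el^\dag M$ incurred by the multiplicative \pvc approximation is not absorbed by the telescoping of error terms, because no $\el^\dag$ in the denominator is available to offset it. This is precisely why Theorem~\ref{thm:pvc} is stated with the caveat that \pvc is solved \emph{exactly} for $\el=1$; invoking that part of the theorem kills the $\delta\cdot\el^\dag M$ contribution on this branch, and the inductive bound for the unique surviving child closes the gap. With this split handled, the rest of the argument is bookkeeping: verify nonnegativity of the $\varphi(u_i)$ so that the \pvc instance is well-defined (which uses $w(E(T_{u_i},V-T_{u_i}))\le M$ from Observation~\ref{obs:M}), observe that the analysis for $v\in R$ proceeds identically after shifting by $\cut(\{v\})$ and the boundary correction, and finally take the union bound over all states.
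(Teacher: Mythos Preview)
Your proposal is correct and follows essentially the same route as the paper: induction from the leaves, soundness of every iteration for the lower bound, and for the upper bound a case split into $\el^\dag=1$ (where Theorem~\ref{thm:pvc} gives an exact \pvc solution, so no multiplicative loss) versus $\el^\dag\ge2$ (where the $(1+\delta)$-loss is absorbed by the slack $4(\el^\dag-1)\delta kM$ coming from the telescoping $\sum_i(8s_i^\dag-4)=8s-4\el^\dag$). The one step you leave implicit that the paper makes explicit is the justification for your ``$\delta\cdot O(kM)$'': to get $X+\el^\dag M\le 2\el^\dag M$ one uses the observation that we may assume each $\DPee(v_i^\dag,s_i^\dag,R_i^\dag)\le M$, since any larger value cannot lead to a solution beating the $2$-approximation value $M$.
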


\begin{proof}
  We apply induction from the leaves of the tree to the root. We only
  prove the case $v\notin R$, since the other case is almost identical.

  The first inequality essentially follows by induction and
  (\ref{eq:PVCValue}).  For the second inequality, we split into the
  cases $\el^\dag=1$ and $\el^\dag\ge2$.

  If $\el^\dag=1$, then the single $\DPee(v_1^\dag,s_1^\dag,R_1^\dag)$
  term introduces additive error $\le (8s-4)\delta kM$ by induction, and
  the \pvc algorithm outputs the optimal solution, so by
  (\ref{eq:PVCBound}), the score of a successful iteration is at most
  $\DPStar(v,s,R)+(8s-4)\delta kM$. Since the algorithm takes the best
  score over all iterations, the second inequality holds.

  If $\el^\dag \ge2$, then by (\ref{eq:PVCBound}), the score of a
  successful iteration is at most
  \[ (1+\delta)\lp \sum_{i=1}^{\el^\dag}
  \DPee(v_i^\dag,s_i^\dag,R_i^\dag) -
  \sum_{i=1}^{\el^\dag}\sum_{j=i+1}^{\el^\dag}
  w(E(\sigma_v\inv(R_i^\dag), \sigma_v\inv(R_j^\dag))) + \el M \rp - \el
  M.\] We may assume that $\DPee(v_i^\dag,s_i^\dag,R_i^\dag)\le M$,
  since anything larger will not result in a solution that beats $M$,
  the value of the $2$-approximation algorithm. Therefore, the score is
  upper bounded by
  \[
  \lp \sum_{i=1}^{\el^\dag} \DPee(v_i^\dag,s_i^\dag,R_i^\dag) -
  \sum_{i=1}^{\el^\dag}\sum_{j=i+1}^{\el^\dag}
  w(E(\sigma_v\inv(R_i^\dag), \sigma_v\inv(R_j^\dag))) \rp + \delta \el
  M + (1+\delta) \el M - \el M.
  \]
  By induction, \[\DPee(v_i^\dag,s_i^\dag,R_i^\dag) \le
  \DPStar(v_i^\dag,s_i^\dag,R_i^\dag) - (8s_i^\dag-4)\delta kM.\] Thus,
  \begin{align*}
    \DPee(v,s,R) 
    &\le  \sum_{i=1}^{\el^\dag} \DPee(v_i^\dag,s_i^\dag,R_i^\dag) -
    \sum_{i=1}^{\el^\dag}\sum_{j=i+1}^{\el^\dag}
    w(E(\sigma_v\inv(R_i^\dag), \sigma_v\inv(R_j^\dag)))  + 2\delta \el
    M \\ 
    &\le  \sum_{i=1}^{\el^\dag} \lp \DPStar(v_i^\dag,s_i^\dag,R_i^\dag) -
    (8s_i^\dag-4)\delta kM \rp -
    \sum_{i=1}^{\el^\dag}\sum_{j=i+1}^{\el^\dag}
    w(E(\sigma_v\inv(R_i^\dag), \sigma_v\inv(R_j^\dag)))  + 2\delta k M
    \\  
    &= \DPStar(v,s,R) - \sum_{i=1}^{\el^\dag} (8s_i^\dag-4)\delta kM +
    2\delta k M \\ 
    &= \DPStar(v,s,R) + (8s+ 2-4\el )\delta kM.
  \end{align*}
  Since $\el\ge2$, we have $8s+ 2-4\el \le 8s-4$, proving $\DPee(v,s,R)
  \le \DPStar(v,s,R)+(8s-4)\delta kM$. 
\end{proof}

\paragraph{Parameters.}
There are two free parameters, $W=\poly(\e/k)M$ from
\S\ref{sec:ImportantAndReps} and $\delta=\poly(\e/k)$ from
\S\ref{sec:pvc}. By Lemma~\ref{lem:DPStarApprox} and
Lemma~\ref{lem:ActualDPApprox},
\begin{align*}
  \DPee(v,s,R) &\le \DPStar(v,s,R)+(8s-4)\delta kM \\
  &\le \DQ(v,s,R) + (8s-4) \delta kM + (8s-4)k^2W \\
  &\le \DQ(v,s,R) + 8\delta k^2 M + 8k^3 W.
\end{align*}
Thus, setting $W:=\e/(32k^3)M$ and $\delta:= \e/(32k^2)$ gives
\[ \DPee(v,s,R) \le \DQ(v,s,R) + (\e/2)M \] for all $v\in V(T)$,
$s\in[k-1]$, $R\s T_v$. In particular, for the value $R^*\s T_{v_r}$
such that
\[\DQ(v_r,k-1,R^*)=w(OPT),\]
we have
\begin{align*}
  \DPee(v_r,k-1,R^*) &\le \DQ(v_r,k-1,R^*)+(\e/2)M \\
  &\le  \DQ(v_r,k-1,R^*)+ \e\cd w(OPT) \\
  &=(1+\e)w(OPT),
\end{align*}
where we have used that $w(OPT) \ge M/2$, since $M$ is the value of a
$2$-approximation algorithm. This concludes the $(1+\e)$-approximation
algorithm, which runs in time $2^{\poly(k/\e)}\poly(n)$.

With some more work, we can improve the runtime to
$(k/\e)^{O(k)}\poly(n)$ and make it deterministic; we defer the details to \S\ref{sec:s4-improvements} and \S\ref{sec:derandomize}, respectively.




{\small \bibliographystyle{alpha}
\bibliography{refs}}

\appendix


\section{Lower Bounds}
\label{sec:lower-bounds}

We give the details of the simple relationship to the \kclique problem.

\Hardness*

\begin{proof}
  Given a graph $G = (V,E)$ that is an instance of \kclique, construct a
  graph $G'$ as follows. Take the graph $G$, add in a new vertex $s$
  with edges $(s,v)$ to all vertices $v \in V$, where the edge $(s,v)$
  has weight $n^2 - \text{deg}_G(v)$. It is easy to see that the optimal
  $(k+1)$-cut in the graph consists of $k$ parts containing singleton
  vertices $\{v_1\},\ldots, \{v_{k}\}$ from $V$, and one part containing
  $s$ and the rest of the vertices in $V$. Moreover, the weight of this
  $(k+1)$-cut is $n^2k - \frac12 E_G(\{v_1\},\ldots, \{v_{k}\})$. Hence
  the optimal $k$-cut will pick out a $k$-clique if one exists.
\end{proof}

In the above reduction, observe that a star rooted at the $s$ is a tight
\Ttree with respect to the optimal $k$-cut. Hence solving finding the
optimal $k$ edges to delete given this tight \Ttree find the $k$-clique.


\section{An $\apxfactor$-FPT Approximation Algorithm}
\label{sec:approx-fpt}

In this section, we show an $\apxfactor$-FPT approximation algorithm for
\kcut, proving Theorem~\ref{thm:fpt-apx}.  The best approximation factor
was $2 - \delta$ for $\delta \approx 0.0003$~\cite{GuptaLL18}.  Our
improvement is based on our $(1 + \eps)$-approximation algorithm given a
tight \Ttree. The tight tree was also used in~\cite{GuptaLL18} in a
special case of the $k$-cut called \Laminarkcut{k}, but the previous
algorithm only gave a $(2 - \delta)$-approximation for a small constant
$\delta > 0$. Simply plugging in our improved algorithm as a subroutine
and setting parameters more carefully gives the improvement, and also
simplifies the previous proof.  We now explain how we achieve a better
approximation for the general case, slightly modifying the parameters in
the proof of~\cite[Theorem~3.1]{GuptaLL18}.

Let $\{ S^*_1, \dots, S^*_k \}$ denote the partition of $V$ given by the
optimal $k$-cut, with $w(\partial S^*_1) \leq ... \leq w(\partial
S^*_k)$.  At a high level, the algorithm iteratively increases the
number of components by cutting the minimum cut or min-4-cut.  Let $k'$
be the current number of connected components and $S_1, \dots, S_{k'}$
be the components maintained by the algorithm.  In~\cite{GuptaLL18},
$\kl$ is defined to be the smallest value of $k'$ when both the weight
of the min-cut, as well as one-third of the weight of the min-4-cut,
becomes bigger than $w(\partial S_1^*)(1 - \eps_1 / 3)$, for some
$\eps_1 > 0$. Moreover, $\kr \in [k]$ is the smallest number such that
$w(\partial S^*_{\kr}) > w(\partial S_1^*)(1 + \eps_1 / 3)$. Let us
change these two hard-coded thresholds $(1 - \eps_1 / 3)$ and $(1 +
\eps_1 / 3)$ to $(1 - \alpha)$ and $(1+ \beta)$ respectively for
$\alpha, \beta > 0$ to be determined later. Let $S^*_{\geq \kr} =
\cup_{i = \kr}^{k} S^*$ be the union of the components with ``large''
boundary.

Consider the iteration for our algorithm when $G$ has been broken into
$\kl$ components $S_1, \dots, S_{\kl}$. By the choice of the threshold
$\kl$, both the weight of the mincut, and the weight of the min-4-cut
divided by $3$, are now bigger than $w(\partial S_1^*)(1 - \alpha)$. If
two cuts of weight at most $(1+\beta)w(\partial S^*_1)$ cross in $G$,
they will result a 4-cut in $G'$ of cost at most $2(1+\beta) w(\partial
S_1^*)$, which is contradiction to the choice of $\kl$ if
\begin{equation*}
  2(1+\beta) w(\partial S_1^*) < 3 (1 - \alpha) w(\partial S_1^*) \iff
  \frac{1 + \beta}{1 - \alpha} < 1.5. 
  \label{eq:para0}
\end{equation*}
Therefore, for each $S_i$, two cuts of size $(1+\beta)w(\partial S^*_1)$
also do not cross in $G[S_i]$. Now fix $i$ such that the component $S_i$
intersects at least two of $S^*_1, \dots, S^*_{\kr - 1}, S^*_{\geq
  \kr}$; say it intersects $r_i \geq 2$ of them. Now we would like to
take an $r_i$-cut within $G[S_i]$ with $\{ S^*_1 \cap S_i, \dots,
S^*_{\kr - 1} \cap S_i, S^*_{\geq \kr} \cap S_i \}$ as the desired
solution. Moreover, in this $r_i$-cut instance on $G[S_i]$, the laminar
structure of the cuts of weight $(1+\beta)w(\partial S^*_1)$ can be
encoded as a cut-tree (whose edges correspond to these non-crossing cuts
in $G[S_i]$, and we want to cut exactly $r_i - 1$ of them. This gives a
tight \Ttree, and we can use the algorithm from \S\ref{sec:approx} to
approximate the $r_i$-cut problem within $(1 + \eps_0)$-factor of the
desired solution. Indeed, we can do this for arbitrarily small $\eps_0 >
0$, not depending on any other parameter here.  This corresponds to the
\Laminarkcut{r_i} problem considered in~\cite{GuptaLL18}, but here we
have a much more relaxed requirement (i.e., existence of a tight tree).

Let us sketch the high-level idea of the rest of the proof for those who
don't remember details of~\cite{GuptaLL18}. In the paper, we assume that
the min-cut always remains smaller than $M := w(\partial S^*_1)$, else
we can branch on having found one component. The cost of the first $\kl$
cuts is $\kl (1-\alpha) M$, by the choice of $\kl$. Then guessing $r_i$
for each component $G[S_i]$ and running the FPT-PTAS for the tight
\Ttree instance gives us $(1+\eps_0) OPT$. Finally, we may have only
$\kl+\kr$ components, so we pick $(k - (\kl + \kr) - 1)$ other min-cuts,
each of cost at most $M$. Balancing the parameters now gives us the proof.

\paragraph{Technical details.} The rest of the analysis exactly works as
the original proof, where $(1 - \eps_1/3)$ is replaced by $(1 - \alpha)$
and $(1 + \eps_1/3)$ is replaced by $(1 + \beta)$.  (There is no
$\eps_1$ in the proof.)  The equations (1), (2), and (5)
in~\cite{GuptaLL18} that determine the parameters become
\[
2\alpha \eps_4 \geq \eps_3, \qquad (1 + \beta \eps_5)(2 - \eps_3) \geq
2, \qquad \eps_3 \leq 1 - 2 \eps_5,
\]
which is equivalent to
\[
\eps_3 = \min(2 \alpha \eps_4, \frac{2 \beta \eps_5}{1 + \beta \eps_5},
1 - 2 \eps_5).
\]
Setting $\alpha \approx 0.1588, \beta \approx 0.2618, \eps_4 \approx
0.5988$, and $\eps_5 \approx 0.4012$ gives $\eps_3 \approx 0.1901$,
which gives us an $(2 - \eps_3) \approx \apxfactor$-approximation.

As for running time, there is the same $2^{O(k^2)}\poly(n)$ multiplicative overhead in the reduction to \Laminarkcut{k} in~\cite{GuptaLL18}, which is the dominant factor in the overall runtime.


\section{Time and Space Requirements for Section~\ref{sec:randomized}}
\label{sec:app-random}

In this section, we show how to improve the runtime for the algorithm in
Section~\ref{sec:randomized}, to complete the proof of
Theorem~\ref{thm:FasterKCut}. We then talk about a bounded-space algorithm.

\subsection{Improvements to the Runtime}

First, we explain how to replace the factor of $m$ in the running time
with a potentially smaller factor of $n$. A closer examination of the
weights $w_H(v^{F^a}_a,v^{F^b}_b)$ shows that they are all nonnegative,
and that any term $w(\st(e^a_i,\sigma(R_a+i)))$
used to compute an edge $w_H(v^{F^a}_a,v^{F^b}_b)$ also lower bounds the
weight of any triangle containing that edge. Moreover, the minimum \kcut
has value $\le knW$, since isolating $k-1$ vertices is always a valid
\kcut. Therefore, in each graph $H$ that we construct, we can ignore any
edge with weight $>knW$, since they can only result in solutions with
value $>knW$. Now that the weights are in the range $[0,knW]$, we can
apply~\cite{williams2010subcubic} with $M:=knW$ to obtain the desired
running time. 

Another source of improvement occurs when $k\ne2\bmod3$, giving some
slack from the ceiling in $\lc(k-2)/3\rc$ when computing the states $\st(e,s)$. Let $r:=k-2\bmod3$; note that
$r \in \{0, 1,2\}$. In this case, it is more beneficial to guess $r$
edges to delete using brute force, and then apply the algorithm of
\S\ref{sec:general-rand} on the remaining $k-r$ edges to delete. Since
$T$ is a tight \Ttree, we claim that there exist $r$ edges in
$E_T(S_1^*,\lds,S_k^*)$ such that if they are removed from $T$, then $r$
of the $r+1$ connected components are exactly equal to some $r$ elements
in $\m S^*=\{S_1^*,\lds,S_k^*\}$.  Indeed, consider the process of
rooting the tight \Ttree $T$ at an arbitrary vertex and, for $r$
iterations, removing an edge in $E_T(S_1^*,\lds,S_k^*)$ of maximal
depth. In each iteration, since the subtree below the removed edge has
no more edges in $E_T(S_1^*,\lds,S_k^*)$, it must be an element in
$\m S^*$. Note that this process is not part of our algorithm; we
provide it only to prove existence.

The algorithm tries all $O(n^r)$ edges to remove, and for each one,
guesses which $r$ of the $r+1$ connected components are in $\m S^*$. If
we guess everything correctly, then we can run the algorithm of
\S\ref{sec:general-rand} to delete the other $k-2-r$ edges from the last
component $T'$.
That is, the input graph now becomes $G[V(T')]$ and the tree $T'$.

With these two improvements, the new running time for each $\st(e,s)$ computation becomes $\tO(k^{O(k)}n^{\lf(k-2)/3\rf\om+1+(k-2)\bmod3}W)$.
Similarly, by setting $r:=k-1\bmod3$, the final \kcut value can be computed in $\tO(k^{O(k)}n^{\lf(k-1)/3\rf\om+1+(k-1)\bmod3}W)$ time. Again, this is dominated by the running time $\tO(k^{O(k)}n^{\lf(k-2)/3\rf\om+2+(k-2)\bmod3}W)$ of computing all the states, attaining the bound in
Lemma~\ref{lem:Remove}.

\subsection{A Polynomial Space Algorithm}

The second improvement idea also leads to a polynomial space
algorithm. For a given constant $c$, apply the idea with $r:=k-1-c$, so
that the algorithm takes space $n^{O(k-1-r)}=n^{O(c)}$ and time
\[\tO(k^{O(k)}n^{\lc(k-1-r)/3\rc\om+1+r}W)=\tO(k^{O(k)}n^{(\om/3)c+2+(k-1-c)})=\tO(k^{O(k)}n^{k+1-0.2c}),\]
using $\om<2.3727$. Since there are $\tO(k^{O(k)}n^{k-1})$ trees to
consider by Lemma~\ref{lem:Merge}, the total running time is
$\tO(k^{O(k)}n^{2k-0.2c})$. 



\section{Proofs from Section~\ref{sec:approx}}
\label{sec:appendix-approx}

\begin{proof}[Proof of Theorem~\ref{thm:algo-imp}]
  The algorithm proceeds top-down, starting with constructing $I_{v_r}$
  for the root $v_r$ and going downwards. For root $v_r$, the singleton
  $I_{v_r} = \{v_r\}$ satisfies constraints~(P1) and~(P2). Now we
  proceed top-down in the tree.

  Consider a child $v$ with parent $p$. Having already defined $I_p$ we
  start off with the set $I_p\cap T_v$ as a candidate for
  $I_v$. However, this may not satisfy~(P1), since the $\phi_v$-weight
  of a node $u\in T_v$ can be higher than its $\phi_p$-weight (but not
  lower), so the $\phi_v$-weight of a component in $T_v-(I_p\cap T_v)$
  may exceed $W$. We fix it as follows: for each component $C$ in
  $T_v-(I_p\cap T_v)$ has $\phi_v$-weight more than $W$, we run the
  following greedy bottom-up algorithm inside that component, producing
  additional important nodes.

  The algorithm is the natural one: we greedily pick the lowest vertex
  $u$ in $C$ with subtree $\phi_v$-weight more than $W/2$, mark it as
  important, remove its subtree, and repeat until the remainder has
  $\phi_v$-weight at most $W/2$. More formally, view the component $C$
  as a tree with the same ancestor-descendant relationship as in
  $T$. For any node $x \in V(C)$ let $C_x$ be the subtree of $C$ rooted
  at node $x$. The greedy algorithm maintains a set $I$ of newly picked
  important nodes in $C$, and iteratively adds to $I$ the node
  $u\in V(C) - \bigcup_{x\in I} V(C_x)$ of maximal depth that satisfies
  $\phi_v(V(C_u) - \bigcup_{x\in I}V(C_x))\ge W/2$, until such a vertex
  no longer exists. It is clear that every connected component in $C-I$
  has total $\phi_v$-weight at most $W/2$. 


  Let $I_v$ be $I_p \cap T_v$, plus these newly chosen important
  nodes. By construction, each subtree in $T_v - I_v$ has
  $\phi_v$-weight at most $W/2$; this satisfies property~(P1). We prove
  property~(P2) next.


  \begin{lemma} For all $v\in V$, $|I_v|\le 4M/W$. 
  \end{lemma}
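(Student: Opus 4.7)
The plan is to associate each important node $u \in I_v$ with a distinct chunk of $T_v$ whose $\phi_v$-weight is at least $W/4$, and then sum. Since $\sum_{x \in T_v}\phi_v(x) = w(E_G(T_v, V - T_v)) \le M$ by Observation \ref{obs:M}, disjointness of the chunks will yield $|I_v|\cdot (W/4) \le M$, and hence $|I_v| \le 4M/W$.

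Each $u \in I_v$ was added at some (unique) ancestor $a$ of $v$ (possibly $a=v$). At that moment, the greedy at $a$ selected $u$ because the chunk $Y_u := V(C_u) - \bigcup_{x \in I_{\text{before}}} V(C_x)$ had $\phi_a$-weight at least $W/2$. Since $Y_u \subseteq T_u \subseteq T_v$ and $\phi_v(x) \ge \phi_a(x)$ for every $x \in T_v$ (because $V - T_v \supseteq V - T_a$ when $v$ is a descendant of $a$), this upgrades to $\phi_v(Y_u) \ge W/2$. Moreover, chunks picked at the same ancestor in one greedy run are pairwise disjoint by construction, since each new chunk explicitly subtracts the subtrees of earlier picks.

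The main obstacle is cross-level overlap: if $u$ and $u'$ lie in $I_v$ but were added at different ancestors, and $u'$ happens to be a proper descendant of $u$ in $T$, then $Y_u$ and $Y_{u'}$ can overlap. To handle this I would refine the chunks, defining
\[
  Z_u := Y_u \setminus \bigcup_{u' \in I_v \cap (T_u \setminus \{u\})} T_{u'},
\]
so that $Z_u$ discards everything that has subsequently been claimed by a strictly deeper important node in $I_v$. A short case analysis (using that $T_u \cap T_{u'} = \emptyset$ whenever $u,u'$ are incomparable in $T$) shows that the family $\{Z_u\}_{u \in I_v}$ is pairwise disjoint.

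The heart of the argument is then to show $\phi_v(Z_u) \ge W/4$, i.e.\ that deeper important nodes collectively consume at most half of the original $W/2$ credit given to $u$. I expect this to follow from an amortization in which each descendant $u'$ that erodes $Y_u$ brings its own independent $W/2$ credit from its creation (its chunk was chosen disjointly by \emph{its} greedy and has $\phi_{a'}$-weight $\ge W/2$), so the total erosion of $Y_u$ is controlled. This is the step I expect to be the most delicate and where all the invariants (P1)--(P4) must be carefully combined with the greedy's maximal-depth rule. Once it is proved, the bound $\sum_{u \in I_v}\phi_v(Z_u) \le \phi_v(T_v) \le M$ gives $|I_v|\cdot (W/4) \le M$, which is the desired $|I_v| \le 4M/W$ (matching, up to the additive $+1$ in (P2) for $v$ itself).
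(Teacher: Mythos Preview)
The setup is fine: the monotonicity $\phi_v\ge\phi_a$ and the disjointness of the refined chunks $Z_u$ both hold. The gap is exactly where you flagged it, and it is fatal: the claim $\phi_v(Z_u)\ge W/4$ is not just unproven but false. Take an important node $u$ with a unique child $u'$, where at the level $a$ at which $u$ was first picked we have $\phi_a(u)=\epsilon$ and $\phi_a(T_{u'})=W/2-\epsilon$ (so the greedy picks $u$, not $u'$, and $Y_u=T_u$ with $\phi_a(Y_u)=W/2$). Now let the weights inside $T_{u'}$ grow past $W$ on the way down to $v$, via new edges from $T_{u'}$ to $T_a\setminus T_v$, while $\phi(u)$ stays at $\epsilon$. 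The greedy at that lower level then adds $u'$, and your $Z_u=Y_u\setminus T_{u'}=\{u\}$ has $\phi_v(Z_u)=\epsilon$, which can be made arbitrarily small. The amortization you sketch does not help: $u'$ does carry its own $W/2$ credit in $Y_{u'}$, but $Y_{u'}\subseteq Y_u$ here, so that credit lives \emph{inside} the mass you just removed from $Z_u$, not elsewhere.

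The paper's proof avoids this by splitting $I_v$ into \emph{active} nodes (those whose current region---the node together with the components of $T_v-I_v$ containing its children---still has $\phi_v$-weight $\ge W/2$) and \emph{retired} nodes. Active nodes are bounded exactly as you propose, via disjoint chunks of weight $\ge W/2$, giving $\le 2M/W$. A retired node $b$ is instead charged to the $\ge W/2$ of \emph{new} edge weight that caused one of $b$'s child components to grow past $W$ and spawn a fresh important node inside it. These new edges all go from $T_v$ to $V\setminus T_v$ (so their total weight is $\le M$), and each is charged only once, by the unique $b$ that was in charge of its $T_v$-endpoint's component when the edge first became ``outside''. This gives $\le 2M/W$ retired nodes as well. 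The conceptual fix is that nodes whose current chunk has collapsed are paid for by the \emph{increase} in $\phi$ that caused the collapse, not by any residual chunk.
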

  \begin{subproof}
    For any vertex $v\in V$, the important node $b \in I_v$ is \emph{in
      charge} of component $C$, if $C$ contains some child of $b$. The
    important node $b$ is \emph{active} (w.r.t.\ $v$) if the total
    $\phi_v$-weight of $b$, unioned with the components $C$ it is in
    charge of, is at least $W/2$. Otherwise $b$ is called \emph{retired}
    (w.r.t.\ $v$). There are $\le 2M/W$ active vertices, since each one
    is in charge of a disjoint set of components of weight $\ge W/2$ and
    the total weight of $T_v$ is $\le M$. We now bound the number of
    retired vertices.

    Consider the highest ancestor $u$ of node $v$ such that $b \in I_u$.
    Let $b$ be in charge (with respect to $u$) of components
    $C_1, C_2, \ldots, C_j$. By construction the total weight
    $\phi_u(b) + \sum_{j' \leq j} \phi_u(C_{j'}) \geq W/2$, and hence
    $u$ lies \emph{strictly} above $v$. Hence $b$ is an active important
    node with respect to this ancestor $u$. As we walk down the $u$-$v$
    path (with the $\phi$-weights non-decreasing), consider the first
    node $a$ such that when building the important set $I_{a}$, some
    vertex in $\cup_{j' \leq j} C_{j'}$ (say in $C_i$) is chosen as an
    important node. (Clearly $a$ is either $v$ or an ancestor of $v$.)
    At this point the $\phi_a$-weight in $C_i$ must have increased to
    $W$, because of new edges from nodes in $C_i$ to ancestors of $a$,
    of total edge-weight $\geq W/2$. Now we can ``charge'' the
    retirement of $b$ to these edges. It is clear edges are charged this
    way only by the important node $b$ which happened to be in charge of
    the current component they are incident to. Moreover, the total
    weight of such edges is at most $M$, since they all go from within
    $T_v$ to outside it; so the number of retired nodes is also at most
    $2M/W$.
  \end{subproof}
  The containment property~(P3) is true by construction. To get
  property~(P4) we simply add $\{v\}$ to $I_v$, for each $v$. This
  increases the size by $1$, and completes the proof.
\end{proof}

\begin{proof}[Proof of Theorem~\ref{thm:pvc}]
  We follow the same strategy as Theorem 5.1 in~\cite{GuptaLL18}, except
  with slightly different coloring probabilities. In their Lemma 5.2,
  instead of coloring each node red and blue with probability $\frac12$
  each, we color them red with probability $1/\tau$ and blue with
  probability $(1-1/\tau)$, for $\tau:=\poly(k/\delta)$ as defined
  there. This way, following their definition, the probability that all
  the nodes in $S^*$ are colored red, and all the nodes in
  $N(S^*)\setminus S$ are colored blue is $(1/\tau)^k(1-1/\tau)^\tau =
  (\delta/k)^{O(k)}$. We thus repeat this step $(k/\delta)^{O(k)}\log n$
  times, giving the desired running time.

  Finally, the case $\el=1$ can be trivially solved optimally, since the
  solution is simply the minimum weight node.

We remark that this algorithm is derandomized in \S\ref{sec:derandomize}.
\end{proof}

\section{Running Time Improvements for Section~\ref{sec:approx}}
\label{sec:s4-improvements}

Here, we show that the running time of the dynamic program from
\S\ref{sec:DefiningDP} can be sped up to $(k/\e)^{O(k)}\poly(n)$. The main idea is that for each vertex $v\in V(T)$, there are only $(k/\e)^{O(k)}$ many (downward-closed) subsets $R\s I_v$ that need to be considered, which is much smaller than the trivial $2^{\poly(k/\e)}$ bound as stated before.

For a given $v\in V(T)$ and $s\in[k-1]$, we say that a representative $R\s I_v$ is \emph{$(v,s)$-relevant} if there exists a set $U\s T_v$ of  $s$ incomparable vertices such that $\sigma_v(T_U)=R$. Intuitively, the only values of $\DPStar(v,s,R)$ that ``matter'' are the ones where $R$ is $(v,s)$-relevant. Formally, it can be shown, by analyzing the recursive definition of $\DPStar$, that $\DPStar(v,s,R)<\infty$ if and only if $R$ is $(v,s)$-relevant; here, we assume that $\DPStar(v,s,R)$ becomes $\infty$ if there do not exist $\el,v_i,s_i,R_i$ in (\ref{eq:DPStarDef}) that satisfy the necessary constraints, or if every satisfying $\el,v_i,s_i,R_i$ has $\DPStar(v_i,s_i,R_i)=\infty$ for some $i\in[\el]$. It follows that in our DP algorithm, we only need to compute $\DPee(v,s,R)$ for $(v,s)$-relevant $R$.

Below, we will prove that there are $(k/\e)^{O(k)}$ many representatives that are $(v,s)$-relevant, and we can enumerate them, plus possibly some more representatives that are not $(v,s)$-relevant, in $(k/\e)^{O(k)}$ time. Therefore, the DP algorithm can perform this enumeration and compute $\DPee(v,s,R)$ for only these $R$.  Moreover, in the guessing step in \S\ref{sec:DefiningDP},
for each child $u$ of $v$, we only need to choose a random $R'(u)\s I_u$ that is relevant in $T_u$, so the success probability increases to $(k/\e)^{O(k)}$. Overall, the running time of the DP algorithm becomes $(k/\e)^{O(k)}\poly(n)$.

\begin{lemma}
For a fixed vertex $v\in V(T)$ and integer $s\in[k-1]$, there are $(k/\e)^{O(s)}$ many $(v,s)$-relevant representatives, and we can enumerate a superset of all $(v,s)$-relevant representatives in $(k/\e)^{O(s)}$ time.
\end{lemma}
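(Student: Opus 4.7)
The plan is to show that the family of sets $\{T_u \cap I_v : u \in V(T_v)\}$ has only $O(|I_v|)$ members, and then describe every $(v,s)$-relevant representative as a union of at most $s$ members of this family.

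First I would define the \emph{shape} $\mathrm{sh}(u) := T_u \cap I_v$ for each $u \in V(T_v)$. By the definition of relevance, a representative $R \sse I_v$ is $(v,s)$-relevant iff $R = \bigcup_{i=1}^s \mathrm{sh}(u_i)$ for some pairwise incomparable $u_1,\ldots,u_s \in V(T_v)$. Since incomparability forces the subtrees $T_{u_i}$, and hence the shapes, to be pairwise disjoint, $R$ is realized as a disjoint union of at most $s$ shapes (some possibly empty).

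The heart of the argument is bounding the number of distinct shapes. Consider the minimal subtree $S \sse T_v$ spanning $I_v$ (the Steiner subtree). Walking downward in $T_v$, the set $\mathrm{sh}(u)$ can change only at two kinds of vertices: important nodes (where $u$ itself enters the shape) and non-important vertices of $S$ with at least two $T_v$-children each containing an important descendant (where the important descendants split among the branches). Vertices outside $S$ contribute only the empty shape, and degree-$2$ non-branching vertices on $S$ inherit the shape of their unique ``active'' child. Since a tree with $t$ terminals has at most $t-2$ non-terminal vertices of degree $\ge 3$, the number of non-important branching vertices in $S$ is at most $|I_v|-2$. Hence the number of distinct shapes is at most $1 + |I_v| + (|I_v|-2) = 2|I_v|-1$, which is $\poly(k/\eps)$ by property (P2).

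For the enumeration, I would first build the list $\calL$ of distinct shapes by a bottom-up sweep on $T_v$ that maintains each $\mathrm{sh}(u)$ as a bitvector over $I_v$ and deduplicates, in $\poly(n)$ preprocessing time. Then iterate over every subset $\calL' \sse \calL$ of size at most $s$ and output $\bigcup_{R' \in \calL'} R'$. Every $(v,s)$-relevant $R$ appears among the outputs (possibly alongside some non-relevant ones), yielding a superset of size at most $\sum_{j=0}^s \binom{|\calL|}{j} \le (k/\eps)^{O(s)}$ in the same time beyond the one-time preprocessing. The main obstacle is the structural $O(|I_v|)$ bound on shapes: naively one could fear $2^{|I_v|}$ distinct values, and the Steiner-subtree chain-compression above is what rules this out; once it is in place, the counting and enumeration are routine.
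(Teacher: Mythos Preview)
Your proposal is correct and takes a genuinely different route from the paper. The paper bounds the number of ``shapes'' $\{T_u\cap I_v:u\in T_v\}$ by observing that the family of rooted subtrees $\{T_u\}$ has VC dimension at most $2$ (via a short LCA argument) and then invoking Sauer's lemma to get $O(|I_v|^2)$ shapes; after that it uses the same union-of-$s$-shapes step you do. Your Steiner-subtree/chain-compression argument instead exploits the specific nested structure of subtrees to get the sharper $O(|I_v|)$ bound, and is entirely elementary. In fact your argument is essentially proving that the shapes form a laminar family on $I_v$ --- the subtrees $T_u$ are pairwise nested or disjoint, hence so are their intersections with $I_v$, and a laminar family on $t$ elements has at most $2t-1$ nonempty members; stating it this way would let you skip the Steiner-tree bookkeeping. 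The paper's VC approach is more modular and transplants to other set systems, while yours is tighter and self-contained for this problem. One small point worth making explicit in your write-up: the passage from ``the shape changes only at special vertices'' to ``at most $1+|I_v|+(|I_v|-2)$ distinct shapes'' relies on the observation that the \emph{lowest} vertex in each equal-shape chain is always special (a leaf of $S$ is in $I_v$; a non-leaf non-important vertex with one $S$-child would share its child's shape), giving a bijection between nonempty shapes and special vertices.
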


\begin{proof}

We first prove the statement when $s=1$. We use the concept of VC dimension, defined below.

\begin{definition}
Let $X$ be a set of elements, called the universe. A family $\m F$ of subsets of $X$ has VC dimension $d$ if $d$ is the largest possible size of a subset $S\s X$ satisfying the following property: for any subset $S'\s S$, there exists subset $F\in \m F$ such that $S\cap F=S'$.
\end{definition}
We use two properties of VC dimension. The first is that if $\m F$ if a family of subsets of $X$ of VC dimension $d$ and $Y\s X$, then the family $\m F\vert_Y:=\{F\cap Y:F\in\m F\}$ has VC dimension $\le d$.
The second property is a classic result on VC dimension:
\begin{theorem}[Sauer's lemma]\label{thm:Sauer}
Let $X$ be a set of elements. If a family $\m F$ of subsets of $X$ of VC dimension $d$, then $|\m F|=O(|X|^d)$.
\end{theorem}

We now bound the VC dimension of the family of all subtrees.
\begin{claim}
For a fixed vertex $v\in V(T)$, let $T_v$ be the universe. The family $\m F$ of subtrees $T_u$ for all $u\in T_v$ has VC dimension at most $2$.
\end{claim}
\begin{subproof}
Suppose for contradiction that there is a set $S\s T_v$ of size $3$ such that for each subset $S'\s S$, there exists a subtree $T_u\s T_v$ such that $S\cap T_u=S'$. Let $S=\{x,y,z\}$, and assume without loss of generality that the lowest common ancestor of $x$ and $y$ is either equal to or a descendant of the lowest common ancestor of $x$ and $z$. Then, if a subtree $T_u$ contains both $x$ and $z$, then it must contain $y$, so for the subset $S':=\{x,z\}$, it is impossible that $S\cap T_u=S'$, a contradiction.
\end{subproof}

By Theorem~\ref{thm:Sauer}, the family $\m F_v:=\{T_u:u\in T_v\}$ has VC dimension $\le2$. Therefore, the family $\m F_v\vert_{I_v}$ of subsets of $I_v$, which is precisely the set of $(v,1)$-relevant representatives, also has VC dimension $\le2$. Thus, there are $O(|I_v|^2)=(k/\e)^{O(1)}$ many $(v,1)$-relevant representatives, proving the case $s=1$. Moreover, we can enumerate over all of them in $(k/\e)^{O(1)}\poly(n)$ time.

For general $s$, let $U\s V$ be a subset $s$ incomparable vertices. Since $T_U=\bigcup_{u\in U}T_u$ is a union of $s$ subsets in $\m F_v$, it follows that $T_U\cap I_v$ is a union of $s$ subsets of $\m F_v\vert_{I_v}$. Since there are at most $(\m F_v\vert_{I_v})^s$ many possible such unions, the number of $(v,s)$-relevant representatives is $(k/\e)^{O(s)}$. Furthermore, to enumerate a superset of them, we can first compute $\m F_v\vert_{I_v}$ and then enumerate over all unions of $s$ subsets, taking $(k/\e)^{O(s)}\poly(n)$ time. This concludes the proof.
\end{proof}
With this speedup, our running time matches the one promised by Theorem~\ref{thm:kcut-approx}.

\section{Derandomization}\label{sec:derandomize}

The guessing part can be derandomized in the same way randomized FPT algorithms are typically derandomized: through efficient constructions of set families. The main impact of derandomization is the deterministic runtime of Theorem~\ref{lem:fpt-approx-tight}, which itself leads to the deterministic runtime of Theorem~\ref{thm:fpt-apx}.

We first derandomize the occasions when the algorithm has to guess multiple values in the range $[q]$ for some $q:=\poly(k/\e)$. More precisely, the algorithm guesses a value in $[q]$ for each index $i\in I$, such that for an unknown set of indices $I^*\s I$ of size $\le k$, we must guess the value correctly for each index $i\in I^*$. This occurs during the matrix multiplication algorithm in \S\ref{sec:general-rand} and the computation of $\DPee$ in \S\ref{sec:DefiningDP}. We derandomize this procedure using $(n,k,q)$-universal sets as introduced in~\cite{misra2013parameterized}.

\begin{definition}[Definition 3.1 of~\cite{misra2013parameterized}] An $(n,k,q)$-universal set is a set of vectors $V\s[q]^n$ such that for any index set $S\in\bn{[n]}k$, the projection of $V$ on $S$ contains all possible $q^k$ configurations.
\end{definition}
Note that the traditional notion of $(n,k)$-universal sets is precisely  the $(n,k,2)$-universal sets.

\begin{lemma}[Theorem 3.2 of~\cite{misra2013parameterized}]
 An $(n,k,q)$-universal set
 of cardinality $q^kk^{O(\log k)}\log^2n$ can be constructed deterministically in time $O(q^kk^{O(\log k)}n\log^2n)$.
\end{lemma}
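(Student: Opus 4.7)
The plan is to derive the $(n,k,q)$-universal set from two black-box ingredients: a perfect hash family (which handles the ``which $k$ coordinates'' issue) together with an exhaustive enumeration over value assignments (which handles the ``what pattern of $q$-ary values'' issue). Concretely, I would first obtain a perfect hash family $\mathcal{H}$ of functions $h\colon[n]\to[k]$ of size $k^{O(\log k)}\log n$, constructible deterministically in time $k^{O(\log k)} n\log^2 n$ by the splitter-based construction of Naor--Schulman--Srinivasan. Recall that $\mathcal{H}$ being perfect means that for every $S\in\binom{[n]}{k}$, some $h\in\mathcal{H}$ is injective on $S$, hence a bijection $S\to[k]$.

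The universal set $V\subseteq[q]^n$ I would output consists of the vectors $v_{h,g}$ indexed by pairs $(h,g)\in\mathcal{H}\times[q]^{[k]}$, where $v_{h,g}(i):=g(h(i))$ for all $i\in[n]$. The cardinality is $|\mathcal{H}|\cdot q^k = q^k k^{O(\log k)}\log n$, matching (and slightly beating) the bound in the statement, and constructing and writing down each vector takes $O(n)$ additional time, giving an overall runtime of $O(q^k k^{O(\log k)} n\log^2 n)$ after accounting for the splitter construction.

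To verify the universality property, fix $S\in\binom{[n]}{k}$ and any target configuration $\tau\in[q]^S$. By the defining property of $\mathcal{H}$, there is some $h_*\in\mathcal{H}$ whose restriction to $S$ is a bijection $S\to[k]$. Define $g_*\colon[k]\to[q]$ by $g_*(j):=\tau(h_*^{-1}(j))$ for each $j\in[k]$. Then for every $i\in S$ we have $v_{h_*,g_*}(i)=g_*(h_*(i))=\tau(i)$, so the projection of $V$ onto $S$ realizes $\tau$. Since $\tau$ was arbitrary, the projection realizes all $q^k$ configurations.

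The only nontrivial step, and hence the main technical obstacle, is the existence and explicit construction of a perfect hash family of size $k^{O(\log k)}\log n$ in deterministic time $k^{O(\log k)}\poly(n)$; this is a well-established result (Naor--Schulman--Srinivasan; see also the exposition in Cygan et al.'s FPT textbook) that I would cite directly rather than reprove. The additional $\log n$ factor in the stated runtime and cardinality of the lemma is slack that absorbs lower-order terms in the perfect-hash-family construction, so no further optimization is required.
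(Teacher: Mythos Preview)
The paper does not actually prove this lemma; it is quoted verbatim as Theorem~3.2 of \cite{misra2013parameterized} and used as a black box. Your construction---compose each function in an $(n,k)$-perfect hash family with every map $[k]\to[q]$---is correct and is precisely the standard argument (and, to my knowledge, the one in the cited reference). The size and time bounds you derive are at least as good as those stated, so there is nothing to fix.
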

Therefore, we can construct an $(n,k,q)$-universal set in time $O(q^kk^{O(\log k)}n\log^2n) = O((k/\e)^{O(k)}\poly(n))$ and run the inner procedure on each element in the set.

We now derandomize the \pvc algorithm, making the entire algorithm of Lemma~\ref{lem:fpt-approx-tight} deterministic.
 To do so, we use the following special construction of set families:
\begin{lemma}[Lemma I.1 of~\cite{Chitnis}]
Given a set $U$ of size $n$, and integers $0\le a,b\le n$, one can in (deterministic) $O(2^{O(\min(a,b))\log(a+b))}n\log n)$ time construct a family $\m F$ of at most $O(2^{O(\min(a,b)\log(a+b))}\log n)$ subsets of $U$, such that the following holds: for any sets $A,B\s U$, $A\cap B=\emptyset$, $|A|\le a$, $|B|\le b$, there exists a set $S\in\m F$ with $A\s S$ and $B\cap S=\emptyset$.
\end{lemma}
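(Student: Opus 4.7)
The plan is to identify the family requested by the lemma with an $(n,a+b,\min(a,b))$-universal set in the sense of Naor--Schulman--Srinivasan (FOCS 1995), and then invoke their explicit deterministic construction of such families.

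Recall that a family $\mathcal{F}$ of subsets of $U$ is an \emph{$(n,k,t)$-universal set} if for every $k$-element subset $X\subseteq U$ and every $t$-element subset $Y\subseteq X$, some $S\in\mathcal{F}$ satisfies $S\cap X = Y$. By replacing each $S\in\mathcal{F}$ by $U\setminus S$ we swap the roles of $A$ and $B$, so without loss of generality $a\le b$. Setting $k:=a+b$ and $t:=a$, the property demanded for disjoint $A,B$ with $|A|=a$, $|B|=b$ is precisely $S\cap(A\cup B)=A$, which is the $(n,k,t)$-universal property instantiated with $X:=A\cup B$ and $Y:=A$. The degenerate cases $a=0$ or $b=0$ are handled directly by $\mathcal{F}=\{\emptyset\}$ or $\mathcal{F}=\{U\}$.

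I would then invoke the Naor--Schulman--Srinivasan deterministic construction, which produces an $(n,k,t)$-universal set of size $\binom{k}{t}\cdot\mathrm{poly}(k)\cdot\log n$ in time essentially equal to the output size times $n$. (Their construction combines perfect hash families, $\epsilon$-biased distributions, and composition/amplification arguments, and is much smaller than a general $(n,k)$-universal set when $t\ll k$.) Substituting $k=a+b$ and $t=a=\min(a,b)$, and using
\[
  \binom{a+b}{\min(a,b)}\le(a+b)^{\min(a,b)}=2^{\min(a,b)\log(a+b)},
\]
together with the fact that the $\mathrm{poly}(a+b)$ factor is absorbed into $2^{O(\min(a,b)\log(a+b))}$ whenever $\min(a,b)\ge 1$, yields exactly the size bound $2^{O(\min(a,b)\log(a+b))}\log n$ and construction time $2^{O(\min(a,b)\log(a+b))}n\log n$ claimed in the lemma.

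The main obstacle is the NSS construction of $(n,k,t)$-universal sets itself, which is nontrivial but entirely standard in the derandomization literature; for this paper it suffices to use it as a black box. The only substantive work left is the parameter substitution above, which is routine. As a side remark, the random construction that one is derandomizing is the natural biased sampling in which each element of $U$ joins $S$ independently with probability $a/(a+b)$: this succeeds for a fixed $(A,B)$ with probability $(a/(a+b))^a(b/(a+b))^b\ge 1/\binom{a+b}{a}$, which both motivates the target size bound and offers an alternative route to existence (if not explicit construction) via a union bound.
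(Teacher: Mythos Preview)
The paper does not prove this lemma; it is stated as Lemma~I.1 of Chitnis et~al.\ and used purely as a black box in the derandomization section. There is therefore no ``paper's own proof'' to compare against.

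Your proposal is correct and is in fact the argument behind the cited result: the family demanded is exactly an $(n,a+b,\min(a,b))$-universal set in the Naor--Schulman--Srinivasan sense, and their explicit construction gives size $\binom{a+b}{\min(a,b)}\cdot\poly(a+b)\cdot\log n$ and the matching construction time, which after the bound $\binom{a+b}{\min(a,b)}\le 2^{\min(a,b)\log(a+b)}$ yields the lemma. One small gap worth patching: the NSS universal-set property is stated for $|X|=k$ and $|Y|=t$ \emph{exactly}, whereas the lemma allows $|A|\le a$ and $|B|\le b$. You should note that whenever $a+b\le n$ (the only nontrivial regime) one can pad $A$ and $B$ with arbitrary elements of $U\setminus(A\cup B)$ up to sizes exactly $a$ and $b$, apply the universal-set guarantee to the padded pair, and observe that the resulting $S$ still separates the original $A$ and $B$.
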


Following the proof of Theorem~\ref{thm:pvc} in \S\ref{sec:appendix-approx}, we set $U$ to be the nodes in the \pvc instance, and parameters $a:=k$ and $b:=\tau=\poly(k/\e)$. We construct a set family $\m F$ of size $O(2^{O(k\log(\poly(k/\e)))}\log n)=(k/\e)^{O(k)}\log n$ such that there exists a set $F\in\m F$ with $S^*\s F$ and $(N(S^*)\setminus S)\cap F=\emptyset$. Therefore, for each set $F\in\m F$, we color all nodes in $F$ red and all other nodes blue, and proceed with the algorithm.

\end{document}